\documentclass[english, a4paper]{article}

\usepackage[margin=1.65in]{geometry}
\usepackage[font=small, labelfont=bf, margin=0.35in]{caption}

\usepackage{titling}
\usepackage{authblk}

\usepackage{babel}
\usepackage[utf8]{inputenc}
\usepackage[pdftex]{hyperref}

\usepackage{color}

\usepackage{amsmath}
\usepackage{amssymb}
\usepackage{amsthm}

\usepackage{cite}

\usepackage{enumitem}

\usepackage{tikz}

\usepackage{multirow}
\usepackage{tabularx}

\makeatletter
\newcommand{\thickhline}{%
    \noalign {\ifnum 0=`}\fi \hrule height 1pt%
    \futurelet \reserved@a \@xhline%
}
\newcolumntype{x}{@{\hskip\tabcolsep\vrule width 1pt\hskip\tabcolsep}}
\makeatother

\usepackage[labelfont=bf, figurename=Fig., tablename=Tab.]{caption}

\usepackage{algorithm}

\makeatletter
\newenvironment{breakablealgorithm}
  {
    \begin{center}
      \refstepcounter{algorithm}
      \hrule height.8pt depth0pt \kern2pt
      \renewcommand{\caption}[2][\relax]{
        {\raggedright\textbf{\noindent\fname@algorithm~\thealgorithm} ##2\par}%
        \ifx\relax##1\relax
          \addcontentsline{loa}{algorithm}{\protect\numberline{\thealgorithm}##2}%
        \else
          \addcontentsline{loa}{algorithm}{\protect\numberline{\thealgorithm}##1}%
        \fi
        \kern2pt\hrule\kern2pt
      }
  }{
      \kern2pt\hrule\relax
    \end{center}
  }
\makeatother

\newlist{pseudocode}{enumerate}{6}

\setlist[pseudocode]{ 
  label*={\arabic*.},
  ref={\arabic*},
  before=\raggedright,
  leftmargin=*,
  itemsep=1.5pt,
  before=\setlength{\listparindent}{0pt},
  parsep=1pt,
  topsep=4pt}

\setlist[pseudocode, 2, 3, 4, 5, 6]{
  topsep=0pt
}

\setlist[pseudocode, 2]{
  ref={\arabic{pseudocodei}.\arabic*},
}

\setlist[pseudocode, 3]{
  ref={\arabic{pseudocodei}.\arabic{pseudocodeii}.\arabic*}}

\setlist[pseudocode, 4]{
  ref={\arabic{pseudocodei}.\arabic{pseudocodeii}.\arabic{pseudocodeiii}.\arabic*}
}

\setlist[pseudocode, 5]{
  ref={\arabic{pseudocodei}.\arabic{pseudocodeii}.\arabic{pseudocodeiii}.\arabic{pseudocodeiv}.\arabic*},
}

\setlist[pseudocode, 6]{
  ref={\arabic{pseudocodei}.\arabic{pseudocodeii}.\arabic{pseudocodeiii}.\arabic{pseudocodeiv}.\arabic{pseudocodev}.\arabic*},
}

\allowdisplaybreaks

\newcommand{\ket}[1]{\left|\, #1 \, \right\rangle}
\newcommand{\e}{\mathrm{e}}
\newcommand{\imag}{\mathrm{i}}
\newcommand{\ceil}[1]{\left\lceil #1 \right\rceil}
\newcommand{\floor}[1]{\left\lfloor #1 \right\rfloor}
\newcommand{\round}[1]{\left\lfloor #1 \right\rceil}
\renewcommand{\vec}{\mathbf}
\renewcommand{\parallel}{=}
\newcommand{\poly}{\mathrm{poly}}

\newcommand{\Nmodulus}{M}
\newcommand{\Nspace}{N}
\newcommand{\nruns}{n}

\newcommand{\refeq}[1]{(\ref{eq:#1})}

\newtheorem{theorem}{Theorem}
\newtheorem{lemma}{Lemma}
\newtheorem{corollary}{Corollary}

\newtheorem{claim}{Claim}
\newtheorem{definition}{Definition}

\title{On the success probability of the quantum algorithm for the short DLP}
\author[1,2]{\href{mailto:ekera@kth.se}{Martin Ekerå}}
\affil[1]{\small KTH Royal Institute of Technology, Stockholm, Sweden}
\affil[2]{\small Swedish NCSA, Swedish Armed Forces, Stockholm, Sweden}

\predate{\begin{center}}
\postdate{\end{center}\vspace{-3mm}}

\begin{document}
\maketitle

\begin{abstract}
  Ekerå and Håstad have introduced a variation of Shor's algorithm for the discrete logarithm problem~(DLP).
  Unlike Shor's original algorithm, Ekerå--Håstad's algorithm solves the short DLP in groups of unknown order.
  In this work, we prove a lower bound on the probability of Ekerå--Håstad's algorithm recovering the short logarithm~$d$ in a single run.
  By our bound, the success probability can easily be pushed as high as~$1 - 10^{-10}$ for any short~$d$.
  A key to achieving such a high success probability is to efficiently perform a limited search in the classical post-processing by leveraging meet-in-the-middle or random-walk techniques.
  These techniques may be generalized to speed up other related classical post-processing algorithms.
  Asymptotically, in the limit as the bit length~$m$ of~$d$ tends to infinity, the success probability tends to one if the limits on the search space are parameterized in~$m$.
  Our results are directly applicable to Diffie--Hellman in safe-prime groups with short exponents, and to RSA via a reduction from the RSA integer factoring problem~(IFP) to the short DLP.
\end{abstract}

\section{Introduction}
\label{sect:introduction}
Ekerå and Håstad have introduced a variation of Shor's algorithm for the discrete logarithm problem (DLP).
Unlike Shor's algorithm~\cite{shor94, shor97}, which computes general discrete logarithms in cyclic groups of known order, Ekerå--Håstad's algorithm~\cite{ekera-modifying, ekera-hastad, ekera-pp} computes short discrete logarithms in cyclic groups of unknown order.

Ekerå--Håstad's algorithm is cryptanalytically relevant in that it may be used to efficiently break finite-field Diffie--Hellman (FF-DH)~\cite{diffie-hellman} in safe-prime groups with short exponents~\cite{nistsp800-56a, rfc3526, rfc7919}.
It may furthermore be used to efficiently break the Rivest--Shamir--Adleman (RSA) cryptosystem~\cite{rsa}, via a reduction from the RSA integer factoring problem (IFP) to the short DLP in a group of unknown order.
For further details, see~\cite[Sect.~4]{ekera-hastad}, \cite[App.~A.2]{ekera-pp} and~\cite[Sect.~5.7.3]{ekera-phd-thesis}.

In their joint paper~\cite{ekera-hastad}, Ekerå and Håstad prove\footnote{In~\cite{ekera-hastad}, fix~$s = 1$:
The probability of observing a good pair~$(j, k)$ is at least~$1/8$ by~\cite[Lems.~2--3]{ekera-hastad}.
With probability at least~$3/4$, the lattice~$L$ has no very short vector by~\cite[Lem.~3]{ekera-hastad}, in which case we may enumerate vectors in~$L$ to efficiently find~$d$ given~$(j, k)$, see~\cite[Sect.~3.9]{ekera-hastad}.
The success probability in a single run is hence at least $3/32 = 9.375\%$.}~\cite[Lems.~1--3]{ekera-hastad} that the probability of their algorithm successfully recovering the logarithm~$d$ in a single run is at least $3/32 = 9.375\%$.
Ekerå and Håstad furthermore describe how tradeoffs may be made, between the number of runs that are required, and the amount of work that needs to be performed quantumly in each run.
These ideas parallel those of Seifert~\cite{seifert} for order finding.
When making tradeoffs in Ekerå--Håstad's algorithm with tradeoff factor~$s \ge 1$, at most $m(1+2/s)$ group operations need to be evaluated quantumly in each run, for~$m$ the bit length of~$d$.\footnote{By group operation, we mean an operation of the form $\ket{c, u} \rightarrow \ket{c, u \cdot v^c}$ in this context, for $u, v$ elements of a group (that is written multiplicatively) and~$c$ a control qubit.
The number of group operations that actually need to be evaluated quantumly may be reduced below what is stated here by means of optimizations such as windowing~\cite{gidney19, vmi05, van-meter-phd-thesis} (see also~\cite[Sect.~5.3.6.3]{ekera-phd-thesis}).}
After~$8s$ runs, Ekerå and Håstad~\cite{ekera-hastad} show that the probability of recovering~$d$ is at least $1 - 1/2^{s+1}$, if all subsets of~$s$ outputs from the $8s$~outputs are independently post-processed.\footnote{If at least~$s$ of the~$8s$ outputs are good pairs, which happens with probability at least~$1/2$.}

Ekerå~\cite{ekera-pp} later analyzed the probability distribution induced by the quantum algorithm in greater detail.
These insights allowed Ekerå to develop an improved classical post-processing algorithm in~\cite{ekera-pp}, which eliminates the requirement in~\cite{ekera-hastad} to perform $8s$~runs and to independently post-process all subsets of $s$~runs from the resulting set of $8s$~runs.
Instead, $\nruns \ge s$ runs may be performed and jointly post-processed classically.
Furthermore, Ekerå used the aforementioned insights to develop a simulator for the quantum algorithm.
The simulator allows the probability distribution induced by the quantum algorithm to be sampled when~$d$ is known.
In turn, this allows the number of runs~$\nruns$ required to recover~$d$ in the classical post-processing to be estimated by means of simulations, as a function of~$s$, $d$ and a lower bound on the success probability.

In particular, Ekerå shows in~\cite{ekera-pp} by means of simulations that a single run of the quantum algorithm is sufficient to recover~$d$ with success probability at least~$99\%$ when not making tradeoffs (i.e.~when $s \approx 1$) and performing at most~$3m$ group operations quantumly in the run, for~$m$ the bit length of~$d$.

\subsection{Our contributions}
\label{sect:contributions}
In this work, we improve on the state of the art by replacing the simulations-based part of the analysis in~\cite{ekera-pp} with a formal analysis that yields strict bounds.
This when not making tradeoffs and solving in a single run, in analogy with our formal analysis in~\cite{ekera-success} of the success probability of Shor's order-finding algorithm.

More specifically, we prove a lower bound on the probability of Ekerå--Håstad's algorithm recovering the short discrete logarithm~$d$ in a single run, and an associated upper bound on the complexity of the classical post-processing.

By our bounds, the success probability can easily be pushed as high as~$1 - 10^{-10}$ for any short~$d$.
This when performing at most~$3m$ group operations quantumly in the run, as in~\cite{ekera-pp}, for~$m$ the bit length of~$d$, and when requiring the classical post-processing to be feasible to perform in practice on an ordinary computer.
Further\-more, the number of group operations that need to be performed quantumly can be reduced below what is possible with the simulations-based analysis and post-processing in~\cite{ekera-pp} without compromising the practicality of the post-processing.\footnote{Note that ${3m = m(1 + 2/s)}$ when ${s = 1}$.
In practice, as explained in Sect.~\ref{sect:quantum-algorithm}, it suffices to perform ${m + 2 \ell}$ group operations when solving in a single run where ${\ell = m - \Delta}$ for small ${\Delta \in [0, m) \cap \mathbb Z}$.
Selecting ${\Delta = 0}$ then corresponds to ${s = 1}$, whereas selecting ${\Delta > 0}$ corresponds to~$s$ being slightly larger than one.}

A key to achieving these results is to efficiently perform a limited search in the classical post-processing by leveraging meet-in-the-middle or random-walk techniques.
These techniques may be generalized to speed up other related classical post-processing algorithms that perform limited searches, such as those in~\cite{ekera-pp, ekera-general, ekera-success, ekera-revisiting}, both when making and not making tradeoffs.
For further details, see App.~\ref{app:algorithms-generalizations}.

Asymptotically, in the limit as the bit length~$m$ of~$d$ tends to infinity, the success probability tends to one if the limits on the search space are parameterized in~$m$ so that the complexity of the post-processing grows at most polynomially in~$m$.

Our results are directly applicable to Diffie--Hellman in safe-prime groups with short exponents, and to RSA via a reduction from the RSA IFP to the short DLP.
For RSA, our bounds for the short DLP allow us to obtain better overall estimates of the success probability when using the aforementioned reduction.

\subsection{Overview of the introduction}
In the remainder of this introduction, we formally introduce the short DLP in Sect.~\ref{sect:short-dlp}, and then recall the quantum algorithm and the classical post-processing algorithm from~\cite{ekera-hastad, ekera-pp} in Sects.~\ref{sect:quantum-algorithm} and~\ref{sect:classical-post-processing}, respectively, whilst introducing notation.
We then proceed in Sect.~\ref{sect:overview} to give an overview of the remainder of this paper.

\subsection{The short discrete logarithm problem}
\label{sect:short-dlp}
In the short DLP, we are given a generator~$g$ of a cyclic group~$\langle g \rangle$ of order~$r$, where we assume in what follows that~$r$ is unknown, and $x = g^d$ for $d \lll r$ the logarithm, and are asked to compute~$d$.
Throughout this paper, we write~$\langle g \rangle$ multiplicatively.

\subsection{The quantum algorithm}
\label{sect:quantum-algorithm}
Let $m \in \mathbb Z$ be an upper bound\footnote{It suffices to use an upper bound on the bit length of~$d$ if the exact length is unknown.} on the bit length of the short discrete logarithm~$d$ so that~$d < 2^m$, and let $\ell = m - \Delta$ for small $\Delta \in [0, m) \cap \mathbb Z$.
The quantum algorithm in~\cite{ekera-hastad, ekera-pp} then induces the state
\begin{align}
  \label{eq:superposition}
  \frac{1}{2^{m+2\ell}}
  \sum_{a, \, j \, = \, 0}^{2^{m + \ell}-1}
  \sum_{b, \, k \, = \, 0}^{2^{\ell}-1}
  \exp\left(
    \frac{2 \pi \imag}{2^{m + \ell}} (aj + 2^m b k)
  \right)
  \ket{j, k, g^{a - bd}}
\end{align}
by using standard techniques, see Sect.~\ref{sect:quantum-algorithm-implementation} for further details.
When observed, the state~\refeq{superposition} yields a pair~$(j, k)$ and a group element~$g^e$ with probability
\begin{align}
  \label{eq:probability-a-b}
  \frac{1}{2^{2(m+2\ell)}}
  \left|\,
    \sum_{(a, b)}
    \exp\left(
      \frac{2 \pi \imag}{2^{m + \ell}} (aj + 2^m b k)
    \right)
  \,\right|^2
\end{align}
where the sum in~\refeq{probability-a-b} runs over all~$(a, b)$ such that ${a \in [0, 2^{m+\ell}) \cap \mathbb Z}$, ${b \in [0, 2^{\ell}) \cap \mathbb Z}$ and $e \equiv a - bd \:\: (\text{mod } r)$.
In what follows, as in~\cite{ekera-hastad, ekera-pp}, suppose that~$d$ is short in the sense that $r \ge 2^{m+\ell} + (2^\ell - 1) d$ so that $e = a - bd$.
Furthermore, as in~\cite{ekera-hastad, ekera-pp, ekera-general}, let
\begin{align}
  \alpha_d = \alpha(j, k) &= \{ dj + 2^m k \}_{2^{m+\ell}},
  &
  \theta_d = \theta(\alpha_d) &= \frac{2 \pi \alpha_d}{2^{m + \ell}},
\end{align}
be the argument and angle, respectively, yielded by the pair $(j, k)$, where~$\{ u \}_n$ denotes~$u$ reduced modulo~$n$ constrained to~$[-n/2, n/2)$.

Then, as shown in~\cite[Sect.~3]{ekera-pp}, by summing~\refeq{probability-a-b} over all~$e$, we have that the probability of observing a pair~$(j, k)$ yielding a given angle~$\theta_d$ is
\begin{align}
  P(\theta_d)
  =
  \frac{1}{2^{2(m+2\ell)}}
  \sum_{e \, = \, -(2^\ell - 1)d}^{2^{m+\ell}-1} \,
  \underbrace{\left|\,
    \sum_{b \, = \, 0}^{\#b(e) - 1}
    \e^{\imag \theta_d b}
  \,\right|^2}_{= \, \zeta(\theta_d, \#b(e))}, \label{eq:P}
\end{align}
where~$\#b(e)$ is the length of the contiguous range of values of~$b \in [0, 2^\ell) \cap \mathbb Z$ such that there exists~$a \in [0, 2^{m+\ell}) \cap \mathbb Z$ such that $e = a - bd$.

The classical post-processing recovers~$d$ from the pair~$(j, k)$ (see Sect.~\ref{sect:classical-post-processing}) so in practice the group element $g^e$ need not be observed; it may simply be discarded.

\subsubsection{Implementing the quantum algorithm}
\label{sect:quantum-algorithm-implementation}
As explained in~\cite[Sect.~3.3]{ekera-hastad} and in~\cite{shor94, shor97}, the state~\refeq{superposition} may be induced using standard techniques, by for instance first inducing uniform superpositions over ${a \in [0, 2^{m+\ell}) \cap \mathbb Z}$ and ${b \in [0, 2^{\ell}) \cap \mathbb Z}$, respectively, in the first two control registers,\footnote{This may be accomplished by independently initializing each qubit in the register to~$\ket{0}$ and applying a Hadamard~($\mathrm{H}$) gate to the qubit, leaving it in the state $\mathrm{H} \, \ket{0} = (\ket{0} + \ket{1}) / \sqrt{2}$.} and by then computing $g^a x^{-b} = g^{a - bd}$ to the third work register, yielding the state
\begin{align}
  \label{eq:superposition-intermediary}
  \frac{1}{\sqrt{2^{m+2\ell}}}
  \sum_{a \, = \, 0}^{2^{m + \ell}-1}
  \sum_{b \, = \, 0}^{2^{\ell}-1}
  \ket{a, b, g^{a - bd}}.
\end{align}

By applying quantum Fourier transforms~(QFTs) of size $2^{m+\ell}$ and $2^{\ell}$, respectively, in place to the first two control registers of~\refeq{superposition-intermediary}, the state~\refeq{superposition} is then obtained, allowing the pair $(j, k)$ to be observed by measuring the control registers.
In practice, the two exponentiations dominate the cost of inducing the state.

A quantum circuit that performs the above procedure is drawn in Fig.~\ref{fig:basic-circuit} in App.~\ref{app:figures}.
As illustrated in said figure, the exponentiations are performed by first pre-computing powers of two of~$g$ and~$x^{-1}$, respectively, classically, and then composing these powers quantumly conditioned on the control registers, by using that
\begin{align*}
  a = \sum_{i \, = \, 0}^{m+\ell-1} 2^i a_i,
  \quad
  b = \sum_{i \, = \, 0}^{\ell-1} 2^i b_i,
  \quad \Rightarrow \quad
  g^a = \prod_{i \, = \, 0}^{m+\ell-1} g^{2^i a_i},
  \quad
  x^{-b} = \prod_{i \, = \, 0}^{\ell-1} x^{-2^i b_i},
\end{align*}
where $a_i, \, b_i \in \{ 0, 1 \}$ are in quantum superpositions, and~$g^{2^i}$ and~$x^{-2^i}$ are classical constants.
To perform the compositions reversibly, powers of two of~$g^{-1}$ and~$x$ must typically also be pre-computed classically so as to enable uncomputation.
For this implementation approach to be efficient, it must hence be efficient not only to compose group elements quantumly, but also to invert group elements classically.

The short DLP is cryptographically relevant primarily for ${\langle g \rangle \subseteq \mathbb Z_{\Nmodulus}^*}$, for~$\Nmodulus$ a prime or composite.
In such groups, inverses may be computed efficiently via the extended Euclidean algorithm even if the order~$r$ of~$g$ is unknown.

\paragraph{Notes on optimizations}
The circuit in Fig.~\ref{fig:basic-circuit} may be optimized in various ways:
For instance, the QFT and the measurements that are performed with respect to the first control register may be moved left so that they are performed directly after the computation of~$g^a$ to the work register (as the first control register is left idle in between the aforementioned steps).
Analogously, the initialization of the second control register to a uniform superposition may be moved right so that it is performed directly before the computation of~$x^{-b}$ to the work register, see Fig.~\ref{fig:j-then-k-circuit} in App.~\ref{app:figures} for the resulting circuit.
As may be seen in said figure, this effectively implies that~$j$ is first computed, and that~$k$ is then computed given~$j$.
The space required to implement the two control registers is furthermore reduced, from ${m + 2 \ell}$~qubits to ${m + \ell}$~qubits, which is advantageous.

In practice, the above space-saving optimization may be taken further:
The state~\refeq{superposition} may be induced and the two control registers measured by leveraging the semi-classical QFT~\cite{griffiths-niu} with control qubit recycling~\cite{zalka, parker-plenio, mosca-ekert}.
In such an optimized circuit, the initialization of the uniform superpositions, the exponentiations, and the computation of the QFTs, are interleaved.
A single control qubit is repeatedly initialized to a uniform superposition $\mathrm{H} \, \ket{0}$, used to condition a composition with a classically pre-computed constant, and then transformed and measured, after which it is recycled in the next iteration.
This effectively implies that a single qubit suffices to implement the two control registers, and that~$j$ is first computed bit-by-bit after which~$k$ is computed bit-by-bit given~$j$.
See~\cite[Fig.~A.8 on p.~153]{ekera-phd-thesis} for a step-by-step visualization of how the operations in the circuit are interleaved.

Optimizations such as the semi-classical QFT with control qubit recycling are beyond the scope of this paper, but we mention them above in passing to highlight that it is standard practice to first compute~$j$, and to then compute~$k$ given~$j$.
We use this fact in our analysis, see the proof of Lem.~\ref{lemma:bound-tau-good-pair} in Sect.~\ref{sect:bound-tau-good-pair} below.

\paragraph{Other space-saving optimizations}
On the topic of space-saving optimizations, it should also be noted that Chevignard, Fouque and Schrottenloher~\cite{cfs25} have recently proposed an alternative implementation technique that leverages a residue number system and ideas from May and Schlieper~\cite{ms22} regarding compression robustness to compress the work register at the expense of not being able to recycle the control qubits.
When viewed through the prism of this implementation technique, Ekerå--Håstad's variation of Shor's algorithm achieves a space saving since the reduction it achieves in the number of group operations that need to be evaluated quantumly implies a corresponding reduction in the control register lengths, and hence in the number of control qubits that must be kept around when not being able to recycle them.

Increasing~$\Delta$ hence yields not only a reduction in the number of group operations that need to be evaluated quantumly, but also a space saving, when using this implementation technique, since there are $m + 2\ell = 3m - 2\Delta$ control qubits.
This implementation technique is at its most powerful when making tradeoffs, however, since the overall space usage can then be pushed down to $m + o(m)$ qubits at the expense of making $O(\log m)$ runs.

\subsection{The classical post-processing algorithm}
\label{sect:classical-post-processing}
As in~\cite{ekera-hastad, ekera-pp}, we use lattice-based techniques to classically recover~$d$ from~$(j, k)$, with a minor tweak to balance the lattice.
To describe the post-processing, it is convenient to introduce the below definition of a $\tau$-good pair~$(j, k)$:

\begin{definition}
  \label{def:tau-good-pair}
  For $\tau \in [0, \ell] \cap \mathbb Z$, a pair~$(j, k)$ is $\tau$-good if $|\, \{dj + 2^m k\}_{2^{m+\ell}} \,| \le 2^{m+\tau}$.
\end{definition}

It is furthermore convenient to introduce the lattice~$\mathcal L^\tau(j)$:

\begin{definition}
  Let~$\mathcal L^\tau(j)$ be the lattice generated by $(j, 2^\tau)$ and $(2^{m+\ell}, 0)$.
\end{definition}

If~$(j, k)$ is $\tau$-good, it follows that the known vector $\vec v = (\{-2^m k\}_{2^{m+\ell}}, 0) \in \mathbb Z^2$ is close to the unknown vector $\vec u = (dj + 2^{m+\ell} z, 2^\tau d) \in \mathcal L^\tau(j)$ for some $z \in \mathbb Z$.
More specifically, since $|\, \{dj + 2^m k\}_{2^{m+\ell}} \,| \le 2^{m+\tau}$ and $d < 2^m$, it holds that
\begin{align*}
  |\, \vec u - \vec v \,|
  &=
  \sqrt{( dj + 2^{m+\ell} z - \{-2^m k\}_{2^{m+\ell}} )^2 + (2^\tau d)^2} \\
  &=
  \sqrt{\{dj + 2^m k\}_{2^{m+\ell}}^2 + (2^\tau d)^2}
  <
  2^{m+\tau} \sqrt{2}.
\end{align*}

If~$(j, k)$ is $\tau$-good for small~$\tau$, then --- as explained in~\cite{ekera-hastad, ekera-pp} and Sect.~\ref{sect:analysis} --- the above implies that the unknown vector~$\vec u$ that yields~$d$ can be efficiently recovered by enumerating all vectors in~$\mathcal L^\tau(j)$ within a ball of radius ${2^{m+\tau} \sqrt{2}}$ centered on~$\vec v$, provided that~$\mathcal L^\tau(j)$ does not have an exceptionally short shortest non-zero vector.

Note that compared to the post-processing in~\cite{ekera-pp}, which works in~$\mathcal L^0(j)$, we balance the lattice by instead working in~$\mathcal L^\tau(j)$.
Furthermore, and more importantly, we leverage meet-in-the-middle or random-walk techniques to efficiently perform the enumeration, and we give a formal worst-case analysis that allows the enumeration complexity to be upper bounded, and the success probability to be lower bounded, as explained in Sect.~\ref{sect:contributions}.

\subsection{Notation}
\label{sect:notation}
In what follows, we let $\ceil{u}$, $\floor{u}$ and $\round{u}$ denote $u \in \mathbb R$ rounded up, down and to the closest integer, respectively.
Ties are broken by requiring that $\round{u} = u - \{ u \}_1$.

\subsection{Overview of the remainder of the paper}
\label{sect:overview}
In what follows in Sect.~\ref{sect:analysis} below, we lower bound the probability of the quantum algorithm yielding a $\tau$-good pair~$(j, k)$ in Lem.~\ref{lemma:bound-tau-good-pair}.
Furthermore, we lower bound the probability of the lattice~$\mathcal L^\tau(j)$ being $t$-balanced --- in the sense of it having a shortest non-zero vector of norm~$\lambda_1 \ge 2^{m-t}$ --- in Lem.~\ref{lemma:bound-t-balanced-Lj}.
In Sect.~\ref{sect:bound-enumeration-complexity}, we then proceed to upper bound the enumeration complexity of finding~$\vec u \in \mathcal L^\tau(j)$ and hence~$d$ given~$\vec v \in \mathbb Z^2$ when~$(j, k)$ is a $\tau$-good pair and~$\mathcal L^\tau(j)$ is $t$-balanced, in Lems.~\ref{lemma:bound-enumeration}--\ref{lemma:bound-enumeration-dlp}.
In Alg.~\ref{alg:recover-d} in App.~\ref{app:algorithms}, we give pseudocode for the enumeration algorithm analyzed in Lem.~\ref{lemma:bound-enumeration}, which uses meet-in-the-middle techniques.

In Sect.~\ref{sect:main-theorem}, we combine Lems.~\ref{lemma:bound-tau-good-pair}--\ref{lemma:bound-t-balanced-Lj} and Lems.~\ref{lemma:bound-enumeration} and~\ref{lemma:bound-enumeration-dlp} in our main theorems Thms.~\ref{thm:main}--\ref{thm:main-dlp}, so as to lower bound the probability of recovering~$d$ from~$(j, k)$ whilst upper bounding the enumeration complexity.
In Tabs.~\ref{tab:bounds}--\ref{tab:bounds-rsa} in App.~\ref{app:tables}, we tabulate the bounds in Thms.~\ref{thm:main}--\ref{thm:main-dlp}.
In App.~\ref{app:figures} we provide figures intended to facilitate reader comprehension.

\section{Bounding the success probability}
\label{sect:analysis}
Let us now proceed to bound the success probability as outlined above:

\subsection{Bounding the probability of observing a $\tau$-good pair}
\label{sect:bound-tau-good-pair}
\begin{lemma}
  \label{lemma:bound-tau-good-pair}
  For any given~$j$, the probability of observing~$k$ such that~$(j, k)$ is $\tau$-good is at least
  \begin{align*}
    1 - \psi'(2^\tau)
    >
    1 - \frac{1}{2^\tau} - \frac{1}{2 \cdot 2^{2\tau}} - \frac{1}{6 \cdot 2^{3\tau}},
  \end{align*}
  for $\tau \in [0, \ell] \cap \mathbb Z$, and for~$\psi'$ the trigamma function.
\end{lemma}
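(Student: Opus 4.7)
The plan is to prove the sharper claim $\Pr[\,(j,k)\text{ is not }\tau\text{-good}\mid j\,]\le\psi'(2^\tau)$, which implies both inequalities in the lemma. The first step is to observe that the marginal distribution on~$j$ is uniform at~$1/2^{m+\ell}$. This can be checked either by showing that the reduced density matrix on the first register of~\refeq{superposition-intermediary} equals $I/2^{m+\ell}$ (using $r\ge 2^{m+\ell}$ to conclude that $a\equiv a'\pmod r$ iff $a=a'$ for $a,a'\in[0,2^{m+\ell})$, so the states $\ket{g^{a-bd}}$ are orthogonal as $a$ varies), or by directly summing $P(\theta_d)$ over~$k$ using $\sum_{k}e^{2\pi ik(b-b')/2^\ell}=2^\ell\,\delta_{b,b'}$. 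The QFT preserves this uniform marginal, giving $\Pr[k\mid j]=2^{m+\ell}\,P(\theta_d(j,k))$.

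To bound this conditional probability, apply the Dirichlet-kernel inequality $\zeta(\theta_d,\#b(e))\le 1/\sin^2(\theta_d/2)$ uniformly in~$e$ together with the crude count $\#\{e\}\le 2^{m+\ell+1}$ (valid as $(2^\ell-1)d<2^{m+\ell}$) to get $\Pr[k\mid j]\le 2^{1-2\ell}/\sin^2(\theta_d/2)$. Next, parametrize the $2^\ell$ values taken by~$\alpha_d(j,k)$: writing $dj=Q\cdot 2^m+R$ with $R\in[0,2^m)$ and $u=R/2^m\in[0,1)$, one checks that $\alpha_d=2^m(K'+u)$ as $K'$ ranges over $\{-2^{\ell-1},\dots,2^{\ell-1}-1\}$. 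Jordan's inequality, applicable since $|\pi(u+K')/2^\ell|\le\pi/2$, gives $\sin^2(\theta_d/2)\ge (u+K')^2/2^{2\ell-2}$, whence $\Pr[k\mid j]\le 1/(2(u+K')^2)$. Summing this bound over those $K'$ with $|u+K'|>2^\tau$ (the not-$\tau$-good condition) and splitting into positive and negative tails, each tail equals $\sum_{i\ge 0}1/(v+i)^2=\psi'(v)$ for some $v\in(2^\tau,2^\tau+1]$; by monotonicity of~$\psi'$, each is at most $\psi'(2^\tau)$, yielding $\Pr[\text{not }\tau\text{-good}\mid j]\le (1/2)\cdot 2\psi'(2^\tau)=\psi'(2^\tau)$ as desired.

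For the strict upper bound $\psi'(n)<1/n+1/(2n^2)+1/(6n^3)$ at $n=2^\tau\ge 1$, let~$g(n)$ denote the right-hand side and $h(n)=g(n)-\psi'(n)$. Using the recurrence $\psi'(n+1)=\psi'(n)-1/n^2$, an elementary algebraic manipulation shows $g(n)-g(n+1)>1/n^2$ for every $n\ge 1$, so $h$ is strictly decreasing; since $g(n),\psi'(n)\to 0$ as $n\to\infty$, also $h(n)\to 0$, forcing $h(n)>0$ for all $n\ge 1$. The main obstacle is the exact constant-tracking in the probability bound: the estimates $\zeta\le 1/\sin^2(\theta_d/2)$, $\#\{e\}\le 2^{m+\ell+1}$, and Jordan's $\sin x\ge 2x/\pi$ each discard a constant factor, and the analysis must be arranged so that these cancel to leave precisely the factor~$1/2$ in front of $1/(u+K')^2$, without which the trigamma tails would sum to a strictly larger constant multiple of~$\psi'(2^\tau)$ rather than to $\psi'(2^\tau)$ itself.
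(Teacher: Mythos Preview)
Your proof is correct and follows the same route as the paper: parametrize $\alpha_d(j,k)=2^m(K'+u)$ with $u\in[0,1)$ fixed by~$j$, bound $\Pr[k\mid j]$ via the Dirichlet-kernel estimate together with Jordan's inequality (the paper packages these two steps as the single bound $\zeta\le\pi^2/\theta_d^2$ in its Claim~2, obtained from $1-\cos\theta\ge 2\theta^2/\pi^2$), and then control each of the two tails by $\tfrac12\psi'(2^\tau)$. The one noteworthy difference is your handling of the trigamma inequality $\psi'(n)<1/n+1/(2n^2)+1/(6n^3)$: the paper imports this from Nemes' error analysis for the Hurwitz zeta function, whereas you give a self-contained elementary argument via the recurrence $\psi'(n+1)=\psi'(n)-1/n^2$ and a telescoping comparison---this is a nice simplification that avoids the external citation at no cost.
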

\begin{proof}
  As explained in Sect.~\ref{sect:quantum-algorithm-implementation} with reference to Figs.~\ref{fig:basic-circuit}--\ref{fig:j-then-k-circuit} in App.~\ref{app:figures}, the quantum algorithm that induces the state~\refeq{superposition} may be implemented in such a manner that~$j$ is first computed, and~$k$ is then computed given~$j$.

  By Cl.~\ref{claim:uniform-j} (see Sect.~\ref{sect:bound-tau-good-pair-supporting-claims} below), $j$ is then first selected uniformly at random from $[0, 2^{m+\ell}) \cap \mathbb Z$, after which~$k$ is selected from $[0, 2^\ell) \cap \mathbb Z$ given~$j$.
  Specifically, for~$P$ as in~\refeq{P}, $k$ is selected given~$j$ according to the probability distribution
  \begin{align}
    2^{m+\ell} \cdot P(\theta(\alpha(j, k)))
    &=
    \frac{1}{2^{m+3\ell}}
    \sum_{e \, = \, -(2^\ell - 1) d}^{2^{m+\ell} - 1}
    \zeta(\theta(\alpha(j, k)), \#b(e)), \label{eq:P-for-k-given-j}
  \end{align}
  where we recall that~$\zeta$ is defined in~\refeq{P}, $\theta$ and~$\alpha$ in Sect.~\ref{sect:quantum-algorithm}, and~$\#b(e)$ in~\cite[Sect.~3]{ekera-pp}.

  For each~$j \in [0, 2^{m+\ell}) \cap \mathbb Z$, there is a value~$k_0(j)$ of~$k \in [0, 2^\ell) \cap \mathbb Z$ such that
  \begin{align*}
    \alpha_{d, 0}(j)
    =
    \alpha(j, k_0(j))
    =
    \{ dj + 2^m k_0(j) \}_{2^{m+\ell}}
    =
    dj \text{ mod } 2^m
    \in [0, 2^m) \cap \mathbb Z.
  \end{align*}
  Let $k = (k_0(j) + t) \text{ mod } 2^\ell$ for $t \in [-2^{\ell-1}, 2^{\ell-1}) \cap \mathbb Z$.
  Then
  \begin{align*}
    \alpha(j, k)
    &=
    \{ dj + 2^m k \}_{2^{m+\ell}}
    =
    \{ dj + 2^m ((k_0(j) + t) \text{ mod } 2^\ell) \}_{2^{m+\ell}} \\
    &=
    \{ dj + 2^m (k_0(j) + t) \}_{2^{m+\ell}}
    =
    \{  \alpha_{d, 0}(j) + 2^m t \}_{2^{m+\ell}} \\
    &=
    \alpha_{d, 0}(j) + 2^m t \in [-2^{m+\ell-1}, 2^{m+\ell-1}) \cap \mathbb Z.
  \end{align*}

  For each~$j$, the probability of observing~$k$ such that
  \begin{align*}
    |\, \alpha(j, k) \,|
    &=
    |\, \alpha_{d, 0}(j) + 2^m t \,| \le 2^{m+\tau},
  \end{align*}
  i.e.~such that~$(j, k)$ is $\tau$-good, is then lower bounded by $1 - T_{+} - T_{-}$, for~$T_{+}$ and~$T_{-}$ the probability captured by the positive and negative tails, i.e.~by the regions where $t \in [2^\tau, 2^{\ell-1}) \cap \mathbb Z$ and $t \in [-2^{\ell-1}, -2^\tau) \cap \mathbb Z$, respectively, and where we have used that the probability distribution~\refeq{P-for-k-given-j} sums to one over~$k$ for fixed~$j$.

  It follows that we may lower bound the probability of observing a $\tau$-good pair~$(j, k)$ by upper bounding~$T_{+}$ and~$T_{-}$.
  More specifically
  \begin{align}
      T_{+}
    &=
      \frac{1}{2^{m+3\ell}}
      \sum_{t \, = \, 2^\tau}^{2^{\ell-1} - 1}
      \sum_{e \, = \, -(2^\ell - 1) d}^{2^{m+\ell} - 1}
      \zeta(\theta(\alpha(j, k_0(j) + t)), \#b(e)) \notag \\
    &\le
      \frac{1}{2^{m+3\ell}}
      \sum_{t \, = \, 2^\tau}^{2^{\ell-1} - 1}
      \frac{2^{m+\ell+1} \pi^2}{(\theta(\alpha(j, k_0(j) + t)))^2} \label{eq:T-plus-bound-zeta} \\
    &=
      \frac{1}{2^{m+3\ell}}
      \sum_{t \, = \, 2^\tau}^{2^{\ell-1} - 1}
      \frac{2^{m+\ell+1} \pi^2}{(2 \pi (\alpha_{d, 0}(j) + 2^m t) / 2^{m+\ell})^2}
    =
      \frac{1}{2}
      \sum_{t \, = \, 2^\tau}^{2^{\ell-1} - 1}
      \frac{2^{2m}}{(\alpha_{d, 0}(j) + 2^m t)^2} \notag \\
    &\le
      \frac{1}{2}
      \sum_{t \, = \, 2^\tau}^{2^{\ell-1} - 1}
      \frac{1}{t^2}
    <
      \frac{1}{2}
      \sum_{t \, = \, 2^\tau}^{\infty}
      \frac{1}{t^2}
    =
      \frac{1}{2} \psi'(2^\tau), \label{eq:T-plus-bound}
  \end{align}
  where we have used Cl.~\ref{claim:bound-zeta}, see Sect.~\ref{sect:bound-tau-good-pair-supporting-claims}, to bound~$\zeta$ in~\refeq{T-plus-bound-zeta}, and where~$\psi'$ in~\refeq{T-plus-bound} is the trigamma function.
  In~\refeq{T-plus-bound}, we have furthermore used that $\alpha_{d, 0}(j) \in [0, 2^m) \cap \mathbb Z$, and that the expression is maximized when $\alpha_{d, 0}(j) = 0$.
  Analogously
  \begin{align}
      T_{-}
    &=
      \frac{1}{2^{m+3\ell}}
      \sum_{t \, = \, -2^{\ell-1}}^{-2^\tau-1}
      \sum_{e \, = \, -(2^\ell - 1) d}^{2^{m+\ell} - 1}
      \zeta(\theta(\alpha(j, k_0(j) + t)), \#b(e)) \notag \\
    &\le
      \frac{1}{2^{m+3\ell}}
      \sum_{t \, = \, -2^{\ell-1}}^{-2^\tau-1}
      \frac{2^{m+\ell+1} \pi^2}{(\theta(\alpha(j, k_0(j) + t)))^2}
    =
      \frac{1}{2^{m+3\ell}}
      \sum_{t \, = \, 2^\tau+1}^{2^{\ell-1}}
      \frac{2^{m+\ell+1} \pi^2}{(\theta(\alpha(j, k_0(j) - t)))^2} \notag \\
    &=
      \frac{1}{2^{m+3\ell}}
      \sum_{t \, = \, 2^\tau+1}^{2^{\ell-1}}
      \frac{2^{m+\ell+1} \pi^2}{(2 \pi (\alpha_{d, 0}(j) - 2^m t) / 2^{m+\ell})^2}
    =
      \frac{1}{2}
      \sum_{t \, = \, 2^\tau+1}^{2^{\ell-1}}
      \frac{2^{2m}}{(\alpha_{d, 0}(j) - 2^m t)^2} \notag \\
    &\le
      \frac{1}{2}
      \sum_{t \, = \, 2^\tau+1}^{2^{\ell-1}}
      \frac{2^{2m}}{((2^m - 1) - 2^m t)^2} \label{eq:T-minus-bound-maximize-alpha0} \\
    &<
      \frac{1}{2}
      \sum_{t \, = \, 2^\tau+1}^{2^{\ell-1}}
      \frac{1}{(t - 1)^2}
    =
      \frac{1}{2}
      \sum_{t \, = \, 2^\tau}^{2^{\ell-1} - 1}
      \frac{1}{t^2}
    <
      \frac{1}{2}
      \sum_{t \, = \, 2^\tau}^{\infty}
      \frac{1}{t^2}
    =
      \frac{1}{2} \psi'(2^\tau), \label{eq:T-minus-bound}
  \end{align}
  where we have used in~\refeq{T-minus-bound-maximize-alpha0} that the expression is maximized when $\alpha_{d, 0}(j) = 2^m - 1$.

  It follows from~\refeq{T-plus-bound} and~\refeq{T-minus-bound} that the probability is lower bounded by
  \begin{align*}
    1 - T_{+} - T_{-}
    &>
    1 - \psi'(2^\tau)
    >
    1 - \frac{1}{2^\tau} - \frac{1}{2 \cdot 2^{2\tau}} - \frac{1}{6 \cdot 2^{3\tau}},
  \end{align*}
  where we have used Cl.~\ref{claim:bound-trigamma}, see Sect.~\ref{sect:bound-tau-good-pair-supporting-claims}, and so the lemma follows.
\end{proof}

\subsubsection{Supporting claims}
\label{sect:bound-tau-good-pair-supporting-claims}
The below claims support the proof of Lem.~\ref{lemma:bound-tau-good-pair} above:

\begin{claim}[{from \cite[Cl.~2.4]{ekera-success}}]
  \label{claim:bound-one-minus-cos}
  For any $\phi \in [-\pi, \pi]$, it holds that
  \begin{align*}
    \frac{2 \phi^2}{\pi^2} \le 1 - \cos(\phi) \le \frac{\phi^2}{2}.
  \end{align*}
\end{claim}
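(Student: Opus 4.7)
The plan is to treat the two inequalities separately; both are classical but the lower bound is the more delicate of the two.

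For the upper bound $1 - \cos(\phi) \le \phi^2/2$, I would argue by symmetry and a single derivative comparison. Since both sides are even in $\phi$, it suffices to consider $\phi \in [0, \pi]$. Define $f(\phi) = \phi^2/2 - (1 - \cos\phi)$. Then $f(0) = 0$ and $f'(\phi) = \phi - \sin\phi \ge 0$ on $[0, \pi]$ (since $\sin\phi \le \phi$ for $\phi \ge 0$, which itself follows from $g(\phi) = \phi - \sin\phi$ having $g(0)=0$ and $g'(\phi) = 1 - \cos\phi \ge 0$). Hence $f \ge 0$ on $[0, \pi]$, which gives the desired bound on all of $[-\pi, \pi]$.

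For the lower bound $2\phi^2/\pi^2 \le 1 - \cos(\phi)$, I would rewrite $1 - \cos\phi = 2 \sin^2(\phi/2)$. Setting $\psi = \phi/2 \in [-\pi/2, \pi/2]$, the inequality becomes $|\sin\psi| \ge 2|\psi|/\pi$, i.e.\ Jordan's inequality. This follows from concavity of $\sin$ on $[0, \pi/2]$: since $\sin''(\psi) = -\sin\psi \le 0$ there, the graph of $\sin$ lies above every chord, and in particular above the chord from $(0,0)$ to $(\pi/2, 1)$, which is the line $y = 2\psi/\pi$. Combining with oddness of $\sin$ handles $\psi \in [-\pi/2, 0]$, and squaring yields the bound on $2\sin^2(\phi/2)$.

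I do not expect any real obstacle; the only subtlety is remembering to reduce the lower bound to Jordan's inequality via the half-angle identity, since attacking $1 - \cos\phi$ directly by, say, a Taylor expansion with remainder is not enough to obtain a sharp coefficient matching equality at $\phi = \pm\pi$. The concavity argument, by contrast, makes the equality at the endpoints transparent.
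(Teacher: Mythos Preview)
Your argument is correct: the upper bound via the monotone function $f(\phi)=\phi^2/2-(1-\cos\phi)$ and the lower bound via the half-angle identity $1-\cos\phi=2\sin^2(\phi/2)$ together with Jordan's inequality are both sound, and the concavity justification for Jordan's inequality is clean and makes the endpoint equality at $\phi=\pm\pi$ evident.

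There is essentially nothing to compare against here: the paper does not supply its own proof of this claim but simply records it as a standard fact and defers to \cite[Cl.~2.4]{ekera-success}. Your write-up therefore goes beyond what the paper offers. If anything, you could compress the upper-bound paragraph slightly (the nested use of $g(\phi)=\phi-\sin\phi$ is fine but could be stated in one line), but that is purely stylistic.
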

\begin{proof}
  This is a standard claim.
  Please see~\cite[Cl.~2.4]{ekera-success} for the proof.
\end{proof}

\begin{claim}
  \label{claim:bound-zeta}
  For $\zeta(\theta_d, \#b(e))$ the inner sum in~\refeq{P}, it holds that
  \begin{align*}
    \zeta(\theta_d, \#b(e)) \le \frac{\pi^2}{\theta_d^2}.
  \end{align*}
\end{claim}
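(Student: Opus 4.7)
The plan is to use the standard closed form for the geometric sum and then apply Claim~\ref{claim:bound-one-minus-cos} to bound the denominator from below. Writing $n = \#b(e)$ for brevity, and assuming $\theta_d \neq 0$ (otherwise the stated bound is vacuous), I would first evaluate
\begin{align*}
  \zeta(\theta_d, n)
  = \left| \sum_{b=0}^{n-1} \e^{\imag \theta_d b} \right|^2
  = \left| \frac{\e^{\imag \theta_d n} - 1}{\e^{\imag \theta_d} - 1} \right|^2
  = \frac{1 - \cos(\theta_d n)}{1 - \cos(\theta_d)},
\end{align*}
using $|\e^{\imag \phi} - 1|^2 = 2 - 2\cos(\phi)$ in both the numerator and denominator.

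Next I would bound the numerator trivially by $1 - \cos(\theta_d n) \le 2$, and the denominator from below using the left inequality of Claim~\ref{claim:bound-one-minus-cos}. For the latter step I need $\theta_d \in [-\pi, \pi]$, which holds since $\alpha_d = \{dj + 2^m k\}_{2^{m+\ell}} \in [-2^{m+\ell-1}, 2^{m+\ell-1})$ forces $\theta_d = 2\pi \alpha_d / 2^{m+\ell} \in [-\pi, \pi)$. Claim~\ref{claim:bound-one-minus-cos} then yields $1 - \cos(\theta_d) \ge 2 \theta_d^2 / \pi^2$, and combining the two bounds gives
\begin{align*}
  \zeta(\theta_d, n) \le \frac{2}{2\theta_d^2 / \pi^2} = \frac{\pi^2}{\theta_d^2},
\end{align*}
as claimed.

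There is no real obstacle: the only subtlety is making sure that $\theta_d$ lies in the interval $[-\pi, \pi]$ where Claim~\ref{claim:bound-one-minus-cos} applies, which follows directly from the definition of $\alpha_d$ via the centred reduction $\{\,\cdot\,\}_{2^{m+\ell}}$. The bound is independent of $n = \#b(e)$, which is exactly what is needed in the proof of Lem.~\ref{lemma:bound-tau-good-pair} where the outer sum over $e$ is handled separately.
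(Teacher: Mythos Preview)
Your proof is correct and follows essentially the same approach as the paper: evaluate the geometric sum in closed form, bound the numerator $1-\cos(\theta_d n)$ by~$2$, and apply Claim~\ref{claim:bound-one-minus-cos} to the denominator using $|\theta_d|\le\pi$. Your explicit justification that $\theta_d\in[-\pi,\pi)$ via the centred reduction is a nice touch that the paper leaves implicit.
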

\begin{proof}
  The claim trivially holds for $\theta_d = 0$.
  For $\theta_d \neq 0$, it holds that
  \begin{align*}
    \zeta(\theta_d, \#b(e))
    &=
    \left|\,
    \sum_{b \, = \, 0}^{\#b(e) - 1}
    \e^{\imag \theta_d b}
    \,\right|^2
    =
    \left|\,
    \frac{1 - \e^{\imag \theta_d \, \#b(e)}}{1 - \e^{\imag \theta_d}}
    \,\right|^2
    =
    \frac{1 - \cos(\theta_d \, \#b(e))}{1 - \cos(\theta_d)} \\
    &\le
    \frac{2}{1 - \cos(\theta_d)}
    \le
    \frac{2}{2 \theta_d^2 / \pi^2}
    =
    \frac{\pi^2}{\theta_d^2},
  \end{align*}
  where we have used that $|\, \theta_d \,| \le \pi$, and Cl.~\ref{claim:bound-one-minus-cos}, and so the claim follows.
\end{proof}

\begin{claim}[{from \cite[Cl.~3.2]{ekera-success}} via Nemes~\cite{nemes}]
  \label{claim:bound-trigamma}
  For any real $x > 0$, it holds that
  \begin{align*}
    \psi'(x) < \frac{1}{x} + \frac{1}{2x^2} + \frac{1}{6x^3}
  \end{align*}
  for~$\psi'$ the trigamma function.
\end{claim}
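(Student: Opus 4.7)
The statement is attributed explicitly to Nemes~\cite{nemes} via~\cite[Cl.~3.2]{ekera-success}, so the cleanest proof is simply to cite those two sources: the bound is derived in~\cite{nemes} and transcribed in~\cite[Cl.~3.2]{ekera-success} in exactly the form required here. This is the route I would take in the paper, mirroring how Cl.~\ref{claim:bound-one-minus-cos} above is handled.

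For a self-contained derivation, I would start from the series representation $\psi'(x) = \sum_{n \, \ge \, 0} (x+n)^{-2}$ and apply the Euler--Maclaurin summation formula to $f(t) = (x+t)^{-2}$ on $[0, \infty)$. A direct calculation of the first three terms yields $\int_0^\infty f(t)\, dt = 1/x$, $f(0)/2 = 1/(2 x^2)$, and $-(B_2/2!)\, f'(0) = 1/(6 x^3)$, which matches the right-hand side of the claim exactly. The following term in the expansion is $-(B_4/4!)\, f'''(0) = -1/(30 x^5)$, which is strictly negative since $B_4 = -1/30$.

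The claim then follows from the standard fact that, for a completely monotone function such as $(x+t)^{-2}$ (whose successive derivatives alternate in sign), the Euler--Maclaurin asymptotic expansion is enveloping, so its partial sums strictly alternate around the true value of $\psi'(x)$. Since the first omitted term is negative, truncating after $1/(6x^3)$ yields a strict upper bound, as claimed. The main subtlety in a fully self-contained write-up is verifying this enveloping property with the correct sign of the remainder; since this is precisely what Nemes~\cite{nemes} carries out (and what is repackaged in~\cite[Cl.~3.2]{ekera-success}), I would defer to those references rather than reprove the sign of the Euler--Maclaurin remainder for $\psi'$.
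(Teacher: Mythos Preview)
Your proposal is correct and matches the paper's approach exactly: the paper's proof consists solely of the line ``Please see~\cite[Cl.~3.2]{ekera-success} for the proof,'' which is precisely the route you advocate. Your additional Euler--Maclaurin sketch is sound and goes beyond what the paper provides, but is not needed here.
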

\begin{proof}
  Please see~\cite[Cl.~3.2]{ekera-success} for the proof.
\end{proof}

\begin{claim}
  \label{claim:uniform-j}
  The integer~$j$ yielded by the quantum algorithm that induces the state~\refeq{superposition} is selected uniformly at random from $[0, 2^{m+\ell}) \cap \mathbb Z$.
\end{claim}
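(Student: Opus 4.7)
The plan is to compute the marginal probability $P(j) = \sum_{k,e} P(j,k,g^e)$ directly from the expression~\refeq{probability-a-b}, and to show that it equals $1/2^{m+\ell}$ for every $j \in [0, 2^{m+\ell}) \cap \mathbb{Z}$.

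First, I would expand the squared modulus in~\refeq{probability-a-b} as a double sum over pairs $(a, b)$ and $(a', b')$ subject to $a - bd \equiv e \equiv a' - b'd \pmod r$. Summing over the observable group-element index $e$ then merges the two constraints into the single congruence $a - bd \equiv a' - b'd \pmod r$. Next, summing over $k \in [0, 2^\ell) \cap \mathbb{Z}$ uses the character orthogonality
\begin{align*}
  \sum_{k \, = \, 0}^{2^\ell - 1} \e^{2 \pi \imag (b - b') k / 2^\ell} = 2^\ell \cdot \mathbf{1}[\, b = b' \,],
\end{align*}
which is valid since $b, b' \in [0, 2^\ell)$. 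This forces $b = b'$, and the remaining congruence reduces to $a \equiv a' \pmod r$.

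The key step is then to invoke the short-$d$ assumption $r \ge 2^{m+\ell} + (2^\ell - 1) d$ recalled in Sect.~\ref{sect:quantum-algorithm}, which in particular implies $r \ge 2^{m+\ell}$. Since $a, a' \in [0, 2^{m+\ell}) \cap \mathbb{Z}$, we have $|\, a - a' \,| < 2^{m+\ell} \le r$, so $a \equiv a' \pmod r$ forces $a = a'$. Every phase factor therefore collapses to $1$, and a simple count of the surviving tuples $(a, b)$ gives $P(j) = 2^{2\ell} \cdot 2^{m+\ell} / 2^{2(m+2\ell)} = 1/2^{m+\ell}$, independently of $j$.

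I expect the proof to be largely a bookkeeping exercise once the calculation is laid out in the right order; the only substantive ingredient is the short-$d$ hypothesis, which is precisely what makes the congruence $a \equiv a' \pmod r$ collapse to $a = a'$ and thereby removes all $j$-dependence from the marginal. An equivalent but perhaps more conceptual route would be to imagine first measuring the third register, obtaining some $g^e$: under $r \ge 2^{m+\ell}$, for each $b \in [0, 2^\ell)$ there is at most one $a \in [0, 2^{m+\ell})$ with $a \equiv e + bd \pmod r$, so conditional on $g^e$ the first two registers are in a uniform superposition over a set of pairs $(a_b, b)$; applying the two QFTs and summing the resulting probability over $k$ via the same orthogonality identity yields $P(j \mid g^e) = 1/2^{m+\ell}$, and the unconditional uniformity of $j$ follows immediately.
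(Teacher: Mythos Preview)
Your proof is correct, but the paper takes a different, more operational route. Rather than computing the marginal from~\refeq{probability-a-b}, the paper invokes the circuit reordering of Sect.~\ref{sect:quantum-algorithm-implementation} (Figs.~\ref{fig:basic-circuit}--\ref{fig:j-then-k-circuit}) under which~$j$ is measured \emph{before} the $k$-register is even initialized. At that intermediate stage the state is
\[
  \frac{1}{2^{m+\ell}}\sum_{a,\,j\,=\,0}^{2^{m+\ell}-1} \exp\!\left(\frac{2\pi\imag}{2^{m+\ell}}\,aj\right)\ket{j,\,g^a},
\]
and since $r > 2^{m+\ell}$ the elements~$g^a$ are pairwise distinct, so no interference has occurred and each~$j$ is read out with probability exactly $2^{-(m+\ell)}$.

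Both arguments hinge on the same inequality $r \ge 2^{m+\ell}$ drawn from the shortness hypothesis. The paper's version is shorter and dovetails with the operational picture used in Lem.~\ref{lemma:bound-tau-good-pair} (which explicitly conditions on~$j$ being measured first), whereas your direct-marginal computation is fully self-contained from~\refeq{probability-a-b} and does not appeal to any circuit equivalence. Your alternative ``conceptual route'' at the end --- measuring the work register first and observing that the conditional state has one~$a$ per~$b$ --- is close in spirit to the paper's argument, just phrased via measurement of the third register rather than via reordering of the control-register operations.
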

\begin{proof}
  As explained in Sect.~\ref{sect:quantum-algorithm-implementation} with reference to Figs.~\ref{fig:basic-circuit}--\ref{fig:j-then-k-circuit} in App.~\ref{app:figures}, the quantum algorithm that induces the state~\refeq{superposition} may be implemented in such a manner that~$j$ is first computed, and~$k$ is then computed given~$j$.

  In the first step in Fig.~\ref{fig:j-then-k-circuit} where~$j$ is computed, the algorithm induces the state
  \begin{align}
    \label{eq:superposition-step-j}
    \frac{1}{2^{m+\ell}}
    \sum_{a, \, j \, = \, 0}^{2^{m + \ell}-1}
    \exp\left(
      \frac{2 \pi \imag}{2^{m + \ell}} aj
    \right)
    \ket{j, g^a}.
  \end{align}

  Note that no interference has yet arisen after this first step.
  Observing the first register in~\refeq{superposition-step-j} therefore yields each~$j \in [0, 2^{m+\ell}) \cap \mathbb Z$ with probability
  \begin{align*}
    \frac{1}{2^{2(m+\ell)}}
    \sum_{a \, = \, 0}^{2^{m+\ell} - 1}
    \underbrace{\left|\,
      \exp\left(
        \frac{2 \pi \imag}{2^{m + \ell}} aj
      \right)
    \,\right|^2}_{= \, 1}
    =
    \frac{1}{2^{m+\ell}}
  \end{align*}
  since $r > 2^{m+\ell}$ for~$r$ the order of~$g$ (this follows from the supposition in Sect.~\ref{sect:quantum-algorithm} that~$d$ is short in the sense that $r \ge 2^{m+\ell} + (2^\ell - 1) d$), and so the claim follows.
\end{proof}

\subsection{Bounding the probability of~$\mathcal L^\tau(j)$ being $t$-balanced}
\label{sect:bound-shortest-vector-Lj}
As in~\cite{ekera-success}, let~$\vec s_1$ be a shortest non-zero vector of~$\mathcal L^\tau(j)$, and let~$\vec s_2$ be a shortest non-zero vector that is linearly independent to~$\vec s_1$, up to signs.
Then~$(\vec s_1, \vec s_2)$ forms a Lagrange-reduced basis for~$\mathcal L^\tau(j)$.
It may be found efficiently by Lagrange's algorithm~\cite{lagrange, nguyen}.\footnote{Note that in the two-dimensional case that we consider in this paper, Lagrange's algorithm is equivalent to the Lenstra--Lenstra--Lovász (LLL) algorithm~\cite{lll} (with parameter ${\delta = 1}$) in practice.}
Let~$\vec s_2^{\parallel} = \mu \cdot \vec s_1$ and~$\vec s_2^{\perp} = \vec s_2 - \vec s_2^{\parallel}$ be the components of~$\vec s_2$ that are parallel and orthogonal to~$\vec s_1$, respectively, where $\mu = \langle \vec s_1, \vec s_2 \rangle / |\, \vec s_1 \,|^2$.
Furthermore, let $\lambda_1 = |\, \vec s_1 \,|$, $\lambda_2 = |\, \vec s_2 \,|$, $\lambda_2^{\perp} = |\, \vec s_2^{\perp} \,|$ and $\lambda_2^{\parallel} = |\, \vec s_2^{\parallel} \,|$.

\begin{claim}[{from \cite[Cl.~C.1]{ekera-success}}]
  \label{claim:lattice-det}
  It holds that $\lambda_1 \lambda_2^{\perp} = 2^{m+\ell+\tau}$.
\end{claim}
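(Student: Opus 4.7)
The plan is to use the standard fact that for a rank-$2$ lattice, the product $\lambda_1 \lambda_2^{\perp}$ equals the determinant (volume) of the lattice, which is basis-independent.

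First, I would compute the determinant directly from the generating basis $(j, 2^\tau)$, $(2^{m+\ell}, 0)$ given in the definition of $\mathcal L^\tau(j)$. The absolute value of
\begin{align*}
  \det \begin{pmatrix} j & 2^\tau \\ 2^{m+\ell} & 0 \end{pmatrix}
  = -2^{m+\ell+\tau}
\end{align*}
equals $2^{m+\ell+\tau}$, which is the volume of the fundamental parallelepiped of $\mathcal L^\tau(j)$.

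Next, since $(\vec s_1, \vec s_2)$ is a Lagrange-reduced basis of $\mathcal L^\tau(j)$, it generates the same lattice as the original basis, and hence $|\det(\vec s_1, \vec s_2)| = 2^{m+\ell+\tau}$ by invariance of the determinant under unimodular change of basis. Writing $\vec s_2 = \vec s_2^{\parallel} + \vec s_2^{\perp}$ with $\vec s_2^{\parallel}$ parallel to $\vec s_1$ and $\vec s_2^{\perp}$ orthogonal to $\vec s_1$, the parallel component contributes nothing to the determinant, so $|\det(\vec s_1, \vec s_2)| = |\det(\vec s_1, \vec s_2^{\perp})| = |\vec s_1| \cdot |\vec s_2^{\perp}| = \lambda_1 \lambda_2^{\perp}$, by the usual base-times-height formula for the area of the parallelogram spanned by two orthogonal vectors. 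Combining the two expressions for the determinant yields $\lambda_1 \lambda_2^{\perp} = 2^{m+\ell+\tau}$ as claimed.

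There is no real obstacle here; the whole argument is a one-line invocation of the fact that the lattice volume is a basis invariant, together with the base-times-height observation that $|\det(\vec s_1, \vec s_2)| = \lambda_1 \lambda_2^{\perp}$ once $\vec s_2$ is split into its components parallel and orthogonal to $\vec s_1$. The only thing worth double-checking is that $(\vec s_1, \vec s_2)$ is indeed a basis of $\mathcal L^\tau(j)$ (not merely two short lattice vectors), which is guaranteed by the defining property of a Lagrange-reduced basis in dimension two.
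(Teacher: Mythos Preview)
Your proposal is correct and takes essentially the same approach as the paper, which simply invokes the standard fact that the area of the fundamental region of~$\mathcal L^\tau(j)$ equals $\lambda_1 \lambda_2^{\perp} = \det \mathcal L^\tau(j) = 2^{m+\ell+\tau}$. Your version is just a more explicit unpacking of that one-line argument.
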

\begin{proof}
  This is a standard claim.
  It follows from the fact that the area of the fundamental region in~$\mathcal L^\tau(j)$ is $\lambda_1 \lambda_2^{\perp} = \det \mathcal L^\tau(j) = 2^{m+\ell+\tau}$.
\end{proof}

\begin{claim}[{from \cite[Cl.~C.2]{ekera-success}}]
  \label{claim:lattice-ineq}
  It holds that $\lambda_2^{\parallel} = |\, \mu \,| \cdot \lambda_1 \le \lambda_1 / 2$ and $\lambda_2^{\perp} \ge \sqrt{3} \, \lambda_2 / 2$.
\end{claim}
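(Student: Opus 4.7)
The plan is to derive both inequalities from the defining property of a Lagrange-reduced basis --- the minimality of $\vec s_1$ among non-zero vectors of $\mathcal L^\tau(j)$ and of $\vec s_2$ among those linearly independent of $\vec s_1$ --- combined with the orthogonal decomposition $\vec s_2 = \vec s_2^{\parallel} + \vec s_2^{\perp}$.

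To establish the first inequality, I would exploit the minimality of $\lambda_2$. For any integer $n$, the vector $\vec s_2 - n \vec s_1$ lies in $\mathcal L^\tau(j)$ and is linearly independent of $\vec s_1$ (since $\vec s_2$ is), so $|\, \vec s_2 - n \vec s_1 \,|^2 \ge \lambda_2^2$. Expanding the squared norm and using $\langle \vec s_1, \vec s_2 \rangle = \mu \lambda_1^2$ reduces this to $n(n - 2\mu) \lambda_1^2 \ge 0$, which must hold for every $n \in \mathbb Z$. Specializing to $n = 1$ yields $\mu \le 1/2$, and to $n = -1$ yields $\mu \ge -1/2$. Hence $|\mu| \le 1/2$, and multiplying through by $\lambda_1$ gives $\lambda_2^{\parallel} = |\mu| \cdot \lambda_1 \le \lambda_1 / 2$.

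For the second inequality, I would then apply the Pythagorean theorem to the orthogonal decomposition, yielding $\lambda_2^2 = (\lambda_2^{\parallel})^2 + (\lambda_2^{\perp})^2 = \mu^2 \lambda_1^2 + (\lambda_2^{\perp})^2$. Combining $\mu^2 \le 1/4$ from the first part with $\lambda_1 \le \lambda_2$ (since $\vec s_1$ is a shortest non-zero vector and $\vec s_2$ is non-zero), I obtain $(\lambda_2^{\perp})^2 \ge \lambda_2^2 - \lambda_1^2 / 4 \ge (3/4) \lambda_2^2$, from which $\lambda_2^{\perp} \ge \sqrt{3}\, \lambda_2 / 2$ follows immediately.

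I do not anticipate any genuine obstacle: both parts are elementary facts about two-dimensional Lagrange reduction combined with a short Gram--Schmidt calculation, and the argument mirrors that of~\cite[Cl.~C.2]{ekera-success}, to which the statement is attributed.
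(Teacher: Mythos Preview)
Your proposal is correct; both inequalities are derived exactly as one would expect for a Lagrange-reduced basis, and your computations check out. The paper does not actually give its own proof here --- it merely cites \cite[Cl.~C.2]{ekera-success} and remarks that the claim is standard for any two-dimensional lattice --- so your argument is essentially what that reference contains and what the paper defers to.
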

\begin{proof}
  This is a standard claim.
  Please see~\cite[Cl.~C.2]{ekera-success} for the proof.
  Note that this claim holds for any two-dimensional lattice, not only for~$\mathcal L^\tau(j)$.
\end{proof}

We are now ready to introduce the notion of~$\mathcal L^\tau(j)$ being $t$-balanced, and to bound the probability of $\mathcal L^\tau(j)$ not being $t$-balanced:

\begin{definition}
  For $t \in [0, m) \cap \mathbb Z$ and $\tau \in [0, \ell] \cap \mathbb Z$, the lattice $\mathcal L^\tau(j)$ is $t$-balanced if $\mathcal L^\tau(j)$ has a shortest non-zero vector of norm $\lambda_1 \ge 2^{m-t}$.
\end{definition}

\begin{lemma}
  \label{lemma:bound-t-balanced-Lj}
  The probability that~$\mathcal L^\tau(j)$ is not $t$-balanced is at most $2^{\Delta - 2(t - 1) - \tau}$.
\end{lemma}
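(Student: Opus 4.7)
The plan is to reduce to counting. By Cl.~\ref{claim:uniform-j}, $j$ is uniform on $[0, 2^{m+\ell}) \cap \mathbb Z$, so it suffices to upper bound the number of such $j$ for which $\mathcal L^\tau(j)$ contains a non-zero vector of norm strictly less than $2^{m-t}$, and then divide by $2^{m+\ell} = 2^{2m-\Delta}$.

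First I would parameterize a general non-zero lattice vector as $(uj + v \cdot 2^{m+\ell},\, u \cdot 2^\tau)$ for integers $(u, v) \neq (0, 0)$. A norm bound of $2^{m-t}$ on this vector forces $|u| \cdot 2^\tau < 2^{m-t}$, i.e., $|u| < 2^{m-t-\tau}$, and $|uj + v \cdot 2^{m+\ell}| < 2^{m-t}$; equivalently, $uj \equiv y \pmod{2^{m+\ell}}$ for some $y \in \mathbb Z$ with $|y| < 2^{m-t}$. The case $u = 0$ may be discarded, since then the first coordinate is $v \cdot 2^{m+\ell}$ with $v \neq 0$, of magnitude at least $2^{m+\ell} \geq 2^{m-t}$, which is not strictly less than $2^{m-t}$.

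The main step is then to fix a non-zero $u$ with $|u| < 2^{m-t-\tau}$ and count the $j \in [0, 2^{m+\ell}) \cap \mathbb Z$ for which $uj \bmod 2^{m+\ell}$, taken in the symmetric range, has absolute value less than $2^{m-t}$. Letting $g = \gcd(u, 2^{m+\ell})$, the map $j \mapsto uj \bmod 2^{m+\ell}$ hits each multiple of $g$ in the range exactly $g$ times as $j$ varies. Since $\tau \geq 0$ implies $g \leq |u| < 2^{m-t-\tau} \leq 2^{m-t}$, the number of multiples of $g$ in the open interval $(-2^{m-t}, 2^{m-t})$ is at most $2 \cdot 2^{m-t}/g$, so the count of bad $j$ for this $u$ is at most $g \cdot 2 \cdot 2^{m-t}/g = 2^{m-t+1}$, independent of $g$.

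Finally, summing via a union bound over the fewer than $2 \cdot 2^{m-t-\tau}$ admissible non-zero values of $u$ gives at most $2^{2(m-t) - \tau + 2}$ bad values of $j$, and dividing by $2^{2m-\Delta}$ yields the claimed bound $2^{\Delta - 2(t-1) - \tau}$. The main subtlety is securing the $g$-independent bound $2^{m-t+1}$ per $u$: one must exploit the precise cancellation between the $g$-fold multiplicity with which each residue class is attained and the $1/g$-density of multiples of $g$ inside the short window, using the fact that $g$ divides $2^{m-t}$ (both being powers of two, with $g \leq 2^{m-t}$). Any slack here would introduce a spurious additive term of order $g$ that inflates the final exponent.
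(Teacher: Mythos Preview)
Your proposal is correct and follows essentially the same approach as the paper's proof: parameterize the lattice vectors, bound $|u|$ (the paper calls it $\omega$) via the second coordinate, count bad $j$ for each fixed $u$ using the structure of $j \mapsto uj \bmod 2^{m+\ell}$, take a union bound, and divide by $2^{m+\ell}$ using Cl.~\ref{claim:uniform-j}. The only cosmetic difference is that the paper writes the common factor as $2^\kappa$ (the largest power of two dividing $\omega$) and argues directly that the worst case is $\kappa = 0$, whereas you phrase it via $g = \gcd(u, 2^{m+\ell})$ and make the $g$-cancellation explicit; both routes yield the same per-$u$ bound of at most $2 \cdot 2^{m-t} - 1$ bad values of $j$.
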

\begin{proof}
  All vectors in~$\mathcal L^\tau(j)$ are of the form $(\omega j + 2^{m+\ell} z, 2^\tau \omega)$ for $\omega, z \in \mathbb Z$.
  Selecting~$z$ to minimize the absolute value of the first component yields $(\{ \omega j \}_{2^{m+\ell}}, 2^\tau \omega)$.

  For each $\omega \in ((-2^{m-t-\tau}, 2^{m-t-\tau}) \, \backslash \, \{ 0 \}) \cap \mathbb Z$, there are at most $2 \cdot 2^{m-t} - 1$ values of~$j$ such that $|\{ \omega j \}_{2^{m+\ell}}| < 2^{m-t}$.
  To see this, first note that $\omega \neq 0$ since~$\vec s_1$ is a shortest \emph{non-zero} vector.
  Second, note that as~$j$ runs through all integers on $[0, 2^{m+\ell})$, the expression~$\{ \omega j \}_{2^{m+\ell}}$ assumes (in some order) the values~$2^\kappa u$ for $u \in [-2^{m+\ell-\kappa-1}, 2^{m+\ell-\kappa-1}) \cap \mathbb Z$ a total of~$2^{\kappa}$ times, for~$2^{\kappa}$ the greatest power of two that divides~$\omega$.
  The worst case occurs when $\kappa = 0$, in which case each of the $2 \cdot 2^{m-t} - 1$ values in the range $(-2^{m-t}, 2^{m-t}) \cap \mathbb Z$ are assumed a single time.

  The number of~$j$ for which~$\mathcal L^\tau(j)$ has a shortest non-zero vector $\vec s_1 = (s_{1,1}, s_{1,2})$ such that $|\, s_{1,1} \,| < 2^{m-t}$ and $|\, s_{1,2} \,| < 2^{m-t}$ is hence upper bounded by
  \begin{align*}
    \max(0, \, 2 \cdot (2^{m - t - \tau} - 1)) \cdot (2 \cdot 2^{m - t} - 1)
    &<
    2^{m - t - \tau + 1} \cdot 2^{m - t + 1}
    =
    2^{2(m - t + 1) - \tau}.
  \end{align*}

  Since~$j$ is uniformly distributed on $[0, 2^{m+\ell}) \cap \mathbb Z$ by Cl.~\ref{claim:uniform-j}, see Sect.~\ref{sect:bound-tau-good-pair-supporting-claims}, where we recall that $\ell = m - \Delta$, the probability of observing~$j$ that is such that ${\lambda_1 = |\, \vec s_1 \,| < 2^{m-t}}$ is hence at most
  \begin{align*}
    2^{2(m - t + 1) - \tau} / 2^{m + \ell}
    &=
    2^{2m - 2t + 2 - \tau - 2m + \Delta}
    =
    2^{\Delta - 2(t - 1) - \tau},
  \end{align*}
  and so the lemma follows.
\end{proof}

\section{Bounding the enumeration complexity}
\label{sect:bound-enumeration-complexity}
We are now ready to bound the enumeration complexity when~$j$ is such that~$\mathcal L^\tau(j)$ is $t$-balanced, and when~$k$ given~$j$ is such that~$(j, k)$ is a $\tau$-good pair.

\subsection{Solving via a generalization of Shanks' algorithm}
\label{sect:bound-enumeration-complexity-shanks}
To start off, we explain in this section how to deterministically perform the enumeration by essentially generalizing Shanks' baby-step giant-step algorithm~\cite{shanks}, which leverages meet-in-the-middle time-memory tradeoff techniques, to two dimensions.

\begin{lemma}
  \label{lemma:bound-enumeration}
  Suppose that~$j$ is such that~$\mathcal L^\tau(j)$ is $t$-balanced, and that~$k$ given~$j$ is such that~$(j, k)$ is a $\tau$-good pair.
  Let ${\Nspace = 2^{\Delta+\tau+1} + 2^{\tau+t+2} + 2}$, and let~$c$ be a positive integer constant.
  Then at most $2^3 c \sqrt{\Nspace}$ group operations in $\langle g \rangle$ have to be performed to recover~$d$ from~$(j, k)$ by enumerating vectors in~$\mathcal L^\tau(j)$.
  This holds assuming that a few group elements are first pre-computed, and that there is space to store at most~$2^3 \sqrt{\Nspace} / c + 3$ integers in a lookup table.
\end{lemma}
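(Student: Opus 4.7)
My plan is to split the proof into three steps: (i)~compute a Lagrange-reduced basis of $\mathcal L^\tau(j)$ and recast the recovery of $d$ as enumeration over two integer lattice coefficients; (ii)~bound the number of enumeration candidates by~$N$; (iii)~realize the enumeration by a baby-step giant-step (meet-in-the-middle) algorithm with tradeoff parameter~$c$.

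For step~(i), I would run Lagrange's algorithm on the basis $\{(j,2^\tau),(2^{m+\ell},0)\}$ to obtain a reduced basis $(\vec s_1,\vec s_2)$ of $\mathcal L^\tau(j)$; this classical step does not incur group operations. Every vector in $\mathcal L^\tau(j)$ has its second component divisible by $2^\tau$, so writing $\vec u = c_1 \vec s_1 + c_2 \vec s_2$ expresses the candidate logarithm as $d = c_1 \tilde s_{1,2} + c_2 \tilde s_{2,2}$, where $\tilde s_{i,2} = s_{i,2}/2^\tau \in \mathbb Z$ depends only on the reduced basis. Decomposing $\vec u - \vec v$ along $\vec s_1$ and the orthogonal component $\vec s_2^{\perp}$ and using the $\tau$-good bound $|\vec u - \vec v| < 2^{m+\tau}\sqrt 2$ then yields $|c_2 - \beta| \le 2^{m+\tau}\sqrt 2 / \lambda_2^{\perp}$ for some real $\beta$ determined by $\vec v$, together with, for each fixed $c_2$, an interval of length $2R(c_2)/\lambda_1$ for $c_1$, where $R(c_2)^2 = 2\cdot 2^{2(m+\tau)} - (c_2-\beta)^2(\lambda_2^{\perp})^2$.

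For step~(ii), summing the interval lengths in $c_1$ over admissible integer $c_2$ and comparing $\sum R(c_2)$ with the integral $\int_{-r}^{r}\sqrt{r^2 - y^2}\,dy = \pi r^2/2$ for $r = 2^{m+\tau}\sqrt 2$ produces an ``area'' contribution of $2\pi \cdot 2^{2(m+\tau)}/(\lambda_1\lambda_2^{\perp}) = 2\pi \cdot 2^{\Delta+\tau}$ by Cl.~\ref{claim:lattice-det}, plus a ``boundary'' contribution bounded by the admissible range of $c_2$, which via Cl.~\ref{claim:lattice-ineq} and the $t$-balanced hypothesis $\lambda_1 \ge 2^{m-t}$ is $O(2^{\tau+t})$. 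Pinning down the constants should yield at most $N = 2^{\Delta+\tau+1} + 2^{\tau+t+2} + 2$ candidates. For step~(iii), using $d = c_1 \tilde s_{1,2} + c_2 \tilde s_{2,2}$ and after pre-computing a small fixed number of group elements (e.g.\ $g^{\tilde s_{1,2}}$, $g^{\tilde s_{2,2}}$ and their inverses), I would apply the standard BSGS tradeoff: tabulate the $g$-powers of a baby-step loop of length $\lceil 2^3\sqrt N/c\rceil$, built incrementally with one group operation per step, then iterate a giant-step loop of length at most $2^3 c\sqrt N$, computing $x\cdot g^{-d}$ incrementally and probing the table; each hit yields a candidate $d$ that is verified by a single equality check. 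This delivers the stated $2^3 c\sqrt N$ group-operation and $2^3\sqrt N/c + 3$ memory bounds.

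\textbf{Main obstacle.} The delicate part is step~(iii): flattening the 2D enumeration --- in which the $c_1$-range depends on $c_2$ and the candidate ``ellipse'' may be quite elongated --- into two tightly balanced 1D loops, and arranging the split so that both the time and memory bounds hold uniformly across all regimes of $\Delta$, $\tau$ and $t$. Absorbing boundary rounding and the handful of pre-computed constants into the ``$+3$'' in the memory bound, together with filtering out spurious table hits that arise from $(c_1, c_2)$ outside the ellipse, are the additional technical details I would need to handle.
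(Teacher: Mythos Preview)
Your overall decomposition matches the paper's: Lagrange-reduce the basis, bound the enumeration region via $\lambda_1\lambda_2^{\perp}=2^{m+\ell+\tau}$ and $\lambda_1\ge 2^{m-t}$, then meet-in-the-middle. The differences lie in how steps~(ii) and~(iii) are actually carried out, and step~(iii) is where your sketch has a genuine gap.

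For step~(ii), the paper does not count lattice points in the disc. It first runs Babai's nearest plane on~$\vec v$ to obtain an anchor $\vec o\in\mathcal L^\tau(j)$, then enumerates the \emph{rectangle} $\vec o+(m_1-\lfloor m_2\mu\rceil)\,\vec s_1+m_2\vec s_2$ with $m_1\in[-B_1,B_1]$, $m_2\in[-B_2,B_2]$, where $B_1=\lfloor 2^{m+\tau}\sqrt2/\lambda_1+1\rfloor$ and $B_2=\lfloor 2^{m+\tau}\sqrt2/\lambda_2^{\perp}+1/2\rfloor$. The quantity bounded by~$N$ is the product $B_1(B_2+1)$, not the candidate count; this product form is precisely what makes the MITM split clean and what yields the exact leading constant $2^{\Delta+\tau+1}$. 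Your area integral gives $2\pi\cdot 2^{\Delta+\tau}$ for the leading term, which exceeds $2^{\Delta+\tau+1}$, so ``pinning down the constants'' to the stated~$N$ does not go through along that route.

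For step~(iii) --- the point you correctly flag as the obstacle --- the paper does \emph{not} flatten to one dimension. It keeps the two coordinates separate: the table stores $g^{n\cdot i\cdot s_1}$ for $i\in[-\lceil B_1/n\rceil,\lceil B_1/n\rceil]$ with $n=c\,\lfloor\sqrt{B_1/(B_2+1)}\rceil$, and the search walks over $m_2\in[-B_2,B_2]$ together with a residual offset $i\in[0,n)$ in the $\vec s_1$-direction. Incremental stepping in~$m_2$ is made a single group operation by pre-computing the four products $g_2^{\pm1}g_1^{\pm1}$ and using whichever one absorbs the drift correction $\lfloor m_2\mu\rceil-\lfloor(m_2-1)\mu\rceil\in\{-1,0,1\}$ (this range holds because $|\mu|\le1/2$). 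The total cost is then $2B_1/n+2(B_2+1)n$, and a short case split on $B_1\gtrless B_2$, using that $2B_1>B_2$ (which follows from $\lambda_2^{\perp}\ge\sqrt3\,\lambda_1/2$), bounds this by $2^3c\sqrt{B_1(B_2+1)}\le 2^3c\sqrt N$. Your proposed flattening would destroy the ability to step with one group operation, so this structured two-dimensional MITM with drift compensation is the missing ingredient in your plan.
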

\begin{proof}
  Recall that since~$(j, k)$ is $\tau$-good, it holds that $|\, \vec u - \vec v \,| < 2^{m+\tau} \sqrt{2}$, where $\vec v = (\{-2^m k\}_{2^{m+\ell}}, 0) \in \mathbb Z^2$, and~$\vec u$ is an unknown vector that yields~$d$, see Sect.~\ref{sect:classical-post-processing}.

  Let~$\vec o$ be the vector in~$\mathcal L^\tau(j)$ that is yielded by Babai's nearest plane algorithm upon input of~$\vec v$ and $(\vec s_1, \vec s_2)$.
  Then $\vec o - \vec v = \delta_1 \vec s_1 + \delta_2 \vec s_2^\perp$ where $|\, \delta_1 \,|, \, |\, \delta_2 \,| \le \frac{1}{2}$.

  To find~$\vec u$, it hence suffices to enumerate all vectors $\vec u'(m_1, m_2)$ of the form
  \begin{align*}
    \vec u'(m_1, m_2) = \vec o + (m_1 - \round{m_2 \cdot \mu}) \, \vec s_1 + m_2 \vec s_2
  \end{align*}
  for $m_1 \in [-B_1, B_1] \cap \mathbb Z$ and $m_2 \in [-B_2, B_2] \cap \mathbb Z$, respectively, where
  \begin{align*}
    B_1 = \lfloor 2^{m+\tau} \sqrt{2} / \lambda_1 + 1 \rfloor
    \quad \text{ and } \quad
    B_2 = \lfloor 2^{m+\tau} \sqrt{2} / \lambda_2^{\perp} + 1/2 \rfloor.
  \end{align*}

  To see this, note that there are at most ${2 B_2 + 1}$ values of~$m_2$ to explore to find a point ${\vec o + m_2 \vec s_2}$ on the line parallel to~$\vec s_1$ on which the vector~${\vec u \in \mathcal L^\tau(j)}$ lies.
  There are at most ${2 B_1 + 1}$ vectors to explore on each of these lines to find~$\vec u$.
  Note that the ``off-drift''\label{page:off-drift} in the direction of~$\vec s_1$ when adding ${m_2 \vec s_2}$ to~$\vec o$ is compensated for by at the same time subtracting ${\round{m_2 \cdot \mu} \, \vec s_1}$.
  Furthermore, note that
  \begin{align*}
    |\, \vec u'(m_1, m_2) - \vec v \,|^2
    &=
    |\, \vec o + (m_1 - \round{m_2 \cdot \mu}) \, \vec s_1 + m_2 \vec s_2 - \vec v \,|^2 \\
    &=
    |\, \delta_1 \vec s_1 + \delta_2 \vec s_2^\perp + (m_1 - \round{m_2 \cdot \mu}) \, \vec s_1 + m_2 (\vec s_2^{\parallel} + \vec s_2^{\perp}) \,|^2 \\
    &=
    |\, (m_1 + \delta_1 + m_2 \cdot \mu - \round{m_2 \cdot \mu}) \, \vec s_1 \,|^2 + |\, (m_2 + \delta_2) \, \vec s_2^\perp \,|^2
  \end{align*}
  as $\vec s_2^{\parallel} = \mu \vec s_1$, where it suffices to let $B_2 = \floor{2^{m+\tau} \sqrt{2} / \lambda_2^{\perp} + 1/2}$ since
  \begin{align*}
    (|\, m_2 \,| - 1/2) \, \lambda_2^{\perp}
    &\le
    (|\, m_2 \,| - \underbrace{|\, \delta_2 \,|}_{\le \frac{1}{2}}) \, \lambda_2^{\perp}
    \le
    |\, (m_2 + \delta_2) \, \vec s_2^{\perp} \,|
    <
    2^{m+\tau} \sqrt{2}.
  \end{align*}

  Analogously, it suffices to let $B_1 = \floor{2^{m+\tau} \sqrt{2} / \lambda_1 + 1}$ since
  \begin{align*}
    (|\, m_1 \,| - 1) \, \lambda_1
    &\le
    (|\, m_1 \,| - \underbrace{|\, \delta_1 \,|}_{\le \frac{1}{2}} - \underbrace{|\, m_2 \cdot \mu - \round{m_2 \cdot \mu} |}_{\le \frac{1}{2}}) \, \lambda_1 \\
    &\le
    |\, (m_1 + \delta_1 + m_2 \cdot \mu - \round{m_2 \cdot \mu}) \, \vec s_1 \,|
    <
    2^{m+\tau} \sqrt{2}.
  \end{align*}

  By Cl.~\ref{claim:bounds-B1-B2} and Lem.~\ref{lemma:bound-enumeration-generic} below, at most $2^3 c \sqrt{B_1 (B_2 + 1)}$ group operations in~$\langle g \rangle$ have to be performed to enumerate the aforementioned vectors in~$\mathcal L^\tau(j)$, and to test if the last component yields~$d$.
  This holds assuming that a few group elements are first pre-computed, and that there is space to store at most $2^3 \sqrt{B_1 (B_2 + 1)} \big/ c + 3$ integers in a lookup table.

  It furthermore holds that
  \begin{align*}
      B_1 (B_2 + 1)
      &=
      \lfloor 2^{m+\tau} \sqrt{2} / \lambda_1 + 1 \rfloor \cdot ( \lfloor 2^{m+\tau} \sqrt{2} / \lambda_2^{\perp} + 1/2 \rfloor + 1 ) \\
      &\le
      (2^{m+\tau} \sqrt{2} / \lambda_1 + 1) \cdot ( 2^{m+\tau} \sqrt{2} / \lambda_2^{\perp} + 3/2 ) \\
      &=
      2^{2(m+\tau)+1} / (\lambda_1 \lambda_2^{\perp}) + 2^{m+\tau} \sqrt{2} \, (3 / (2 \lambda_1) + 1 / \lambda_2^{\perp}) + 3/2 \\
      &\le
      2^{2(m+\tau)+1} / 2^{m+\ell+\tau} + 2^{\tau+t} \underbrace{\sqrt{2} \, (3 / 2 + 2 / \sqrt{3})}_{< \, 2^2} + \underbrace{3/2}_{< \, 2} \\
      &<
      2^{\Delta+\tau+1} + 2^{\tau+t+2} + 2 = \Nspace
  \end{align*}
  where we have used that ${\lambda_1 \ge 2^{m-t}}$, that $\lambda_2^{\perp} \ge \sqrt{3} \, \lambda_2 / 2 \ge \sqrt{3} \, \lambda_1 / 2$ by Cl.~\ref{claim:lattice-ineq}, and that ${\lambda_1 \lambda_2^{\perp} = 2^{m+\ell+\tau}}$ by Cl.~\ref{claim:lattice-det}, and so the lemma follows.
\end{proof}

\begin{claim}
  \label{claim:bounds-B1-B2}
  It holds that $B_1 \ge 1$ and $2 B_1 > B_2 \ge 0$ when
  \begin{align*}
    B_1 &= \lfloor 2^{m+\tau} \sqrt{2} / \lambda_1 + 1 \rfloor,
    &
    B_2 &= \lfloor 2^{m+\tau} \sqrt{2} / \lambda_2^{\perp} + 1/2 \rfloor.
  \end{align*}
\end{claim}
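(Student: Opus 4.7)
The plan is to dispose of the two trivial inequalities first, and then spend the effort on $2 B_1 > B_2$, which is the only part that requires real work.

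For $B_1 \ge 1$ and $B_2 \ge 0$, the arguments are one-liners: the quantities $2^{m+\tau}\sqrt{2}/\lambda_1$ and $2^{m+\tau}\sqrt{2}/\lambda_2^{\perp}$ are both non-negative, so that $B_1 \ge \lfloor 1 \rfloor = 1$ and $B_2 \ge \lfloor 1/2 \rfloor = 0$ are immediate from the definitions.

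For $2 B_1 > B_2$, my plan is to express both sides in terms of a common quantity $u = 2^{m+\tau}\sqrt{2}/\lambda_1$ and then appeal to Cls.~\ref{claim:lattice-det}--\ref{claim:lattice-ineq}. Since $1$ is an integer, $B_1 = \lfloor u \rfloor + 1 > u$, giving $2 B_1 > 2u$. On the other side, Cl.~\ref{claim:lattice-ineq} provides $\lambda_2^{\perp} \ge \sqrt{3}\,\lambda_1/2$, so $B_2 \le 2u/\sqrt{3} + 1/2$. It then suffices to verify $2u > 2u/\sqrt{3} + 1/2$, which rearranges to the scalar condition $u > (3+\sqrt{3})/8$.

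The last step is to secure a uniform lower bound on $u$. Here I would combine Cl.~\ref{claim:lattice-det} ($\lambda_1 \lambda_2^{\perp} = 2^{m+\ell+\tau}$) with Cl.~\ref{claim:lattice-ineq} to bound $\lambda_1$ above by $2^{(m+\ell+\tau+1)/2}/3^{1/4}$, and then substitute $\ell = m - \Delta$ to conclude $u \ge 3^{1/4} \cdot 2^{(\tau + \Delta)/2} \ge 3^{1/4} \approx 1.316$, which comfortably exceeds $(3+\sqrt{3})/8 \approx 0.592$.

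The main obstacle I anticipate is purely arithmetic bookkeeping: each inequality surrenders a little slack, and one must confirm that the factor of two on the left still dominates the $2/\sqrt{3}$ blow-up on the right together with the $1/2$ shift. In the worst case $\tau = \Delta = 0$ the lower bound on $u$ is only $3^{1/4}$, so the margin is not enormous, but it is enough for the strict inequality to go through.
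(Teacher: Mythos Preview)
Your argument is correct, but it takes a somewhat heavier route than the paper's. The paper bounds the ratio $B_2/B_1$ directly: from $B_1 > 2^{m+\tau}\sqrt{2}/\lambda_1$ and $B_2 \le 2^{m+\tau}\sqrt{2}/\lambda_2^{\perp} + 1/2$ together with $B_1 \ge 1$ it obtains
\[
\frac{B_2}{B_1} < \frac{\lambda_1}{\lambda_2^{\perp}} + \frac{1}{2} \le \frac{2}{\sqrt{3}} + \frac{1}{2} < 2,
\]
using only Cl.~\ref{claim:lattice-ineq}. Your approach instead bounds $2B_1$ and $B_2$ separately in terms of $u = 2^{m+\tau}\sqrt{2}/\lambda_1$, which leaves the residual scalar inequality $2u > 2u/\sqrt{3} + 1/2$ to be verified; this in turn forces you to establish a uniform lower bound $u \ge 3^{1/4}$, and for that you must additionally invoke the determinant identity Cl.~\ref{claim:lattice-det} and the relation $\ell = m - \Delta$. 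The paper's ratio trick absorbs the $+1/2$ shift against the already-established $B_1 \ge 1$, so it never needs to know how large $u$ is and hence never touches Cl.~\ref{claim:lattice-det} or the parameters $\ell,\Delta$. Both proofs are short and valid; the paper's is a bit more self-contained.
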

\begin{proof}
  Trivially $B_2 \ge 0$ and $B_1 = \lfloor 2^{m+\tau} \sqrt{2} / \lambda_1 \rfloor + 1 \ge 1$.
  Furthermore,
  \begin{align*}
    B_1
    &=
    \lfloor 2^{m+\tau} \sqrt{2} / \lambda_1 + 1 \rfloor
    >
    2^{m+\tau} \sqrt{2} / \lambda_1, \\
    B_2
    &=
    \lfloor 2^{m+\tau} \sqrt{2} / \lambda_2^{\perp} + 1/2 \rfloor
    \le
    2^{m+\tau} \sqrt{2} / \lambda_2^{\perp} + 1/2,
  \end{align*}
  where we have used that $f \ge \floor{f} > f - 1$ for $f \in \mathbb R$.
  Hence, it holds that
  \begin{align*}
    \frac{B_2}{B_1}
    &\le
    \frac{2^{m+\tau} \sqrt{2} / \lambda_2^{\perp}}{B_1}
    +
    \frac{1}{2 B_1}
    <
    \frac{2^{m+\tau} \sqrt{2} / \lambda_2^{\perp}}{2^{m+\tau} \sqrt{2} / \lambda_1}
    +
    \frac{1}{2}
    =
    \frac{\lambda_1}{\lambda_2^{\perp}} + \frac{1}{2}
    \le
    \frac{2}{\sqrt{3}} + \frac{1}{2} < 2
  \end{align*}
  since $\lambda_2^{\perp} \ge \sqrt{3} \, \lambda_2 / 2 \ge \sqrt{3} \, \lambda_1 / 2$, see Cl.~\ref{claim:lattice-ineq}, and so the claim follows.
\end{proof}

\begin{lemma}
  \label{lemma:bound-enumeration-generic}
  Let $\vec o \in \mathcal L^\tau(j)$, let $B_1 \ge 1$ and~$B_2 \ge 0$ be integers such that $2 B_1 > B_2$, and let~$c$ be a positive integer constant.
  Then, to enumerate the ${(2 B_1 + 1) (2 B_2 + 1)}$ vectors given by
  \begin{align*}
    {\vec o + (m_1 - \round{m_2 \cdot \mu}) \, \vec s_1 + m_2 \vec s_2}
  \end{align*}
  where
  ${m_1 \in [-B_1, B_1] \cap \mathbb Z}$
  and
  ${m_2 \in [-B_2, B_2] \cap \mathbb Z}$,
  and to test if $x = g^d$ for~$2^\tau d$ the last component of the vector, at most ${2^3 c \sqrt{B_1 (B_2 + 1)}}$ group operations in~$\langle g \rangle$ have to be performed.
  This holds assuming that a few group elements are first pre-computed, and that there is space to store at most ${2^3 \sqrt{B_1 (B_2 + 1)} \big/ c + 3}$ integers in a lookup table.
\end{lemma}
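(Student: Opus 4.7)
The plan is to reduce the enumerate-and-test task to a one-dimensional discrete-logarithm-in-an-interval problem parameterized by~$m_2$, and then to solve that problem by Shanks' baby-step giant-step (BSGS) method with a tunable space/time tradeoff controlled by~$c$.  For the reduction, note that since $\vec o, \vec s_1, \vec s_2 \in \mathcal L^\tau(j)$, every last component is divisible by~$2^\tau$; write these last components as $2^\tau w_o$, $2^\tau w_1$ and~$2^\tau w_2$.  The candidate logarithm at $(m_1, m_2)$ is then
\begin{align*}
  d'(m_1, m_2) = w_o + (m_1 - \round{m_2 \mu}) w_1 + m_2 w_2,
\end{align*}
so, setting $G = g^{w_1}$ and $H = g^{w_2}$, the test $x = g^{d'(m_1, m_2)}$ is equivalent to $G^{m_1} = y(m_2)$ where
\begin{align*}
  y(m_2) = x \cdot g^{-w_o} \cdot G^{\round{m_2 \mu}} \cdot H^{-m_2}.
\end{align*}
Hence for each fixed~$m_2$ it suffices to decide whether $y(m_2) \in \{G^{m_1} : m_1 \in [-B_1, B_1] \cap \mathbb Z\}$ and, if so, recover~$m_1$.

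Fix a positive integer~$T$ to be tuned.  Pre-compute the baby-step lookup table $\{(G^i, i) : 0 \le i < T\}$, supporting unit-time retrieval of~$i$ from~$G^i$, as well as a fixed small number of auxiliary group elements such as $G^T$, $G^{-T}$, $g^{-w_o}$, $H^{-1}$.  Then iterate $m_2$ from~$-B_2$ to~$B_2$, performing at each step: (i) an update of $y(m_2)$ from $y(m_2 - 1)$ in~$O(1)$ group operations, which is possible because $|\mu| \le 1/2$ by Cl.~\ref{claim:lattice-ineq}, so $\round{m_2 \mu}$ changes by at most one per unit increment of~$m_2$; and (ii) for $j = 0, 1, \ldots, \ceil{(2 B_1 + 1) / T}$, the incremental computation (by successive multiplication by~$G^{-T}$) of $y(m_2) \cdot G^{-(jT - B_1)}$, followed by a lookup in the baby-step table.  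Any hit at~$(i, j)$ yields a candidate~$m_1 = i + jT - B_1$ and a candidate~$d'$ which is then verified against~$x$ to rule out incidental collisions.

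For the counting, the baby-step phase costs~$T$ group operations and~$T$ table entries, while the search phase costs $(2 B_2 + 1) \cdot (\ceil{(2 B_1 + 1) / T} + O(1))$ group operations.  Choosing~$T$ of order $\sqrt{B_1 (B_2 + 1)} / c$ balances the two phases.  The bounds $2^3 c \sqrt{B_1 (B_2+1)}$ on operations and $2^3 \sqrt{B_1 (B_2+1)} / c + 3$ on storage then follow from routine estimation, using $2 B_1 + 1 \le 3 B_1$ (valid since $B_1 \ge 1$) and $2 B_2 + 1 \le 2 (B_2 + 1)$, together with the hypothesis $2 B_1 > B_2$, which allows the additive $O(B_2 + 1)$ overhead from iterating over~$m_2$ to be absorbed into the main term $\sqrt{B_1 (B_2 + 1)}$; the additive~$+3$ accommodates the handful of auxiliary elements held outside the table.

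The conceptual content is concentrated in the reduction and in the observation that $y(m_2)$ updates cost~$O(1)$; everything thereafter is bookkeeping.  The hardest part I expect is picking an \emph{integer} value of~$T$ (as a function of~$B_1$, $B_2$, $c$) that produces the stated explicit prefactor~$2^3$ in both bounds simultaneously for every admissible pair $(B_1, B_2)$, and verifying that the $(2 B_2 + 1)$ per-$m_2$ overheads do not inflate this prefactor~--- it is precisely at this point that the hypothesis $2 B_1 > B_2$ is used.
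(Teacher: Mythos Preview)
Your approach is correct and essentially the same as the paper's: both reduce the two-dimensional enumeration to a baby-step giant-step search in the $m_1$-direction, iterated over~$m_2$, with the key observation that $|\mu| \le 1/2$ makes the per-$m_2$ update cost~$O(1)$ and with the hypothesis $2B_1 > B_2$ used to absorb the $O(B_2)$ iteration overhead into the main term. The only structural difference is that the paper stores the \emph{giant} steps in the table (step size $n = c\,\round{\sqrt{B_1/(B_2+1)}}$, table of size $2\lceil B_1/n\rceil + 1$) and takes baby steps of size~$1$ in the search phase, whereas you store the baby steps (table of size~$T \approx \sqrt{B_1(B_2+1)}/c$) and take giant steps of size~$T$ in the search; this is a cosmetic swap that yields the same asymptotics. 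The paper carries out the explicit constant-tracking you flag as ``the hardest part'' by a two-case split ($B_1 \ge B_2$ versus $B_1 < B_2$, the latter forcing $n = c$), controlling the rounding error in~$n$ via a parameter $\delta' \in (-1/\sqrt{2}, 1/\sqrt{2}]$; your anticipation of where the difficulty lies and where $2B_1 > B_2$ enters is accurate.
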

\begin{proof}
  Let $\vec o = \nu_1 \vec s_1 + \nu_2 \vec s_2$ for $\nu_1, \nu_2 \in \mathbb Z$.
  Let $\vec s_1 = (s_{1,1}, s_{1,2})$, $\vec s_2 = (s_{2,1}, s_{2,2})$.
  Let $s_1 = s_{1,2} / 2^\tau$ and $s_2 = s_{2,2} / 2^\tau$.
  Note that~$s_{1,2}$ and~$s_{2,2}$ are both divisible by~$2^\tau$ by design, as a consequence of how the basis for $\mathcal L^\tau(j)$ is setup, so $s_1, s_2 \in \mathbb Z$.

  Let $n = c \, \lfloor \sqrt{B_1 / (B_2 + 1)} \rceil$ for reasons that will be further elaborated on below, and perform a meet-in-the-middle search in two stages as outlined below:

  First compute~$g^{n \cdot i \cdot s_{1}}$ as~$i$ runs all over $[-{\Nspace}_1, {\Nspace}_1] \cap \mathbb Z$ for ${\Nspace}_1 = \ceil{B_1 / n}$.
  Insert the resulting ${2 {\Nspace}_1 + 1} = {2 \ceil{B_1 / n} + 1}$ integers~$i$ into a lookup table~$T$ indexed by~$g^{n \cdot i \cdot s_{1}}$.
  Then, compute~$g^{(\nu_1 + i - \round{j \cdot \mu}) \cdot s_{1} + (\nu_2 + j) \cdot s_{2}} \cdot x^{-1}$ for all combinations of~$i$ and~$j$, as~$i$ runs over $[0, n) \cap \mathbb Z$ and~$j$ over $[-B_2, B_2] \cap \mathbb Z$.
  For each resulting element, check if it indexes an integer~$k$ in~$T$:
  If so, $d = (\nu_1 + i - \round{\, j \cdot \mu} - k \cdot n) s_{1} + (\nu_2 + j) s_{2}$.

  The above two-stage search may be implemented efficiently, so that only
  \begin{align*}
    2 (\ceil{B_1 / n} - 1)
    <
    2 ((B_1 / n + 1) - 1)
    =
    2 B_1 / n
  \end{align*}
  group operations have to be performed in the first stage, and
  \begin{align*}
    2 B_2 + 2 (B_2 + 1) (n - 1)
    &=
    2 ((B_2 + 1)n - 1)
    <
    2 (B_2 + 1) n
  \end{align*}
  in the second stage, provided that the elements ${g_1 = g^{s_{1}}}$, ${g_1^{-1}}$, ${s = g_1^n}$, ${s^{-1}}$, ${g_2 = g^{s_{2}}}$, ${g_2^{-1}}$ and ${w = g_1^{\nu_1} \cdot g_2^{\nu_2} \cdot x^{-1}}$, and the combinations ${g_2 \cdot g_1}$, ${g_2 \cdot g_1^{-1}}$, ${g_2^{-1} \cdot g_1}$ and ${g_2^{-1} \cdot g_1^{-1}}$, are pre-computed.
  For the full details, see Alg.~\ref{alg:recover-d} in App.~\ref{app:algorithms}.
  Above, we picked $n = c \, \lfloor \sqrt{B_1 / (B_2 + 1)} \rceil$ to have $B_1 / n \approx (B_2 + 1) n$ when $c = 1$.
  When $c > 1$, we store a factor~$\sim c$ fewer integers in~$T$, and perform a factor~$\sim c$ less work in the first stage, at the expense of performing a factor~$c$ more work in the second stage.

  \vspace{1ex}

  \emph{Case I:} Suppose that $B_1 \ge B_2$: Then $B_1 \ge B_2 \ge 0$ and furthermore $B_1 \ge 1$.
  The number of group operations performed in total is then at most
  \begin{align*}
    2 B_1 / n + 2 (B_2 + 1) n
    &=
    \frac{2 B_1}{c \, \lfloor \sqrt{B_1 / (B_2 + 1)} \rceil} + 2c (B_2 + 1) \lfloor \sqrt{B_1 / (B_2 + 1)} \rceil \\
    &=
    \frac{2 B_1}{c (\sqrt{B_1 / (B_2 + 1)} + \delta)} + 2c (B_2 + 1) (\sqrt{B_1 / (B_2 + 1)} + \delta) \\
    &=
    \frac{2 B_1}{c \sqrt{B_1 / (B_2 + 1)}(1 + \delta')} + 2c (B_2 + 1) \sqrt{B_1 / (B_2 + 1)}(1 + \delta') \\
    &=
    2 \sqrt{B_1 (B_2 + 1)} \underbrace{\left( \frac{1}{c(1 + \delta')} + c(1 + \delta') \right)}_{< 4c}
    <
    2^3 c \sqrt{B_1 (B_2 + 1)},
  \end{align*}
  for some $\delta \in (-1/2, 1/2]$ and $\delta' = \delta / \sqrt{B_1 / (B_2 + 1)}$.
  Note that since $B_1 \ge B_2$, we have that $\sqrt{B_1 / (B_2 + 1)} \ge 1/\sqrt{2}$, and hence that $\delta' \in (-1/\sqrt{2}, 1/\sqrt{2}]$.
  In the last step, we maximize the expression by letting $\delta' = -1/\sqrt{2}$.

  As for the space usage, the number of integers stored in~$T$ is
  \begin{align*}
    2 \ceil{B_1 / n} + 1
    &\le
    2 (B_1 / n + 1) + 1
    =
    2 B_1 / n + 3 \\
    &=
    \frac{2 B_1}{c \, \lfloor \sqrt{B_1 / (B_2 + 1)} \rceil} + 3
    =
    \frac{2 B_1}{c (\sqrt{B_1 / (B_2 + 1)} + \delta)} + 3 \\
    &=
    \frac{2 B_1}{c \sqrt{B_1 / (B_2 + 1)}(1 + \delta')} + 3
    =
    \frac{2 \sqrt{B_1 (B_2 + 1)}}{c(1 + \delta')} + 3 \\
    &<
    2^3 \sqrt{B_1 (B_2 + 1)} \big/ c + 3,
  \end{align*}
  where we again maximize the expression by letting $\delta' = -1/\sqrt{2}$ in the last step.

  \vspace{1ex}

  \emph{Case II:} Suppose instead that $B_1 < B_2$:
  Then $1 \le B_1 < B_2 < 2 B_1$, so
  \begin{align*}
    B_1 &= \sqrt{B_1^2} < \sqrt{B_1 B_2} < \sqrt{B_1 (B_2 + 1)}, \\
    B_2 + 1 &= \sqrt{(B_2 + 1)^2} \le \sqrt{2 B_1 (B_2 + 1)},
  \end{align*}
  and $n = c \, \lfloor \sqrt{B_1 / (B_2 + 1)} \rceil = c \ge 1$ since
  \begin{align*}
    1 / \sqrt{3} \le \sqrt{B_1 / (2 B_1 + 1)} < \sqrt{B_1 / (B_2 + 1)} < \sqrt{B_2 / (B_2 + 1)} < 1.
  \end{align*}

  The number of group operations that have to be performed is hence at most
  \begin{align*}
    2 B_1 / n + 2 (B_2 + 1) n
    &=
    2 B_1 / c + 2 (B_2 + 1) c
    \le
    2 c (B_1 + (B_2 + 1)) \\
    &<
    2 c (1 + \sqrt{2}) \sqrt{B_1 (B_2 + 1)}
    <
    2^3 c \sqrt{B_1 (B_2 + 1)},
  \end{align*}
  and the number of integers stored in~$T$ is then
  \begin{align*}
    2 \ceil{B_1 / n} + 1
    &=
    2 \ceil{B_1 / c} + 1
    \le
    2 (B_1 / c + 1) + 1
    =
    2 B_1 / c + 3 \\
    &<
    2 \sqrt{B_1 (B_2 + 1)} / c + 3
    <
    2^3 \sqrt{B_1 (B_2 + 1)} / c + 3.
  \end{align*}

  \vspace{1ex}

  The total number of group operations is hence at most ${2^3 c \sqrt{B_1 (B_2 + 1)}}$ and the number of integers stored in~$T$ is at most ${2^3 \sqrt{B_1 (B_2 + 1)} / c + 3}$ irrespective of whether ${B_1 \ge B_2}$ or ${B_1 < B_2}$, and so the lemma follows.
\end{proof}

\noindent The full enumeration algorithm is described in pseudocode in Alg.~\ref{alg:recover-d} in App.~\ref{app:algorithms-shanks}.

\subsection{Solving via Gaudry--Schost's algorithm}
\label{sect:bound-enumeration-complexity-gaudry-schost}
When solving $(j, k)$ for~$d$ by generalizing Shanks' algorithm~\cite{shanks} to two dimensions as in Lem.~\ref{lemma:bound-enumeration} in the previous section, the space usage is typically a limiting factor when attempting to select large $\Delta$ and/or large~$t$ and~$\tau$.

A good option for reducing the space usage from $O(\sqrt{\Nspace})$~lookup table entries (for $\Nspace$ as in Lem.~\ref{lemma:bound-enumeration}) down to $O(1)$~group elements is to instead solve~$(j, k)$ for~$d$ by rewriting the enumeration problem in~$\mathcal L^\tau(j)$ as a two-dimensional short DLP and solving it by generalizing Pollard's $\lambda$-algorithm~\cite{pollard-rho-lambda, oorschot-wiener} to two dimensions.
Note that this is in analogy with how Pollard reduced the space usage in Shanks' algorithm back in the 1970s, in the one-dimensional case, by substituting the deterministic meet-in-the-middle techniques that Shanks uses with probabilistic random-walk techniques.
The two-dimensional case is trickier, however, since cycles can then e.g.\ arise in the random walks.

In earlier works, Gaudry and Schost~\cite{gaudry-schost} have explored generalizations of Pollard's $\lambda$-algorithm to two dimensions in the context of framing other problems as two-dimensional short DLPs.
Galbraith and Ruprai~\cite{galbraith-ruprai} have in turn improved Gaudry--Schost's algorithm, and generalized it to higher dimensions than two.
In this section, we give a variation of Lem.~\ref{lemma:bound-enumeration} that uses Gaudry--Schost's algorithm with Galbraith--Ruprai's improvements to solve~$(j, k)$ for~$d$ by writing the problem as a short two-dimensional DLP.

\begin{lemma}[Variation of Lem.~\ref{lemma:bound-enumeration}]
  \label{lemma:bound-enumeration-dlp}
  Suppose that~$j$ is such that~$\mathcal L^\tau(j)$ is $t$-balanced, and that~$k$ given~$j$ is such that~$(j, k)$ is a $\tau$-good pair.
  Let ${\Nspace = 2^{\Delta+\tau+4} + 2^{\tau+t+5} + 5}$.
  Then, the expected number of group operations required to solve~$(j, k)$ for~$d$ by reducing the enumeration problem in~$\mathcal{L}^\tau(j)$ to a two-dimensional short DLP and solving it via Gaudry--Schost's algorithm~\cite{gaudry-schost}, as generalized and improved by Galbraith and Ruprai~\cite{galbraith-ruprai}, in the idealized model, in the best, average and worst cases, is at most $(4/3 + o(1)) \, \sqrt{\pi \Nspace}$.
  This holds assuming that a few group elements are first pre-computed.
\end{lemma}
\begin{proof}
  Recall that since~$(j, k)$ is $\tau$-good, it holds that $|\, \vec u - \vec v \,| < 2^{m+\tau} \sqrt{2}$, where $\vec v = (\{-2^m k\}_{2^{m+\ell}}, 0) \in \mathbb Z^2$, and~$\vec u$ is an unknown vector that yields~$d$, see Sect.~\ref{sect:classical-post-processing}.

  Let~$\vec o$ be the vector in~$\mathcal L^\tau(j)$ that is yielded by Babai's nearest plane algorithm upon input of~$\vec v$ and $(\vec s_1, \vec s_2)$.
  Then $\vec o - \vec v = \delta_1 \vec s_1 + \delta_2 \vec s_2^\perp$ where $|\, \delta_1 \,|, \, |\, \delta_2 \,| \le \frac{1}{2}$.

  To find~$\vec u$, we enumerate all vectors of the form $\vec o + m_1 \vec s_1 + m_2 \vec s_2$ such that
  \begin{align}
    |\, \vec o + m_1 \vec s_1 + m_2 \vec s_2 - \vec v \,| < 2^{m+\tau} \sqrt{2} \label{eq:vector-form}
  \end{align}
  where $m_1 \in [-B_1, B_1] \cap \mathbb Z$ and $m_2 \in [-B_2, B_2] \cap \mathbb Z$, respectively.
  To upper bound $B_1$ and $B_2$, we use that $\vec s_2 = \vec s_2^{\parallel} + \vec s_2^{\perp}$ to write~\refeq{vector-form} as
  \begin{align*}
    |\, \vec o + m_1 \vec s_1 + m_2 \vec s_2 - \vec v \,|
    &=
    |\, \delta_1 \vec s_1 + \delta_2 \vec s_2^{\perp} + m_1 \vec s_1 + m_2 \vec s_2 \,| \\
    &=
    |\, \delta_1 \vec s_1 + \delta_2 \vec s_2^{\perp} + m_1 \vec s_1 + m_2 (\vec s_2^{\parallel} + \vec s_2^{\perp}) \,| \\
    &=
    |\, (m_1 + \delta_1) \, \vec s_1 + m_2 \, \vec s_2^{\parallel} + (m_2 + \delta_2) \, \vec s_2^{\perp} \,| < 2^{m+\tau} \sqrt{2}
  \end{align*}
  which, since~$\vec s_1$ is parallel to~$\vec s_2^{\parallel} = \mu \vec s_1$, and orthogonal to~$\vec s_2^{\perp}$, in turn yields
  \begin{align*}
    |\, \vec o + m_1 \vec s_1 + m_2 \vec s_2 - \vec v \,|^2
    &=
    |\, (m_1 + \delta_1) \, \vec s_1 \, + m_2 \, \vec s_2^{\parallel} |^2 + |\, (m_2 + \delta_2) \, \vec s_2^{\perp} \,|^2 \\
    &=
    |\, (m_1 + \mu \cdot m_2 + \delta_1) \, \vec s_1 |^2 + |\, (m_2 + \delta_2) \, \vec s_2^{\perp} \,|^2 \\
    &<
    2^{2(m+\tau) + 1}
  \end{align*}
  which implies that we may upper bound~$B_2$ based on the second term above, and then upper bound~$B_1$ based on $B_2$ and the first term above, yielding
  \begin{align*}
    B_2 \le \left\lfloor \frac{2^{m+\tau}}{\lambda_2^{\perp}} \sqrt{2} + \frac{1}{2} \right\rfloor,
    \:\:
    B_1 \le \left\lfloor \frac{2^{m+\tau}}{\lambda_1} \sqrt{2} + |\, \mu \,| \cdot B_2 + \frac{1}{2} \right\rfloor \le \left\lfloor \frac{2^{m+\tau}}{\lambda_1} \sqrt{2} + \frac{B_2}{2} + \frac{1}{2} \right\rfloor,
  \end{align*}
  where we have used that $\lambda_2^{\parallel} = |\, \mu \,| \cdot \lambda_1$ where $|\, \mu \,| \le 1 / 2$, see Cl.~\ref{claim:lattice-ineq}.

  To write the above enumeration as a two-dimensional short DLP, first let ${\vec s_1 = (s_{1,1}, s_{1,2})}$ and ${\vec s_2 = (s_{2,1}, s_{2,2})}$, then let ${s_1 = s_{1,2} / 2^\tau}$ and ${s_2 = s_{2,2} / 2^\tau}$, and finally let ${g_1 = g^{s_1}}$ and ${g_2 = g^{s_2}}$.
  Furthermore, let ${\vec o = \nu_1 \vec s_1 + \nu_2 \vec s_2}$.
  Then, for some ${i_1, i_2}$ such that ${i_1 \in [-B_1, B_1] \cap \mathbb Z}$ and ${i_2 \in [-B_2, B_2] \cap \mathbb Z}$, respectively, it holds that
  \begin{align*}
    g^d
    =
    g^{(\nu_1 + i_1) s_1 + (\nu_2 + i_2) s_2}
    =
    g_1^{\nu_1 + i_1} \, g_2^{\nu_2 + i_2}
    =
    x
    \quad\Rightarrow\quad
    g_1^{i_1} \, g_2^{i_2}
    =
    x \, g_1^{-\nu_1} \, g_2^{-\nu_2}
    = x'
  \end{align*}
  so we may solve $g_1^{i_1} \, g_2^{i_2} = x'$ for $i_1, i_2$, and then compute $d = (\nu_1 + i_1) s_1 + (\nu_2 + i_2) s_2$.

  We may furthermore simplify the upper bound on $B_1$, by using that $\lambda_2^{\perp} \ge \sqrt{3} \, \lambda_2 / 2 \ge \sqrt{3} \, \lambda_1 / 2$, see again Cl.~\ref{claim:lattice-ineq}, yielding
  \begin{align*}
    B_1 &\le \left\lfloor \frac{2^{m+\tau}}{\lambda_1} \sqrt{2} + \frac{B_2}{2} + \frac{1}{2} \right\rfloor
         \le \left\lfloor \frac{2^{m+\tau}}{\lambda_1} \sqrt{2} + \frac{2^{m+\tau}}{\lambda_2^{\perp}} \frac{1}{\sqrt{2}} + \frac{3}{4} \right\rfloor \\
        &\le \left\lfloor \frac{2^{m+\tau}}{\lambda_1} \sqrt{2} + \frac{2^{m+\tau}}{\lambda_1} \sqrt{\frac{2}{3}} + \frac{3}{4} \right\rfloor
         =   \left\lfloor \frac{2^{m+\tau}}{\lambda_1} \sqrt{2} \left( 1 + \frac{1}{\sqrt{3}} \right) + \frac{3}{4} \right\rfloor
  \end{align*}
  which implies that
  \begin{align*}
    &\phantom{\le} \:\:\: (2 B_1 + 1) (2 B_2 + 1) \\
    &\le
    \left( 2 \left\lfloor \frac{2^{m+\tau}}{\lambda_1} \sqrt{2} \left( 1 + \frac{1}{\sqrt{3}} \right) + \frac{3}{4} \right\rfloor + 1 \right)
    \left( 2 \left\lfloor \frac{2^{m+\tau}}{\lambda_2^{\perp}} \sqrt{2} + \frac{1}{2} \right\rfloor + 1 \right) \\
    &\le
    \left( \frac{2^{m+\tau+1}}{\lambda_1} \sqrt{2} \left( 1 + \frac{1}{\sqrt{3}} \right) + \frac{5}{2} \right)
    \left( \frac{2^{m+\tau+1}}{\lambda_2^{\perp}} \sqrt{2} + 2 \right) \\
    &=
    \frac{2^{2(m+\tau)+3}}{\lambda_1 \lambda_2^{\perp}} \left( 1 + \frac{1}{\sqrt{3}} \right)
    +
    \frac{2^{m+\tau+2}}{\lambda_1} \sqrt{2} \left( 1 + \frac{1}{\sqrt{3}} \right)
    +
    \frac{2^{m+\tau+1}}{\lambda_2^{\perp}} \frac{5}{\sqrt{2}}
    +
    5 \\
    &\le
    \frac{2^{2(m+\tau)+3}}{\lambda_1 \lambda_2^{\perp}}
    \underbrace{\left(
      1
      +
      \frac{1}{\sqrt{3}}
    \right)}_{< \, 2}
    +
    \frac{2^{m+\tau+2}}{\lambda_1}
    \underbrace{\sqrt{2}
    \left(
      1
      +
      \frac{7}{2 \sqrt{3}}
    \right)}_{< \, 2^{3}}
    +
    5 \\
    &<
    2^{\Delta+\tau+4} + 2^{\tau+t+5} + 5 = \Nspace
  \end{align*}
  where we have used that $\lambda_1 \ge 2^{m-t}$, that $\lambda_2^{\perp} \ge \sqrt{3} \, \lambda_2 / 2 \ge \sqrt{3} \, \lambda_1 / 2 \ge \sqrt{3} \cdot 2^{m-t-1}$ by Cl.~\ref{claim:lattice-ineq}, and that $\lambda_1 \lambda_2^{\perp} = 2^{m+\ell+\tau} = 2^{2m-\Delta+\tau}$ by Cl.~\ref{claim:lattice-det}.

  By~\cite[Thm.~3]{galbraith-ruprai}, the expected number of group operations for Gaudry--Schost's algorithm~\cite{gaudry-schost}, as generalized and improved by Galbraith and Ruprai~\cite{galbraith-ruprai}, in the idealized model, in the best, average and worst cases, is at most\footnote{We write ``at most'' here since $\Nspace$ is an upper bound on $(2 B_1 + 1)(2 B_2 + 1)$.}
  \begin{align*}
    \left( \left( \frac{2}{\sqrt{3}} \right)^\eta + o(1) \right) \sqrt{\pi \Nspace}
  \end{align*}
  for $\eta$ the dimension, which is two in this case, and so the lemma follows.
\end{proof}

\noindent The enumeration algorithm is described in pseudocode in Alg.~\ref{alg:recover-d-dlp} in App.~\ref{app:algorithms-gaudry-schost}.

Lem.~\ref{lemma:bound-enumeration-dlp} above may be regarded as a variation of~Lem.~\ref{lemma:bound-enumeration}.
It compensates for the ``off-drift'' (see the proof of Lem.~\ref{lemma:bound-enumeration} on p.~\pageref{page:off-drift}) by slightly expanding the search space so as to allow Gaudry--Schost's algorithm (or other algorithms for solving the short multi-dimensional DLP) to be directly called upon.
It thus removes the space barrier in~Lem.~\ref{lemma:bound-enumeration} at the expense of slightly increasing the search space.
On the whole, however, asymptotically as the $o(1)$ term tends to zero, the upper bound on the complexity in Lem.~\ref{lemma:bound-enumeration-dlp} is in fact less than that in~Lem.~\ref{lemma:bound-enumeration}.
It should furthermore be noted that the bounds in Lems.~\ref{lemma:bound-enumeration} and~\ref{lemma:bound-enumeration-dlp} may be tightened, by e.g.\ performing a more detailed analysis, and by leveraging symmetries, that the last component which yields~$d$ is known to be on a restricted interval, and so forth.

Finally, it should be noted that Lem.~\ref{lemma:bound-enumeration-dlp} may be generalized to higher dimensions, and to other quantum algorithms, in a straightforward manner, see App.~\ref{app:algorithms-generalizations}.

\section{Main result}
\label{sect:main-theorem}
We now combine Lems.~\ref{lemma:bound-tau-good-pair}--\ref{lemma:bound-t-balanced-Lj} with Lem.~\ref{lemma:bound-enumeration} and~\ref{lemma:bound-enumeration-dlp} to obtain our main theorems:
\begin{theorem}
  \label{thm:main}
  Let ${\Nspace = 2^{\Delta+\tau+1} + 2^{\tau+t+2} + 2}$, let~$c$ be a positive integer constant, and let~$(j, k)$ be yielded by the quantum algorithm.
  Then, with probability at least
  \begin{align*}
    \max\left(0, 1 - \frac{1}{2^\tau} - \frac{1}{2 \cdot 2^{2\tau}} - \frac{1}{6 \cdot 2^{3\tau}} \right)
    \cdot
    \max\left(0, 1 - 2^{\Delta - 2(t - 1) - \tau} \right)
  \end{align*}
  at most $2^3 c \sqrt{\Nspace}$ group operations in $\langle g \rangle$ have to be performed to recover the logarithm~$d$ from~$(j, k)$ by enumerating vectors in~$\mathcal L^\tau(j)$, for~$m \in \mathbb Z$ an upper bound on the bit length of~$d$, $\ell = m - \Delta$ for $\Delta \in [0, m) \cap \mathbb Z$, $\tau \in [0, \ell] \cap \mathbb Z$ and $t \in [0, m) \cap \mathbb Z$.
  This holds assuming that a few group elements are first pre-computed, and that there is space to store at most~$2^3 \sqrt{\Nspace} / c + 3$ integers in a lookup table.
\end{theorem}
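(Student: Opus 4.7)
The plan is to combine Lems.~\ref{lemma:bound-tau-good-pair}--\ref{lemma:bound-enumeration} via a straightforward conditioning argument on~$j$. By Cl.~\ref{claim:uniform-j}, $j$ is selected uniformly at random from $[0, 2^{m+\ell}) \cap \mathbb Z$, after which~$k$ is drawn conditional on~$j$. Let~$E_\tau$ denote the event that~$(j, k)$ is $\tau$-good, and let~$E_t$ denote the event that~$\mathcal L^\tau(j)$ is $t$-balanced; crucially, $E_t$ is a function of~$j$ alone, while Lem.~\ref{lemma:bound-tau-good-pair} furnishes a lower bound on $P(E_\tau \mid j)$ that is uniform in~$j$.

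First I would write
\begin{align*}
  P(E_\tau \cap E_t)
  &= \sum_j P(j) \, P(E_\tau \mid j) \, \mathbb{1}[E_t(j)] \\
  &\ge \Bigl( \inf_j P(E_\tau \mid j) \Bigr) \cdot P(E_t),
\end{align*}
and then plug in the two lemma bounds to each factor, yielding exactly the product form stated in the theorem. Since probabilities cannot be negative, each factor is bounded below by its maximum against zero, which harmlessly handles the regime where the underlying lemma bounds become trivial (e.g.~for very small~$\tau$ or~$t$, or for large~$\Delta$).

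On the joint event~$E_\tau \cap E_t$ the hypotheses of Lem.~\ref{lemma:bound-enumeration} are met, so at most $2^3 c \sqrt{N}$ group operations in~$\langle g \rangle$ suffice to recover~$d$ from~$(j, k)$, assuming a lookup table of size at most $2^3 \sqrt{N}/c + 3$ and a few pre-computed group elements. Combining these two facts yields the theorem.

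I anticipate no substantive obstacle: the entire argument rests on the two observations highlighted above, namely that~$E_t$ depends only on~$j$ and that the bound in Lem.~\ref{lemma:bound-tau-good-pair} is uniform in~$j$. If either failed, one would have to argue more carefully about possible correlations between~$E_\tau$ and~$E_t$; but both hold by construction, so the factorization goes through without further work.
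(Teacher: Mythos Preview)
Your proposal is correct and follows essentially the same approach as the paper: combine Lems.~\ref{lemma:bound-tau-good-pair}--\ref{lemma:bound-enumeration}, using that the $t$-balanced event depends only on~$j$ while the bound on the $\tau$-good probability is uniform in~$j$, then take the maximum with zero to handle parameter regimes where the individual bounds are vacuous. If anything, your conditioning argument is spelled out more explicitly than in the paper, which simply cites the three lemmas in turn and multiplies.
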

\begin{proof}
  By Lem.~\ref{lemma:bound-t-balanced-Lj}, the integer~$j$ observed is such that~$\mathcal L^\tau(j)$ is not $t$-balanced with probability at most $2^{\Delta - 2(t - 1) - \tau}$.
  By Lem.~\ref{lemma:bound-tau-good-pair}, for any given~$j$, the probability that the integer~$k$ observed given~$j$ is such that~$(j, k)$ is a $\tau$-good pair is at least
  \begin{align*}
    1 - \psi'(2^\tau)
    >
    1 - \frac{1}{2^\tau} - \frac{1}{2 \cdot 2^{2\tau}} - \frac{1}{6 \cdot 2^{3\tau}}.
  \end{align*}

  By Lem.~\ref{lemma:bound-enumeration} at most $2^3 c \sqrt{\Nspace}$ group operations in $\langle g \rangle$ have to be performed to recover~$d$ from~$(j, k)$ by enumerating vectors in~$\mathcal L^\tau(j)$, provided that the two aforementioned conditions are fulfilled, that a few group elements are first pre-computed, and that there is space to store at most~$2^3 \sqrt{\Nspace} \big/ c + 3$ integers in a lookup table, and so the theorem follows.
  Note that we take the maximum of the two lower bounds yielded by Lem.~\ref{lemma:bound-tau-good-pair} and Lem.~\ref{lemma:bound-t-balanced-Lj}, respectively, since both bounds may be negative for certain parameter choices.
\end{proof}

\begin{theorem}[Variation of Thm.~\ref{thm:main}]
  \label{thm:main-dlp}
  Let ${\Nspace = 2^{\Delta+\tau+4} + 2^{\tau+t+5} + 5}$, and let~$(j, k)$ be yielded by the quantum algorithm.
  Then, with probability at least
  \begin{align*}
    \max\left(0, 1 - \frac{1}{2^\tau} - \frac{1}{2 \cdot 2^{2\tau}} - \frac{1}{6 \cdot 2^{3\tau}} \right)
    \cdot
    \max\left(0, 1 - 2^{\Delta - 2(t - 1) - \tau} \right)
  \end{align*}
  the expected number of group operations required to solve~$(j, k)$ for~$d$ by reducing the enumeration problem in~$\mathcal{L}^\tau(j)$ to a two-dimensional short DLP and solving it via Gaudry--Schost's algorithm~\cite{gaudry-schost}, as generalized and improved by Galbraith and Ruprai~\cite{galbraith-ruprai}, in the idealized model, in the best, average and worst cases, is at most $(4/3 + o(1)) \, \sqrt{\pi \Nspace}$.
  This holds assuming that a few group elements are first pre-computed.
\end{theorem}
\begin{proof}
  By Lem.~\ref{lemma:bound-t-balanced-Lj}, the integer~$j$ observed is such that~$\mathcal L^\tau(j)$ is not $t$-balanced with probability at most $2^{\Delta - 2(t - 1) - \tau}$.
  By Lem.~\ref{lemma:bound-tau-good-pair}, for any given~$j$, the probability that the integer~$k$ observed given~$j$ is such that~$(j, k)$ is a $\tau$-good pair is at least
  \begin{align*}
    1 - \psi'(2^\tau)
    >
    1 - \frac{1}{2^\tau} - \frac{1}{2 \cdot 2^{2\tau}} - \frac{1}{6 \cdot 2^{3\tau}}.
  \end{align*}

  By Lem.~\ref{lemma:bound-enumeration-dlp}, the expected number of group operations required to solve~$(j, k)$ for~$d$ by reducing the enumeration problem in~$\mathcal{L}^\tau(j)$ to a two-dimensional short DLP and solving it via Gaudry--Schost's algorithm~\cite{gaudry-schost}, as generalized and improved by Galbraith and Ruprai~\cite{galbraith-ruprai}, in the idealized model, in the best, average and worst cases, is at most $(4/3 + o(1)) \, \sqrt{\pi \Nspace}$, and so the theorem follows.
  Note that we take the maximum of the two lower bounds yielded by Lem.~\ref{lemma:bound-tau-good-pair} and Lem.~\ref{lemma:bound-t-balanced-Lj}, respectively, since both bounds may be negative for certain parameter choices.
\end{proof}

The bounds in Thms.~\ref{thm:main}--\ref{thm:main-dlp} are tabulated in Tabs.~\ref{tab:bounds}--\ref{tab:bounds-continued} in App.~\ref{app:tables} for various~$\Delta$.
More specifically, for~$\Delta$ and a given lower bound on the success probability, the tables give~$t$ and~$\tau$ that minimize the upper bound on the enumeration complexity in Thm.~\ref{thm:main}.
As may be seen in Tabs.~\ref{tab:bounds}--\ref{tab:bounds-continued}, the success probability can easily be pushed as high as $1 - 10^{-10}$ for $\Delta = 0$ when requiring the classical post-processing to be feasible to perform in practice on an ordinary computer.
We can afford to grow~$\Delta$ quite large, depending on which lower bound on the success probability we aim for, and on what amount of computational resources that we are prepared to spend on the post-processing.

\subsection{Notes on our notion of shortness}
\label{sect:notion-shortness}
In Sect.~\ref{sect:quantum-algorithm}, we assumed~$d$ to be short in the sense that $r \ge 2^{m+\ell} + (2^\ell - 1) d$ so as to simplify the analysis by not having to account for modular reductions.

For FF-DH in safe-prime groups with short exponents, the order~$r$ of~$g$ is known, and $d \lll r$, so it trivially holds that $r \ge 2^{m+\ell} + (2^\ell - 1) d$.
In Tab.~\ref{tab:bounds-ff-dh} in App.~\ref{app:tables-ff-dh}, we tabulate the bounds in Thms.~\ref{thm:main}--\ref{thm:main-dlp} for common FF-DH parameterizations.

For RSA, the probability of a random $g \in \mathbb Z_{\Nmodulus}^*$ having order~$r \ge 2^{m+\ell} + (2^\ell - 1) d$ is lower bounded\footnote{Under certain assumptions, see~\cite[App.~A.2.2]{ekera-pp} for the full details.} in~\cite[App.~A.2.2]{ekera-pp}, for~$\Nmodulus$ a large random RSA integer, and shown to be at least $0.9998$ for $\Delta = 20$.
In Tab.~\ref{tab:bounds-rsa} in App.~\ref{app:tables-rsa}, we tabulate the bounds in Thms.~\ref{thm:main}--\ref{thm:main-dlp} for $\Delta = 20$ whilst accounting for this additional reduction factor.
Furthermore, we include a selection of~$\Delta$, and their associated reduction factors, in Tab.~\ref{tab:bounds-rsa}, to reach success probabilities ranging from $\ge 0.9$ to $\ge 1 - 10^{-4}$.

Finally, as explained above, the assumption that $r \ge 2^{m+\ell} + (2^\ell - 1) d$ was made to simplify the analysis:
The assumption may be relaxed, see the heuristic analysis in~\cite{ekera-revisiting} for further details; in particular see \cite[Sect.~7.2 and App.~B.1.2]{ekera-revisiting}.

\subsection{Asymptotic analysis}
Asymptotically, we may select the parameters~$\Delta$, $\tau$ and~$t$ so that the success probability tends to one as~$m$ tends to infinity, whilst preserving the polynomial runtime:
\begin{corollary}
  \label{cor:main}
  The parameters~$\Delta$, $\tau$ and~$t$ may be selected as functions of~$m$ so that the lower bound on the success probability in Thms.~\ref{thm:main}--\ref{thm:main-dlp} tends to one as $m \rightarrow \infty$, whilst the upper bound on the enumeration complexity is $O(\poly(m))$.
\end{corollary}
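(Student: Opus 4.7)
The plan is to exhibit explicit choices of $\Delta, \tau, t$ as (slowly growing) functions of $m$ that simultaneously drive both factors in the success-probability bound of Thm.~\ref{thm:main} to $1$ while keeping $N = 2^{\Delta+\tau+1} + 2^{\tau+t+2} + 2$ polynomial in $m$. First I would read off the structural shape of the three quantities involved: the first success factor $1 - \psi'(2^\tau)$ depends only on $\tau$ and behaves like $1 - 2^{-\tau} - O(2^{-2\tau})$, so it tends to $1$ whenever $\tau \to \infty$; the second factor $1 - 2^{\Delta - 2(t-1) - \tau}$ tends to $1$ whenever $2t + \tau - \Delta \to \infty$; and $N$ grows only as a product of the exponentials $2^{\Delta+\tau}$ and $2^{\tau+t}$, so $\sqrt{N}$ stays polynomial in $m$ as long as both $\Delta + \tau$ and $\tau + t$ are $O(\log m)$.

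Given this picture, a logarithmic growth rate for $\tau$ and $t$ clearly suffices. My concrete choice would be $\Delta = 0$ together with $\tau = t = \lceil \log_2 m \rceil$, which lies in the admissible ranges $\tau \in [0, \ell] \cap \mathbb Z$ and $t \in [0, m) \cap \mathbb Z$ (with $\ell = m$) as soon as $m$ is large enough. With this choice the first factor is bounded below by $1 - O(1/m)$, the exponent in the second factor is at most $-3 \log_2 m + 2$ so the second factor is at least $1 - 4/m^3$, and both therefore tend to $1$ as $m \to \infty$. At the same time $N \le 4m + 16m^2 + 2 = O(m^2)$, hence $\sqrt{N} = O(m)$, and Thm.~\ref{thm:main} (applied with any fixed positive integer constant $c$) yields an enumeration cost of $O(m) = O(\poly(m))$.

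I would close by remarking that there is considerable slack in this argument: any $\tau$ with $\tau = \omega(1)$ and $\tau = O(\log m)$, any $t$ with $2t + \tau - \Delta \to \infty$ and $t = O(\log m)$, and any $\Delta = O(\log m)$ within the allowed range will do, so one can trade a slightly larger $\Delta$ (useful for the RSA reduction discussed in Sect.~\ref{sect:notion-shortness}) against slightly larger $\tau$ and $t$ without affecting the asymptotic conclusion. I do not expect any genuine obstacle in this proof: the two failure probabilities decay geometrically in parameters that the corollary allows us to tune freely, whereas the search volume grows only as a product of low powers of the same parameters, so the main task is just to record a concrete admissible choice and verify the domain constraints $\Delta \in [0, m) \cap \mathbb Z$, $\tau \in [0, \ell] \cap \mathbb Z$, $t \in [0, m) \cap \mathbb Z$ assumed in Thm.~\ref{thm:main}.
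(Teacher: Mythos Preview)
Your proposal is correct and follows essentially the same approach as the paper: exhibit slowly growing parameter choices and verify that both probability factors tend to one while $N$ stays polynomial. The only cosmetic difference is that the paper keeps $t$ fixed to a constant and lets only $\tau$ (and optionally $\Delta$) grow, whereas you let $t$ grow alongside $\tau$; both variants work for the same reason, as your closing remark about slack already recognizes.
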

\begin{proof}
  The corollary follows by e.g.~fixing~$\Delta$ and~$t$ to some constant values, whilst letting $\tau = \log_2 f(m)$ where $f(m) \in \omega_m(1)$ and $f(m) \in O(\poly(m))$.

  Another option is to fix~$t$ to a constant value, whilst letting $\tau = \log_2 f(m)$ for $f(m)$ as above, and $\Delta = \log_2 g(m)$ where $g(m) \in \omega_m(1)$ and $g(m) \in o(f(m))$.
\end{proof}

As is stated in the proof of Cor.~\ref{cor:main}, we may let~$\Delta$ slowly tend to infinity with~$m$.
By the analysis in~\cite[App.~A.2]{ekera-pp}, this implies that the probability of meeting the requirement that $r \ge 2^{m+\ell} + (2^\ell - 1)d$ can be made to tend to one asymptotically when Ekerå--Håstad's algorithm for the short DLP is used to break RSA.

\subsection{Notes on physical implementation}
In this analysis we have assumed that the quantum computer executes the quantum algorithm as per its mathematical description.
If the algorithm is to be executed on a computer that may make computational errors, then the risk of such errors causing the computation to fail\footnote{In the sense that the aforementioned assumption that the quantum computer executes the quantum algorithm as per its mathematical description is void.} must be factored into the success probability.

Furthermore, we describe only logical quantum circuits and consider only logical space and computational costs in this analysis, without accounting for effects and overheads induced by quantum error correction.

\subsection{Notes on practical verification and future work}
We have implemented the post-processing algorithms in Alg.~\ref{alg:recover-d}--\ref{alg:recover-d-dlp} and verified that they work as expected by post-processing simulated quantum algorithm outputs.\footnote{For an accessible but unoptimized implementation of Alg.~\ref{alg:recover-d} and the simulator, see the \href{https://github.com/ekera/quaspy}{repository for Quaspy}~\cite{quaspy} on GitHub, available at~\url{https://github.com/ekera/quaspy}.}

Our initial experiments with an optimized parallelized implementation indicate that it is typically not an issue to run the post-processing on an ordinary computer for $\Delta = 50$ when targeting a $\ge 99\%$ success probability.
To exemplify, this leads to a reduction by approximately~$15\%$ in the number of group operations that need to be performed quantumly to solve the short DLP in 2048-bit safe-prime groups in a single run compared to the baseline costs for $\Delta = 0$ reported in~\cite[Tab.~2]{ekera-pp}.\footnote{To be specific, the reduction is from $672$~operations as reported in~\cite[Tab.~2]{ekera-pp} to $572$~operations.}
Work is currently underway\footnote{This work will form part of an MSc thesis supervised by the author.} to optimize and further parallelize said implementation.

\subsection{Notes on generalizations and future work}
As previously stated, a key to achieving the results in this paper is to efficiently perform a limited search in the classical post-processing by leveraging meet-in-the-middle or random-walk techniques.
These techniques may be generalized to speed up other related classical post-processing algorithms that perform limited searches, such as those in~\cite{ekera-pp, ekera-general, ekera-success, ekera-revisiting}, both when making and not making tradeoffs, see App.~\ref{app:algorithms-generalizations}.

\section*{Acknowledgements}
I am grateful to Johan Håstad for valuable comments and advice.
I thank Joel Gärtner for identifying an issue in Lem.~\ref{lemma:bound-enumeration} in the initial pre-print version of this manuscript.
Funding and support for this work was provided by the Swedish NCSA that is a part of the Swedish Armed Forces.
Computations were enabled by resources provided by the National Academic Infrastructure for Supercomputing in Sweden (NAISS) at PDC at KTH partially funded by the Swedish Research Council through grant agreement no.~2022-06725.

\clearpage

\appendix
\section{Algorithms}
\label{app:algorithms}
In this appendix, we describe the post-processing algorithms in pseudocode.

\subsection{Solving via a generalization of Shanks' algorithm}
\label{app:algorithms-shanks}
\begin{breakablealgorithm}
  \caption{Returns~$d$ given~$g$, $x = g^d$, a $\tau$-good pair~$(j, k)$, $c \in \mathbb Z_{> 0}$, $m$ and~$\ell$.}
  \label{alg:recover-d}

  \begin{pseudocode}
  \item Let $\textsc{MeetInTheMiddle}(g, x, \nu_1, \nu_2, B_1, B_2, s_1, s_2, \mu, c)$ be the function:

  \begin{pseudocode}
      \item Let $g_1 \leftarrow g^{s_1}$, $g_2 \leftarrow g^{s_2}$ and $w = g_1^{\nu_1} \cdot g_2^{\nu_2} \cdot x^{-1}$.

      \vspace{2mm}

      \item Let $n \leftarrow c \, \round{\sqrt{B_1 / (B_2 + 1)}}$.

      \emph{Note: As $B_1 \ge 1$ and $2 B_1 > B_2 \ge 0$, see Cl.~\ref{claim:bounds-B1-B2}, it holds that $n \ge c \ge 1$.}

      \vspace{2.5mm}

      \item Let~$T$ be an empty lookup table. \hfill \emph{--- Note: The first stage begins.}

      Insert~$0$ into~$T$ indexed by~$g^0$.

      \vspace{1mm}

      \item Let $s \leftarrow g_1^{n} = g^{n \cdot s_1}$, $z_{+} \leftarrow s$, $z_{-} \leftarrow s^{-1}$ and $i \leftarrow 1$.

      \vspace{0.5mm}

      Store~$s^{-1}$ in memory as a pre-computed group element.

      \vspace{1mm}

      \item Repeat: \label{alg:recover-d:step:begin-first-stage}

      \begin{pseudocode}
        \item Insert~$i$ into~$T$ indexed by $z_{+} = g^{i \cdot n \cdot s_1}$.

              Insert~$-i$ into~$T$ indexed by $z_{-} = g^{-i \cdot n \cdot s_1}$.

              \emph{Note: This step is visited $\ceil{B_1 / n}$ times.}

        \item Let $i \leftarrow i + 1$. If $i > \ceil{B_1 / n}$:
        \begin{pseudocode}
          \item Stop repeating and go to step~\ref{alg:recover-d:step:end-first-stage}.
        \end{pseudocode}

        \item Let $z_{+} \leftarrow z_{+} \cdot s$ and $z_{-} \leftarrow z_{-} \cdot s^{-1}$.

        \emph{Note: This step is visited $\ceil{B_1 / n} - 1$ times.}
      \end{pseudocode}

      \vspace{1mm}

      \item Let $z_+ \leftarrow w$, $z_- \leftarrow w$ and $j \leftarrow 0$. \label{alg:recover-d:step:end-first-stage} \hfill \emph{--- Note: The second stage begins.}

      \vspace{1mm}

      Store~$g_1^{-1}$ and~$g_2^{-1}$ in memory as pre-computed group elements.
      Also pre-compute and store ${g_2 \cdot g_1}$, ${g_2 \cdot g_1^{-1}}$, ${g_2^{-1} \cdot g_1}$ and ${g_2^{-1} \cdot g_1^{-1}}$.

      \vspace{1mm}

      \item Repeat: \label{alg:recover-d:step:begin-second-stage} 
      \begin{pseudocode}
        \item Let $z'_{+} \leftarrow z_{+}$, $z'_{-} \leftarrow z_{-}$ and $i \leftarrow 0$.

        \vspace{2mm}

        \item Repeat: 

        \emph{Note: At this point $z'_{\pm} = g^{(\nu_1 + i - \round{\pm j \cdot \mu}) \cdot s_1 + (\nu_2 \pm j) \cdot s_2} \cdot x^{-1}$.}

        \begin{pseudocode}
          \item If $z'_{+}$ indexes an integer~$k$ in~$T$:

          \emph{Note:}
          If this is the case, then $z'_{+} = g^{k \cdot n \cdot s_1}$.

          \begin{pseudocode}
            \item Return $d = (\nu_1 + i - \round{\, j \cdot \mu} - k \cdot n) \cdot s_1 + (\nu_2 + j) \cdot s_2$.
          \end{pseudocode}

          \item If $j > 0$ and~$z'_{-}$ indexes an integer~$k$ in~$T$:

          \emph{Note:}
          If this is the case, then $z'_{-} = g^{k \cdot n \cdot s_1}$.

          \begin{pseudocode}
            \item Return $d = (\nu_1 + i - \round{-j \cdot \mu} - k \cdot n) \cdot s_1 + (\nu_2 - j) \cdot s_2$.
          \end{pseudocode}

          \item Let $i \leftarrow i + 1$. If $i \ge n$:
          \begin{pseudocode}
            \item Stop repeating and go to step~\ref{alg:recover-d:step:end-second-stage-inner}.
          \end{pseudocode}

          \item Let $z'_{+} \leftarrow z'_{+} \cdot g_1$ and $z'_{-} \leftarrow z'_{-} \cdot g_1$.

          \emph{Note: This step is visited $(B_2 + 1)(n - 1)$ times.}
        \end{pseudocode}

        \item Let $j \leftarrow j + 1$. If $j > B_2$: \label{alg:recover-d:step:end-second-stage-inner}
        \begin{pseudocode}
          \item Stop repeating and go to step~\ref{alg:recover-d:step:end-second-stage-outer}.
        \end{pseudocode}

        \item Update $z_+$ and $z_-$ as follows:

        \begin{pseudocode}
          \item If $\round{\, j \cdot \mu} < \round{(j - 1) \cdot \mu}$:

          \begin{pseudocode}
            \item Let $z_+ \leftarrow z_+ \cdot g_2 \cdot g_1$ and $z_- \leftarrow z_- \cdot g_2^{-1} \cdot g_1^{-1}$.
          \end{pseudocode}

          \item Otherwise, if $\round{\, j \cdot \mu} > \round{(j - 1) \cdot \mu}$:

          \begin{pseudocode}
            \item Let $z_+ \leftarrow z_+ \cdot g_2 \cdot g_1^{-1}$ and $z_- \leftarrow z_- \cdot g_2^{-1} \cdot g_1$.
          \end{pseudocode}

          \item Otherwise:

          \begin{pseudocode}
            \item Let $z_+ \leftarrow z_+ \cdot g_2$ and $z_- \leftarrow z_- \cdot g_2^{-1}$.
          \end{pseudocode}
        \end{pseudocode}

        \emph{Note: This step is visited~$B_2$ times.
        The group elements used above to update~$z_+$ and~$z_-$, respectively, are all pre-computed.
        Also, since ${|\, \mu \,| \le 1/2}$, it holds that ${\round{\, j \cdot \mu} - \round{(j - 1) \cdot \mu} \in \{ -1, 0, 1 \}}$.}
      \end{pseudocode}

      \vspace{1mm}

      \item Return~$\neg$. \label{alg:recover-d:step:end-second-stage-outer}
    \end{pseudocode}

    \vspace{1mm}

    \item Let~$\mathcal L^\tau(j)$ be the lattice generated by~$(j, 2^\tau)$ and~$(2^{m+\ell}, 0)$.

    \vspace{2mm}

    \item Let~$\vec s_1 = (s_{1,1}, s_{1,2})$ of norm~$\lambda_1$ be a shortest non-zero vector in~$\mathcal L^\tau(j)$, and let ${\vec s_2 = (s_{2,1}, s_{2,2})}$ be a shortest non-zero vector in~$\mathcal L^\tau(j)$ linearly independent to~$\vec s_1$, so that~$(\vec s_1, \vec s_2)$ forms a Lagrange-reduced basis.

    \emph{Note: The basis $(\vec s_1, \vec s_2)$ may be found with Lagrange's algorithm, see~\cite{lagrange, nguyen}.}

    \vspace{2mm}

    \item Let $\mu = \langle \vec s_1, \vec s_2 \rangle / \lambda_1^2$.
    Let~$\vec s_2^\parallel = \mu \cdot \vec s_1$ be the component of~$\vec s_2$ parallel to~$\vec s_1$, and let~$\vec s_2^\perp = \vec s_2 - \vec s_2^\parallel$ of norm~$\lambda_2^\perp$ be the component of~$\vec s_2$ orthogonal to~$\vec s_1$.

    \emph{Note: As~$(\vec s_1$, $\vec s_2)$ is Lagrange-reduced, it holds that $|\, \mu \,| \le 1/2$, see Cl.~\ref{claim:lattice-ineq}.}

    \vspace{2mm}

    \item Let $\vec v = (\{-2^m k\}_{2^{m+\ell}}, 0) \in \mathbb Z^2$, and let~$\vec o$ be the vector in~$\mathcal L^\tau(j)$ yielded by Babai's nearest plane algorithm~\cite{babai} upon input of~$\vec v$ and the basis $(\vec s_1, \vec s_2)$.

    Let~$\nu_1$ and~$\nu_2$ be integers such that $\vec o = \nu_1 \vec s_1 + \nu_2 \vec s_2$.

    \vspace{2mm}

    \item Let $B_1 \leftarrow \floor{2^{m+\tau} \sqrt{2} / \lambda_1 + 1}$ and $B_2 \leftarrow \floor{2^{m+\tau} \sqrt{2} / \lambda_2^{\perp} + 1/2}$.

    \emph{Note: It holds that $B_1 \ge 1$ and $2 B_1 > B_2 \ge 0$, see Cl.~\ref{claim:bounds-B1-B2}.}

    \vspace{2mm}

    \item Return $\textsc{MeetInTheMiddle}(g, x, \nu_1, \nu_2, B_1, B_2, s_{1, 2} / 2^\tau, s_{2, 2} / 2^\tau, \mu, c)$.
  \end{pseudocode}
\end{breakablealgorithm}

\noindent Note that the fact that Alg.~\ref{alg:recover-d} requires four inverses\footnote{More specifically~$g_1^{-1}$, $g_2^{-1}$, $s^{-1}$ and~$x^{-1}$.} to be computed does not imply a loss of generality in the context of this work since the quantum algorithm in Sect.~\ref{sect:quantum-algorithm} also requires inverses to be computed.
It may furthermore be necessary to compute inverses to implement the group arithmetic reversibly quantumly, see Sect.~\ref{sect:quantum-algorithm-implementation} for further details.

\subsubsection{Notes on handling the ``off-drift''}
As stated in the proof of Lem.~\ref{lemma:bound-enumeration} in Sect.~\ref{sect:bound-enumeration-complexity-shanks}, the ``off-drift'' in the direction of~$\vec s_1$ when adding ${m_2 \vec s_2}$ to~$\vec o$ is compensated for by at the same time subtracting ${\round{m_2 \cdot \mu} \, \vec s_1}$ in Alg.~\ref{alg:recover-d}.

Another option is to increase the enumeration bounds $B_1, \, B_2, \, \ldots$ so as to ensure that all lattice vectors within the prescribed radius are included in the enumeration even when not compensating for the ``off-drift'' by subtracting.
This option is used in Alg.~\ref{alg:recover-d-dlp}, where it gives rise to a short multi-dimensional DLP that can be solved using standard algorithms.

\subsubsection{Notes on parallelization and space-saving optimizations}
As explained in Sect.~\ref{sect:bound-enumeration-complexity-gaudry-schost}, the space usage is typically a limiting factor when using Alg.~\ref{alg:recover-d} and attempting to select large~$\Delta$ and/or large~$t$ and~$\tau$.
To completely remove this space barrier, a good option is to forego using Alg.~\ref{alg:recover-d} and deterministic meet-in-the-middle techniques altogether, and to instead proceed via Alg.~\ref{alg:recover-d-dlp} (see the next section) that reduces the lattice enumeration problem to a short multi-dimensional DLP that can be solved probabilistically via Gaudry--Schost's algorithm~\cite{gaudry-schost}, as generalized and improved by Galbraith and Ruprai~\cite{galbraith-ruprai}.
This reduces the space usage to~$O(1)$ group elements, in analogy with how Pollard~\cite{pollard-rho-lambda, oorschot-wiener} randomized Shanks' algorithm~\cite{shanks} in the one-dimensional case so as to avoid having to store more than~$O(1)$ group elements.\footnote{This idea is also discussed in the ``Notes on randomization'' paragraph on p.~85 of~\cite{ekera-phd-thesis}.}

Another option for reducing the space usage in Alg.~\ref{alg:recover-d} itself is to replace the lookup table~$T$ with a Bloom filter~\cite{bloom} or some more modern filter that tests set membership.
At the expense of performing some more computational work, this may for instance be accomplished as follows:

In the first stage, the elements $g^{k \cdot n \cdot s_1}$ are inserted into the filter~$F$.
In the second stage, the indices~$(i, j)$ of the elements $z'_{\pm} = g^{(\nu_1 + i - \round{\pm j \cdot \mu}) \cdot s_1 + (\nu_2 \pm j) \cdot s_2} \cdot x^{-1}$ found to be in~$F$ are inserted into a small lookup table~$T'$ indexed by $z'_{\pm}$.
The first stage is then re-executed and the elements $g^{k \cdot n \cdot s_1}$ looked up in~$T'$ to find a tuple $(i, j, k)$ such that {$d = (\nu_1 + i - \round{\, j \cdot \mu} - k \cdot n) \cdot s_{1} + (\nu_2 + j) \cdot s_{2}$}.

Yet another option for managing the space usage is to distribute the lookup table~$T$ (across nodes, for instance, or by offloading~$T$ to a (distributed) file system, or similar), and to use a filter that tests set membership to filter the lookups so that only lookups of elements that are actually likely to be stored in~$T$ are performed (so as to avoid performing too many slow lookups in~$T$).

Finally, note that for ease of comprehension and analysis, Alg.~\ref{alg:recover-d} is described in a simple sequential manner.
If the overall runtime is a limiting factor then the main loops in the two stages of Alg.~\ref{alg:recover-d} (i.e.\ the loops in steps~\ref{alg:recover-d:step:begin-first-stage} and~\ref{alg:recover-d:step:begin-second-stage}) may be parallelized (provided that~$T$ or~$F$ is implemented in a manner that admits parallelization) at the expense of performing some more pre-computational work.

\subsection{Solving via Gaudry--Schost's algorithm}
\label{app:algorithms-gaudry-schost}
\begin{breakablealgorithm}
  \caption{Returns~$d$ given~$g$, $x = g^d$, a $\tau$-good pair~$(j, k)$, $m$ and~$\ell$.}
  \label{alg:recover-d-dlp}

  \begin{pseudocode}
    \item Let~$\mathcal L^\tau(j)$ be the lattice generated by~$(j, 2^\tau)$ and~$(2^{m+\ell}, 0)$.

    \vspace{2mm}

    \item Let~$\vec s_1 = (s_{1,1}, s_{1,2})$ of norm~$\lambda_1$ be a shortest non-zero vector in~$\mathcal L^\tau(j)$, and let ${\vec s_2 = (s_{2,1}, s_{2,2})}$ be a shortest non-zero vector in~$\mathcal L^\tau(j)$ linearly independent to~$\vec s_1$, so that~$(\vec s_1, \vec s_2)$ forms a Lagrange-reduced basis.

    \emph{Note: The basis $(\vec s_1, \vec s_2)$ may be found with Lagrange's algorithm, see~\cite{lagrange, nguyen}.}

    \vspace{2mm}

    \item Let $\mu = \langle \vec s_1, \vec s_2 \rangle / \lambda_1^2$.
    Let~$\vec s_2^\parallel = \mu \cdot \vec s_1$ be the component of~$\vec s_2$ parallel to~$\vec s_1$, and let~$\vec s_2^\perp = \vec s_2 - \vec s_2^\parallel$ of norm~$\lambda_2^\perp$ be the component of~$\vec s_2$ orthogonal to~$\vec s_1$.

    \emph{Note: As~$(\vec s_1$, $\vec s_2)$ is Lagrange-reduced, it holds that $|\, \mu \,| \le 1/2$, see Cl.~\ref{claim:lattice-ineq}.}

    \vspace{2mm}

    \item Let $\vec v = (\{-2^m k\}_{2^{m+\ell}}, 0) \in \mathbb Z^2$, and let~$\vec o$ be the vector in~$\mathcal L^\tau(j)$ yielded by Babai's nearest plane algorithm~\cite{babai} upon input of~$\vec v$ and the basis $(\vec s_1, \vec s_2)$.

    Let~$\nu_1$ and~$\nu_2$ be integers such that $\vec o = \nu_1 \vec s_1 + \nu_2 \vec s_2$.

    \vspace{2mm}

    \item Let $B_2 \leftarrow \floor{2^{m+\tau} \sqrt{2} / \lambda_2^{\perp} + 1/2}$ and $B_1 \leftarrow \floor{2^{m+\tau} \sqrt{2} / \lambda_1 + |\, \mu \,| \cdot B_2 + 1/2}$.

    \vspace{2mm}

    \item Let $s_1 \leftarrow s_{1, 2} / 2^\tau$,
              $s_2 \leftarrow s_{2, 2} / 2^\tau$,
              $g_1 \leftarrow g^{s_1}$,
              $g_2 \leftarrow g^{s_2}$ and
              $x' \leftarrow x \, g_1^{-\nu_1} \, g_2^{-\nu_2}$.

    \vspace{2mm}

    \item Let $(i_1, i_2) \leftarrow \textsc{SolveTwodimensionalShortDLP}(x', g_1, g_2, B_1, B_2)$.

    \emph{Note: Solves $g_1^{i_1} \, g_2^{i_2} = x'$ for $(i_1, i_2)$ where
    \begin{align*}
      i_1 \in [-B_1, B_1] \cap \mathbb Z
      \quad \text{ and } \quad
      i_2 \in [-B_2, B_2] \cap \mathbb Z
    \end{align*}
    and returns $(i_1, i_2)$, or $(\neg, \neg)$ if no solution is found.
    The function called here may e.g.\ be implemented with Gaudry--Schost's algorithm~\cite{gaudry-schost, galbraith-ruprai}.}

    \vspace{2mm}

    \item If $(i_1, i_2) = (\neg, \neg)$:
    \begin{pseudocode}
      \item Return $\neg$.
    \end{pseudocode}

    \item Return $d = (\nu_1 + i_1) s_1 + (\nu_2 + i_2) s_2$.
  \end{pseudocode}
\end{breakablealgorithm}

\noindent Note that the fact that Alg.~\ref{alg:recover-d-dlp} requires two inverses\footnote{More specifically~$g_1^{-1}$ and $g_2^{-1}$.} to be computed does not imply a loss of generality in the context of this work since the quantum algorithm in Sect.~\ref{sect:quantum-algorithm} also requires inverses to be computed.
It may furthermore be necessary to compute inverses to implement the group arithmetic reversibly quantumly, see Sect.~\ref{sect:quantum-algorithm-implementation} for further details.

\subsubsection{Notes on parallelization}
Gaudry--Schost's algorithm~\cite{gaudry-schost} (with Galbraith--Ruprai's improvements~\cite{galbraith-ruprai}) can be trivially parallelized with very small communication and storage overheads.

\subsection{Notes on generalizations to related quantum algorithms}
\label{app:algorithms-generalizations}
The techniques used in Algs.~\ref{alg:recover-d}--\ref{alg:recover-d-dlp} may be leveraged to speed up related classical post-processing algorithms that perform limited searches, such as the lattice-based algorithms in~\cite{ekera-pp, ekera-general, ekera-success, ekera-revisiting}, both when solving in a single run and when making tradeoffs between the number of runs of the quantum algorithm and the number of group operations that need to be evaluated quantumly in each run.

To provide some more details, the above referenced classical post-processing algorithms perform an enumeration in a lattice and test whether the last component of each vector enumerated fulfills a requirement by exponentiating a group element to the value of the component multiplied by a scaling factor.
This is exactly the situation in Algs.~\ref{alg:recover-d}--\ref{alg:recover-d-dlp}.
Essentially only the lattice basis, the vectors~$\vec u$ and~$\vec v$, the enumeration bounds, and the scaling factor, must be adapted in Algs.~\ref{alg:recover-d}--\ref{alg:recover-d-dlp} to make them cover the above cases.

Furthermore, in the case of solving an order-finding problem (OFP) as opposed to a DLP, a number of candidate solutions that meet the test will typically need to be returned, and not only the first such candidate.
The trivial solution must furthermore be ignored.
This requires a straightforward adaptation of Alg.~\ref{alg:recover-d}, and an adaptation of Gaudry--Schost's algorithm~\cite{gaudry-schost, galbraith-ruprai} that is called by Alg.~\ref{alg:recover-d-dlp} (to make it find collisions between so-called ``tame'' walks, as opposed to between ``tame'' and ``wild'' walks).

\paragraph{On tradeoffs}
When making tradeoffs and solving in multiple runs, the lattice dimension~$\eta$ increases above two.
A straightforward way to handle this is by proceeding as in Lem.~\ref{lemma:bound-enumeration-dlp} and Alg.~\ref{alg:recover-d-dlp}, whilst replacing Lagrange's algorithm~\cite{lagrange} with the LLL algorithm~\cite{lll}, and deriving independent\footnote{In the sense that the bounds are independent of the enumeration indices, not of each other.} enumeration bounds from the Gram--Schmidt orthogonalization of the LLL-reduced basis.
For $i \in [1, \eta] \cap \mathbb Z$, the enumeration bounds then become
\begin{align*}
  B_i
  \le
  \floor{\frac{R}{\lambda^*_i} + \sum_{j \, = \, i + 1}^{\eta} |\, \mu_{j, i} \,| \cdot B_j + \frac{1}{2}}
  \quad \text{ where } \quad
  \mu_{j,i}
  =
  \frac{\langle \vec s_j, \vec s_i^* \rangle}{(\lambda_i^*)^2}
  \quad \text{ and } \quad
  |\, \mu_{j,i} \,|
  \le
  \frac{1}{2}
\end{align*}
for~$R$ the enumeration radius, and $\lambda_1^*, \, \ldots, \, \lambda_\eta^*$ the norms of the vectors $\vec s_1^*, \, \ldots, \, \vec s_\eta^*$ in the Gram--Schmidt orthogonalization of the vectors $\vec s_1, \, \ldots, \, \vec s_\eta$ in the LLL-reduced basis.
(Note that these bounds are as in Lem.~\ref{lemma:bound-enumeration-dlp} and Alg.~\ref{alg:recover-d-dlp} when $\eta = 2$ since $\lambda_2^\perp = \lambda_2^*$ and $\mu_{2,1} = \mu$.)
This yields a multi-dimensional short DLP that may be solved by Gaudry--Schost's algorithm~\cite{gaudry-schost} as generalized and improved by Galbraith and Ruprai~\cite{galbraith-ruprai}.
Alternatively, the multi-dimensional short DLP may be solved by generalizing Shanks' algorithm~\cite{shanks}, by dividing the vectors to be enumerated into two sets of approximately equal size.

Finally, note that when making tradeoffs for large tradeoff factors, one would typically pick the number of runs so that the number of vectors to be enumerated is small.
The benefit of using meet-in-the-middle or random-walk techniques is then fairly limited.
This explains why we focus primarily on the single-run setting in this work.
Small tradeoff factors requiring only a small number of runs are also of interest, however.
The results in this work may be extended to such multiple-run settings as explained above.

\clearpage

\section{Tables}
\label{app:tables}
In this appendix, we tabulate the lower bound on the success probability, and the associated upper bound on the enumeration complexity, in Thm.~\ref{thm:main}, in~$\Delta$, $\tau$ and~$t$:

\begin{table}[h!]
  \begin{center}
    \begin{tabular}{r|rr|l|r}
      \hline
               &        &     & Success     & Work \\
      $\Delta$ & $\tau$ & $t$ & probability & ($\log_2$) \\
      \thickhline
       0 &  4 &  2 & $\ge 0.9$          & $\le  7.1$ \\ 
         &  5 &  2 & $\ge 0.95$         & $\le  7.6$ \\ 
         &  7 &  2 & $\ge 0.99$         & $\le  8.6$ \\ 
         & 11 &  1 & $\ge 0.999$        & $\le 10.2$ \\ 
         & 14 &  2 & $\ge 1 - 10^{-4}$  & $\le 12.1$ \\ 
         & 17 &  2 & $\ge 1 - 10^{-5}$  & $\le 13.6$ \\ 
         & 21 &  1 & $\ge 1 - 10^{-6}$  & $\le 15.2$ \\ 
         & 24 &  2 & $\ge 1 - 10^{-7}$  & $\le 17.1$ \\ 
         & 27 &  2 & $\ge 1 - 10^{-8}$  & $\le 18.6$ \\ 
         & 31 &  1 & $\ge 1 - 10^{-9}$  & $\le 20.2$ \\ 
         & 34 &  2 & $\ge 1 - 10^{-10}$ & $\le 22.1$ \\ 
      \hline
      10 &  4 &  7 & $\ge 0.9$          & $\le 10.7$ \\ 
         &  5 &  7 & $\ge 0.95$         & $\le 11.2$ \\ 
         &  7 &  7 & $\ge 0.99$         & $\le 12.2$ \\ 
         & 10 &  9 & $\ge 0.999$        & $\le 14.1$ \\ 
         & 14 &  7 & $\ge 1 - 10^{-4}$  & $\le 15.7$ \\ 
         & 17 &  7 & $\ge 1 - 10^{-5}$  & $\le 17.2$ \\ 
         & 20 &  9 & $\ge 1 - 10^{-6}$  & $\le 19.1$ \\ 
         & 24 &  7 & $\ge 1 - 10^{-7}$  & $\le 20.7$ \\ 
         & 27 &  7 & $\ge 1 - 10^{-8}$  & $\le 22.2$ \\ 
         & 30 &  8 & $\ge 1 - 10^{-9}$  & $\le 23.8$ \\ 
         & 34 &  7 & $\ge 1 - 10^{-10}$ & $\le 25.7$ \\ 
      \hline
      20 &  4 & 12 & $\ge 0.9$          & $\le 15.6$ \\ 
         &  5 & 12 & $\ge 0.95$         & $\le 16.1$ \\ 
         &  7 & 12 & $\ge 0.99$         & $\le 17.1$ \\ 
         & 10 & 14 & $\ge 0.999$        & $\le 18.6$ \\ 
         & 14 & 12 & $\ge 1 - 10^{-4}$  & $\le 20.6$ \\ 
         & 17 & 12 & $\ge 1 - 10^{-5}$  & $\le 22.1$ \\ 
         & 20 & 14 & $\ge 1 - 10^{-6}$  & $\le 23.6$ \\ 
         & 24 & 12 & $\ge 1 - 10^{-7}$  & $\le 25.6$ \\ 
         & 27 & 12 & $\ge 1 - 10^{-8}$  & $\le 27.1$ \\ 
         & 30 & 13 & $\ge 1 - 10^{-9}$  & $\le 28.6$ \\ 
         & 34 & 12 & $\ge 1 - 10^{-10}$ & $\le 30.6$ \\ 
      \hline
    \end{tabular}
    $\:$
    \begin{tabular}{r|rr|l|r}
      \hline
               &        &     & Success     & Work \\
      $\Delta$ & $\tau$ & $t$ & probability & ($\log_2$) \\
      \thickhline
      30 &  4 & 17 & $\ge 0.9$          & $\le 20.6$ \\ 
         &  5 & 17 & $\ge 0.95$         & $\le 21.1$ \\ 
         &  7 & 17 & $\ge 0.99$         & $\le 22.1$ \\ 
         & 10 & 19 & $\ge 0.999$        & $\le 23.6$ \\ 
         & 14 & 17 & $\ge 1 - 10^{-4}$  & $\le 25.6$ \\ 
         & 17 & 17 & $\ge 1 - 10^{-5}$  & $\le 27.1$ \\ 
         & 20 & 19 & $\ge 1 - 10^{-6}$  & $\le 28.6$ \\ 
         & 24 & 17 & $\ge 1 - 10^{-7}$  & $\le 30.6$ \\ 
         & 27 & 17 & $\ge 1 - 10^{-8}$  & $\le 32.1$ \\ 
         & 30 & 18 & $\ge 1 - 10^{-9}$  & $\le 33.6$ \\ 
         & 34 & 17 & $\ge 1 - 10^{-10}$ & $\le 35.6$ \\ 
      \hline
      40 &  4 & 22 & $\ge 0.9$          & $\le 25.6$ \\ 
         &  5 & 22 & $\ge 0.95$         & $\le 26.1$ \\ 
         &  7 & 22 & $\ge 0.99$         & $\le 27.1$ \\ 
         & 10 & 24 & $\ge 0.999$        & $\le 28.6$ \\ 
         & 14 & 22 & $\ge 1 - 10^{-4}$  & $\le 30.6$ \\ 
         & 17 & 22 & $\ge 1 - 10^{-5}$  & $\le 32.1$ \\ 
         & 20 & 24 & $\ge 1 - 10^{-6}$  & $\le 33.6$ \\ 
         & 24 & 22 & $\ge 1 - 10^{-7}$  & $\le 35.6$ \\ 
         & 27 & 22 & $\ge 1 - 10^{-8}$  & $\le 37.1$ \\ 
         & 30 & 23 & $\ge 1 - 10^{-9}$  & $\le 38.6$ \\ 
         & 34 & 22 & $\ge 1 - 10^{-10}$ & $\le 40.6$ \\ 
      \hline
      50 &  4 & 27 & $\ge 0.9$          & $\le 30.6$ \\ 
         &  5 & 27 & $\ge 0.95$         & $\le 31.1$ \\ 
         &  7 & 27 & $\ge 0.99$         & $\le 32.1$ \\ 
         & 10 & 29 & $\ge 0.999$        & $\le 33.6$ \\ 
         & 14 & 27 & $\ge 1 - 10^{-4}$  & $\le 35.6$ \\ 
         & 17 & 27 & $\ge 1 - 10^{-5}$  & $\le 37.1$ \\ 
         & 20 & 29 & $\ge 1 - 10^{-6}$  & $\le 38.6$ \\ 
         & 24 & 27 & $\ge 1 - 10^{-7}$  & $\le 40.6$ \\ 
         & 27 & 27 & $\ge 1 - 10^{-8}$  & $\le 42.1$ \\ 
         & 30 & 28 & $\ge 1 - 10^{-9}$  & $\le 43.6$ \\ 
         & 34 & 27 & $\ge 1 - 10^{-10}$ & $\le 45.6$ \\ 
      \hline
    \end{tabular}
  \end{center}
  \caption{The lower bound on the success probability and associated upper bound on the enumeration complexity in Thm.~\ref{thm:main} tabulated in~$\Delta$, $\tau$ and~$t$ for $c = 1$.
  For~$\Delta$ and a given lower bound on the success probability, the table gives~$t$ and~$\tau$ that minimize the enumeration complexity in group operations as given by $\log_2 \sqrt{\Nspace} + 3$ for~$\Nspace$ as in Thm.~\ref{thm:main}.
  The enumeration complexity is reported in the ``Work'' column.
  The bound in said column is also a bound on the enumeration complexity as given by $\log_2(\frac{4}{3} \sqrt{\pi \Nspace})$ for~$\Nspace$ as in Thm.~\ref{thm:main-dlp}.}
  \label{tab:bounds}
\end{table}

\clearpage

\begin{table}[h!]
  \begin{center}
    \begin{tabular}{r|rr|l|r}
      \hline
               &        &     & Success     & Work \\
      $\Delta$ & $\tau$ & $t$ & probability & ($\log_2$) \\
      \thickhline
      60 &  4 & 32 & $\ge 0.9$          & $\le 35.6$ \\ 
         &  5 & 32 & $\ge 0.95$         & $\le 36.1$ \\ 
         &  7 & 32 & $\ge 0.99$         & $\le 37.1$ \\ 
         & 10 & 34 & $\ge 0.999$        & $\le 38.6$ \\ 
         & 14 & 32 & $\ge 1 - 10^{-4}$  & $\le 40.6$ \\ 
         & 17 & 32 & $\ge 1 - 10^{-5}$  & $\le 42.1$ \\ 
         & 20 & 34 & $\ge 1 - 10^{-6}$  & $\le 43.6$ \\ 
         & 24 & 32 & $\ge 1 - 10^{-7}$  & $\le 45.6$ \\ 
         & 27 & 32 & $\ge 1 - 10^{-8}$  & $\le 47.1$ \\ 
         & 30 & 33 & $\ge 1 - 10^{-9}$  & $\le 48.6$ \\ 
         & 34 & 32 & $\ge 1 - 10^{-10}$ & $\le 50.6$ \\ 
      \hline
      70 &  4 & 37 & $\ge 0.9$          & $\le 40.6$ \\ 
         &  5 & 37 & $\ge 0.95$         & $\le 41.1$ \\ 
         &  7 & 37 & $\ge 0.99$         & $\le 42.1$ \\ 
         & 10 & 39 & $\ge 0.999$        & $\le 43.6$ \\ 
         & 14 & 37 & $\ge 1 - 10^{-4}$  & $\le 45.6$ \\ 
         & 17 & 37 & $\ge 1 - 10^{-5}$  & $\le 47.1$ \\ 
         & 20 & 39 & $\ge 1 - 10^{-6}$  & $\le 48.6$ \\ 
         & 24 & 37 & $\ge 1 - 10^{-7}$  & $\le 50.6$ \\ 
         & 27 & 37 & $\ge 1 - 10^{-8}$  & $\le 52.1$ \\ 
         & 30 & 38 & $\ge 1 - 10^{-9}$  & $\le 53.6$ \\ 
         & 34 & 37 & $\ge 1 - 10^{-10}$ & $\le 55.6$ \\ 
      \hline
      80 &  4 & 42 & $\ge 0.9$          & $\le 45.6$ \\ 
         &  5 & 42 & $\ge 0.95$         & $\le 46.1$ \\ 
         &  7 & 42 & $\ge 0.99$         & $\le 47.1$ \\ 
         & 10 & 44 & $\ge 0.999$        & $\le 48.6$ \\ 
         & 14 & 42 & $\ge 1 - 10^{-4}$  & $\le 50.6$ \\ 
         & 17 & 42 & $\ge 1 - 10^{-5}$  & $\le 52.1$ \\ 
         & 20 & 44 & $\ge 1 - 10^{-6}$  & $\le 53.6$ \\ 
         & 24 & 42 & $\ge 1 - 10^{-7}$  & $\le 55.6$ \\ 
         & 27 & 42 & $\ge 1 - 10^{-8}$  & $\le 57.1$ \\ 
         & 30 & 43 & $\ge 1 - 10^{-9}$  & $\le 58.6$ \\ 
         & 34 & 42 & $\ge 1 - 10^{-10}$ & $\le 60.6$ \\ 
      \hline
      90 &  4 & 47 & $\ge 0.9$          & $\le 50.6$ \\ 
         &  5 & 47 & $\ge 0.95$         & $\le 51.1$ \\ 
         &  7 & 47 & $\ge 0.99$         & $\le 52.1$ \\ 
         & 10 & 49 & $\ge 0.999$        & $\le 53.6$ \\ 
         & 14 & 47 & $\ge 1 - 10^{-4}$  & $\le 55.6$ \\ 
         & 17 & 47 & $\ge 1 - 10^{-5}$  & $\le 57.1$ \\ 
         & 20 & 49 & $\ge 1 - 10^{-6}$  & $\le 58.6$ \\ 
         & 24 & 47 & $\ge 1 - 10^{-7}$  & $\le 60.6$ \\ 
         & 27 & 47 & $\ge 1 - 10^{-8}$  & $\le 62.1$ \\ 
         & 30 & 48 & $\ge 1 - 10^{-9}$  & $\le 63.6$ \\ 
         & 34 & 47 & $\ge 1 - 10^{-10}$ & $\le 65.6$ \\ 
      \hline
    \end{tabular}
    $\:$
    \begin{tabular}{r|rr|l|r}
      \hline
               &        &     & Success     & Work \\
      $\Delta$ & $\tau$ & $t$ & probability & ($\log_2$) \\
      \thickhline
      100 &  4 & 52 & $\ge 0.9$          & $\le 55.6$ \\ 
          &  5 & 52 & $\ge 0.95$         & $\le 56.1$ \\ 
          &  7 & 52 & $\ge 0.99$         & $\le 57.1$ \\ 
          & 10 & 54 & $\ge 0.999$        & $\le 58.6$ \\ 
          & 14 & 52 & $\ge 1 - 10^{-4}$  & $\le 60.6$ \\ 
          & 17 & 52 & $\ge 1 - 10^{-5}$  & $\le 62.1$ \\ 
          & 20 & 54 & $\ge 1 - 10^{-6}$  & $\le 63.6$ \\ 
          & 24 & 52 & $\ge 1 - 10^{-7}$  & $\le 65.6$ \\ 
          & 27 & 52 & $\ge 1 - 10^{-8}$  & $\le 67.1$ \\ 
          & 30 & 53 & $\ge 1 - 10^{-9}$  & $\le 68.6$ \\ 
          & 34 & 52 & $\ge 1 - 10^{-10}$ & $\le 70.6$ \\ 
      \hline
      110 &  4 & 57 & $\ge 0.9$          & $\le 60.6$ \\ 
          &  5 & 57 & $\ge 0.95$         & $\le 61.1$ \\ 
          &  7 & 57 & $\ge 0.99$         & $\le 62.1$ \\ 
          & 10 & 59 & $\ge 0.999$        & $\le 63.6$ \\ 
          & 14 & 57 & $\ge 1 - 10^{-4}$  & $\le 65.6$ \\ 
          & 17 & 57 & $\ge 1 - 10^{-5}$  & $\le 67.1$ \\ 
          & 20 & 59 & $\ge 1 - 10^{-6}$  & $\le 68.6$ \\ 
          & 24 & 57 & $\ge 1 - 10^{-7}$  & $\le 70.6$ \\ 
          & 27 & 57 & $\ge 1 - 10^{-8}$  & $\le 72.1$ \\ 
          & 30 & 58 & $\ge 1 - 10^{-9}$  & $\le 73.6$ \\ 
          & 34 & 57 & $\ge 1 - 10^{-10}$ & $\le 75.6$ \\ 
      \hline
      120 &  4 & 62 & $\ge 0.9$          & $\le 65.6$ \\ 
          &  5 & 62 & $\ge 0.95$         & $\le 66.1$ \\ 
          &  7 & 62 & $\ge 0.99$         & $\le 67.1$ \\ 
          & 10 & 64 & $\ge 0.999$        & $\le 68.6$ \\ 
          & 14 & 62 & $\ge 1 - 10^{-4}$  & $\le 70.6$ \\ 
          & 17 & 62 & $\ge 1 - 10^{-5}$  & $\le 72.1$ \\ 
          & 20 & 64 & $\ge 1 - 10^{-6}$  & $\le 73.6$ \\ 
          & 24 & 62 & $\ge 1 - 10^{-7}$  & $\le 75.6$ \\ 
          & 27 & 62 & $\ge 1 - 10^{-8}$  & $\le 77.1$ \\ 
          & 30 & 63 & $\ge 1 - 10^{-9}$  & $\le 78.6$ \\ 
          & 34 & 62 & $\ge 1 - 10^{-10}$ & $\le 80.6$ \\ 
      \hline
      130 &  4 & 67 & $\ge 0.9$          & $\le 70.6$ \\ 
          &  5 & 67 & $\ge 0.95$         & $\le 71.1$ \\ 
          &  7 & 67 & $\ge 0.99$         & $\le 72.1$ \\ 
          & 10 & 69 & $\ge 0.999$        & $\le 73.6$ \\ 
          & 14 & 67 & $\ge 1 - 10^{-4}$  & $\le 75.6$ \\ 
          & 17 & 67 & $\ge 1 - 10^{-5}$  & $\le 77.1$ \\ 
          & 20 & 69 & $\ge 1 - 10^{-6}$  & $\le 78.6$ \\ 
          & 24 & 67 & $\ge 1 - 10^{-7}$  & $\le 80.6$ \\ 
          & 27 & 67 & $\ge 1 - 10^{-8}$  & $\le 82.1$ \\ 
          & 30 & 68 & $\ge 1 - 10^{-9}$  & $\le 83.6$ \\ 
          & 34 & 67 & $\ge 1 - 10^{-10}$ & $\le 85.6$ \\ 
      \hline
    \end{tabular}
  \end{center}
  \caption{The continuation of Tab.~\ref{tab:bounds} for larger values of~$\Delta$.
  See the caption of Tab.~\ref{tab:bounds} for details on how to read this table.
  Note that as $\Delta$ increases, so does the enumeration complexity and the associated memory requirements.
  At some point, the post-processing becomes infeasible to perform in practice.}
  \label{tab:bounds-continued}
\end{table}

\clearpage

\subsection{Supplementary tables for FF-DH}
\label{app:tables-ff-dh}
In this appendix, we tabulate the bounds in Thm.~\ref{thm:main} for a few select combinations of~$\tau$, $t$ and~$\Delta$, see Tab.~\ref{tab:bounds-ff-dh}, so as to illustrate how the advantage for FF-DH grows in~$\Delta$ compared to our earlier analysis in~\cite[App.~A.1, Tab.~2]{ekera-pp} where $\Delta = 0$.

\begin{table}[h]
  \begin{center}
    \begin{tabular}{r|rr|rrr|l|r|rr}
      \hline
           &     &     &          &        &     & Success            & Work       &      &     \\
      $l$  & $z$ & $m$ & $\Delta$ & $\tau$ & $t$ & probability        & ($\log_2$) & Ops  & Adv \\
      \thickhline
      2048 & 112 & 224 & 70 &  7 & 37 & $\ge 0.99$         & $\le 42.1$ &  532 &  7.6 \\
           &     &     & 50 & 10 & 29 & $\ge 0.999$        & $\le 33.6$ &  572 &  7.1 \\
           &     &     &  0 & 34 &  2 & $\ge 1 - 10^{-10}$ & $\le 22.1$ &  672 &  6.1 \\
      \hline
      3072 & 128 & 256 & 70 &  7 & 37 & $\ge 0.99$         & $\le 42.1$ &  628 &  9.7 \\
           &     &     & 50 & 10 & 29 & $\ge 0.999$        & $\le 33.6$ &  668 &  9.1 \\
           &     &     &  0 & 34 &  2 & $\ge 1 - 10^{-10}$ & $\le 22.1$ &  768 &  8.0 \\
      \hline
      4096 & 152 & 304 & 70 &  7 & 37 & $\ge 0.99$         & $\le 42.1$ &  772 & 10.5 \\
           &     &     & 50 & 10 & 29 & $\ge 0.999$        & $\le 33.6$ &  812 & 10.0 \\
           &     &     &  0 & 34 &  2 & $\ge 1 - 10^{-10}$ & $\le 22.1$ &  912 &  9.0 \\
      \hline
      6144 & 176 & 352 & 70 &  7 & 37 & $\ge 0.99$         & $\le 42.1$ &  916 & 13.3 \\
           &     &     & 50 & 10 & 29 & $\ge 0.999$        & $\le 33.6$ &  956 & 12.8 \\
           &     &     &  0 & 34 &  2 & $\ge 1 - 10^{-10}$ & $\le 22.1$ & 1056 & 11.6 \\
      \hline
      8192 & 200 & 400 & 70 &  7 & 37 & $\ge 0.99$         & $\le 42.1$ & 1060 & 15.4 \\
           &     &     & 50 & 10 & 29 & $\ge 0.999$        & $\le 33.6$ & 1100 & 14.8 \\
           &     &     &  0 & 34 &  2 & $\ge 1 - 10^{-10}$ & $\le 22.1$ & 1200 & 13.7 \\
      \hline
    \end{tabular}
  \end{center}
  \caption{The bounds in Thm.~\ref{thm:main} tabulated for $c = 1$ and a few select combinations of~$\Delta$, $\tau$ and~$t$ with respect to breaking FF-DH with an $m = 2z$-bit short exponent in a safe-prime group defined by an $l$-bit prime~$p$.
  Ekerå--Håstad's algorithm performs $o_{\text{EH}} = m + 2\ell = 3m - 2\Delta$ group operations quantumly, as reported in the column denoted ``Ops'', compared to $o_{\text{S}} = 2(l-1) - \Delta$ operations for Shor's original algorithm for the DLP~\cite{shor94, shor97} when modified to work in the large prime-order subgroup as in~\cite{ekera-revisiting} (with $\varsigma = 0$ and $\nu_{\ell} = -\Delta$).
  The advantage, defined as $o_{\text{S}} / o_{\text{EH}}$, is reported in the column denoted ``Adv'' rounded to the closest first decimal.
  The enumeration complexity is reported in the ``Work'' column.
  The bound in said column is also a bound on the enumeration complexity as given by $\log_2(\frac{4}{3} \sqrt{\pi \Nspace})$ for~$\Nspace$ as in Thm.~\ref{thm:main-dlp}.}
  \label{tab:bounds-ff-dh}
\end{table}

\noindent More specifically, we consider FF-DH in safe-prime groups with short exponents:

To introduce some notation, let~$p$ be an $l$-bit safe-prime --- i.e.~a prime such that $r = (p-1)/2$ is also prime --- and let $g \in \mathbb F_p^*$ be an element of order~$r$.
Then~$g$ generates an $r$-order subgroup~$\langle g \rangle$ of~$\mathbb F_p^*$.
Let $x = g^d$ for~$d$ an $m$-bit exponent.
Then our goal when breaking FF-DH is to compute the discrete logarithm $d = \log_g x$.

The best classical algorithms for computing discrete logarithms in~$\langle g \rangle$ for large~$p$ are the general number field sieve~(GNFS)~\cite{gnfs, gordon, schirokauer} that runs in time subexponential in~$l$, and generic algorithms such as Pollard's algorithms~\cite{pollard-rho-lambda, oorschot-wiener} that run in time $O(\sqrt{r})$ and $O(\sqrt{d})$.
For this reason, it is standard practice~\cite{nistsp800-56a, rfc3526, rfc7919} to use short $m = 2z$ bit exponents with FF-DH in safe-prime groups, for~$z$ the strength level provided by an $l$-bit prime with respect to attacks by the GNFS.
Selecting a significantly larger~$m$ would yield a significant performance penalty, but would not yield significantly better security with respect to the best classical attacks.

The~$z$ column in Tab.~\ref{tab:bounds-ff-dh} gives the strength level in bits according to the model used by NIST, see~\cite[Sect.~7.5]{fips-140-2-ig} and~\cite[App.~D, Tab.~25--26]{nistsp800-56a} for further details.
Note that there are other models in the literature.

\subsection{Supplementary tables for RSA}
\label{app:tables-rsa}
In this appendix, we tabulate the lower bound on the success probability, and the associated upper bound on the complexity, in Thm.~\ref{thm:main}, in $\tau$ and~$t$ for selected $\Delta$.

This when factoring large random RSA integers via the reduction from the RSA IFP to the short DLP, and when requiring that the lower bound on the success probability must be met when accounting for a reduction in the probability by a factor $f(\Delta)$ due to the generator~$g$ selected not having sufficiently large order.

We consider $\Delta = 20$ as in~\cite[App.~A.2.1]{ekera-pp}, and $\Delta \in \{9, 10, 13, 17, 21\}$ since these are the smallest~$\Delta$ that allow our prescribed lower bounds on the success probability to be met when accounting for the reduction factor, see Tab.~\ref{tab:bounds-rsa} below:

\begin{table}[h]
  \begin{center}
    \begin{tabular}{r|rr|l|l|r}
      \hline
               &        &     & Success     & Reduction  & Work \\
      $\Delta$ & $\tau$ & $t$ & probability & factor~$f(\Delta)$ & ($\log_2$) \\
      \thickhline
      20 &  4 & 12 & $\ge 0.9$          & $\ge 0.999867$  & $\le 15.6$ \\ 
         &  5 & 12 & $\ge 0.95$         &                 & $\le 16.1$ \\ 
         &  7 & 12 & $\ge 0.99$         &                 & $\le 17.1$ \\ 
         & 11 & 12 & $\ge 0.999$        &                 & $\le 19.1$ \\ 
      \hline
      \hline
       9 &  6 &  6 & $\ge 0.9$          & $\ge 0.9288$    & $\le 11.2$ \\ 
      \hline
      10 &  5 &  7 & $\ge 0.9$          & $\ge 0.95817$   & $\le 11.2$ \\ 
         &  7 &  8 & $\ge 0.95$         &                 & $\le 12.3$ \\ 
      \hline
      13 &  4 &  9 & $\ge 0.9$          & $\ge 0.99200$   & $\le 12.1$ \\ 
         &  5 &  9 & $\ge 0.95$         &                 & $\le 12.6$ \\ 
         &  9 & 10 & $\ge 0.99$         &                 & $\le 14.7$ \\ 
      \hline
      17 &  4 & 10 & $\ge 0.9$          & $\ge 0.999208$  & $\le 14.1$ \\ 
         &  5 & 10 & $\ge 0.95$         &                 & $\le 14.6$ \\ 
         &  7 & 11 & $\ge 0.99$         &                 & $\le 15.6$ \\ 
         & 13 & 10 & $\ge 0.999$        &                 & $\le 18.6$ \\ 
      \hline
      21 &  4 & 12 & $\ge 0.9$          & $\ge 0.9999278$ & $\le 16.1$ \\ 
         &  5 & 12 & $\ge 0.95$         &                 & $\le 16.6$ \\ 
         &  7 & 13 & $\ge 0.99$         &                 & $\le 17.6$ \\ 
         & 11 & 12 & $\ge 0.999$        &                 & $\le 19.6$ \\ 
         & 16 & 12 & $\ge 1 - 10^{-4}$  &                 & $\le 22.1$ \\ 
      \hline
    \end{tabular}
  \end{center}
  \caption{The lower bound on the success probability and associated upper bound on the complexity in Thm.~\ref{thm:main} tabulated in $\tau$ and~$t$ for $\Delta = 20$, and for $\Delta \in \{ 9, 10, 13, 17, 21 \}$, for $c = 1$.
  For a given lower bound on the success probability, the table gives~$t$ and~$\tau$ that minimize the enumeration complexity in group operations as given by $\log_2 \sqrt{\Nspace} + 3$ for~$\Nspace$ as in Thm.~\ref{thm:main}.
  This when requiring that the lower bound on the success probability must be met when accounting for a reduction in the probability by a factor $f(\Delta)$.
  The enumeration complexity is reported in the ``Work'' column.
  The bound in said column is also a bound on the enumeration complexity as given by $\log_2(\frac{4}{3} \sqrt{\pi \Nspace})$ for~$\Nspace$ as in Thm.~\ref{thm:main-dlp}.}
  \label{tab:bounds-rsa}
\end{table}

More specifically, for~$\Nmodulus = pq$ a large random RSA integer, the probability of~$g$ selected uniformly at random from~$\mathbb Z_{\Nmodulus}^*$ having order~$r \ge 2^{m+\ell} + (2^\ell - 1) d$ --- i.e.\ a sufficiently large order --- is lower bounded in~\cite[Lem.~4 in App.~A.2.2]{ekera-pp}, and asymptotically shown to be at least~$f(\Delta)$, see Tab.~\ref{tab:bounds-rsa} for concrete values.

Note that~$p$ and~$q$ are sampled from $[2, x]$ as $x \rightarrow \infty$ in~\cite[Lem.~4]{ekera-pp}, whereas~$p$ and~$q$ are $l$-bit primes in the analysis in~\cite[App.~A.2.2]{ekera-pp}.
As in~\cite{ekera-pp}, we assume that this distinction is not important.
We have verified the validity of this assumption through simulations, by sampling~$10^7$ random RSA integers $\Nmodulus = pq$, and exactly computing the order~$r$ of~$g$ selected uniformly random from~$\mathbb Z_{\Nmodulus}^*$ without explicitly computing~$g$ (see~\cite[Sect.~5.2.3]{ekera-phd-thesis} for a description of how to perform this computation).
Specifically, to sample $\Nmodulus = pq$, we first sample~$p$ and~$q$ independently and uniformly at random from the set of all $l$-bit primes for $l = 1024$.
We then return $\Nmodulus = pq$ if $p \neq q$ and~$pq$ is of length $2l = 2048$ bits, otherwise we try again.

As~$\Delta$ grows larger, so does the reduction factor~$f(\Delta)$.
However, the method used in~\cite[App.~A.2.2]{ekera-pp} to lower bound~$f(\Delta)$ is limited in that the computational complexity grows rapidly in~$\Delta$.
This explains why we do not include success probabilities ${\ge 1 - 10^{-5}}$ in Tab.~\ref{tab:bounds-rsa}.
One option for reaching greater success probabilities is to instead estimate the reduction factor via simulations.
For further details, see~\cite[Tab.~5.9]{ekera-phd-thesis}.

\clearpage

\section{Figures}
\label{app:figures}

In this appendix, we visualize the quantum circuits discussed in Sect.~\ref{sect:quantum-algorithm-implementation}.

{
\newcommand{\boxoffset}{1.30}
\newcommand{\boxwidth}{0.90}
\newcommand{\boxsep}{0.25}

{
\newcommand{\ctrlwidth}{11.05}
\newcommand{\workwidth}{11.05}

\begin{figure}[h!]
  \begin{center}
    \begin{tikzpicture}[scale=0.67, every node/.style={scale=0.67}]

      \foreach \y in {1.0}
      {
        \draw[densely dotted] ({\workwidth - 0.05}, {\y - 0.25}) -- ++ (0.20, 0) -- ++ (0, {2 + 2 * 0.25}) -- ++ (-0.20, 0);
        \draw[densely dotted] (0.05, {\y - 0.25}) -- ++ (-0.20, 0) -- ++ (0, {2 + 2 * 0.25}) -- ++ (0.20, 0);
      }

      \draw (-0.50, 2.0) node[rotate=90] {$\ket{g^0}$};
      \draw ({\workwidth + 0.45}, 2.0) node[rotate=90] {$\nu$ qubits};
      \draw ({\workwidth + 1.00}, 2.0) node[rotate=90] {$\ket{g^a \, x^{-b}}$};

      \draw[-stealth] (0, 3) -- ++ (\workwidth, 0);
      \draw[densely dotted, -stealth] (0, 2) -- ++ (\workwidth, 0);
      \draw[-stealth] (0, 1) -- ++ (\workwidth, 0);

      \foreach \y in {8}
      {
        \draw[-stealth] (0, \y) -- ++ (\ctrlwidth, 0);
        \draw[-stealth, densely dotted] (0, {\y + 1}) -- ++ (\ctrlwidth, 0);
        \draw[-stealth] (0, {\y + 2}) -- ++ (\ctrlwidth, 0);

        \draw[densely dotted] (0.05, {\y - 0.25}) -- ++ (-0.20, 0) -- ++ (0, {2 + 2 * 0.25}) -- ++ (0.20, 0);
        \draw[densely dotted] ({\ctrlwidth - 0.05}, {\y - 0.25}) -- ++ (0.20, 0) -- ++ (0, {2 + 2 * 0.25}) -- ++ (-0.20, 0);

        \draw (-0.50, {\y + 1}) node[rotate=90] {$\ket{0}$};
        \draw ({\ctrlwidth + 0.45}, {\y + 1}) node[rotate=90] {$m + \ell$ qubits};
        \draw ({\ctrlwidth + 1.00}, {\y + 1}) node[rotate=90] {yields~$j$};

        \foreach \t in {0, 2} {
          \draw[fill=white] (0.25, {\y + \t - 0.40}) rectangle ++ (0.80, 0.80) node[pos=0.5] {H};
        }
        \draw[fill=white, densely dotted] (0.25, {\y + 1 - 0.40}) rectangle ++ (0.80, 0.80) node[pos=0.5] {H};

        \draw[fill=white] ({\ctrlwidth - 2.20}, {\y - 0.40}) rectangle ++ (0.80, {2 + 2 * 0.40}) node[pos=0.5, rotate=90] {QFT};

        \foreach \t in {0, 2} {
          \draw[fill=white] ({\ctrlwidth - 1.15}, {\y + \t - 0.40}) rectangle ++ (0.80, 0.80);
        }
        \draw[fill=white, densely dotted] ({\ctrlwidth - 1.15}, {\y + 1 - 0.40}) rectangle ++ (0.80, 0.80);

        \foreach \t in {0, 1, 2} {
          \draw ({\ctrlwidth - 1.15 + 0.62}, {\y + \t - 0.05}) arc (30 : 150 : 0.25);
          \draw ({\ctrlwidth - 1.15 + 0.40}, {\y + \t - 0.09}) -- ++ (0.10, 0.20);
        }

        \begin{scope}
          \draw ({\boxoffset + \boxwidth / 2}, {\y + 2})
            node[circle, fill, inner sep=1pt] {} ++ (0, 0.05) node[above] {$\ket{a_0}$} ++ (0, -0.05)
              -- ++ (0, {3.40 - \y - 2});
          \draw[densely dotted] ({\boxoffset + \boxwidth + \boxsep + \boxwidth / 2}, {\y + 1})
            node[circle,fill,inner sep=1pt] {} ++ (0, 0.05) node[above] {$\ket{a_i}$} ++ (0, -0.05)
              -- ++ (0, {3.40 - \y - 1});
          \draw ({\boxoffset + 2 * (\boxwidth + \boxsep) + \boxwidth / 2}, {\y})
            node[circle,fill,inner sep=1pt] {} ++ (0, 0.05) node[above] {$\ket{a_{m + \ell - 1}}$} ++ (0, -0.05)
              -- ++ (0, {3.40 - \y});

          \draw[fill=white] (\boxoffset, {1 - 0.40}) rectangle ++ (\boxwidth, {2 + 2 * 0.40})
            node[pos=0.5, rotate=90] {$g^{2^0}$};
          \draw[densely dotted, fill=white] ({\boxoffset + \boxwidth + \boxsep}, {1 - 0.40})
            rectangle ++ (\boxwidth, {2 + 2 * 0.40})
              node[pos=0.5, rotate=90] {$g^{2^i}$};
          \draw[fill=white] ({\boxoffset + 2 * (\boxwidth + \boxsep)}, {1 - 0.40})
            rectangle ++ (\boxwidth, {2 + 2 * 0.40})
              node[pos=0.5, rotate=90] {$g^{2^{m + \ell - 1}}$};

          \draw[densely dotted]
            ({\boxoffset - 0.25}, 0.60)
              -- ({\boxoffset - 0.25}, 0.35)
                -- ({\boxoffset + 3 * \boxwidth + 2 * \boxsep + 0.25}, 0.35)
                  node[pos=0.5, below] {$m+\ell$ operations}
                    -- ({\boxoffset + 3 * \boxwidth + 2 * \boxsep + 0.25}, 0.60);
        \end{scope}
      }

      \foreach \y in {4.5}
      {
        \draw[-stealth] (0, \y) -- ++ (\ctrlwidth, 0);
        \draw[-stealth, densely dotted] (0, {\y + 1}) -- ++ (\ctrlwidth, 0);
        \draw[-stealth] (0, {\y + 2}) -- ++ (\ctrlwidth, 0);

        \draw[densely dotted] (0.05, {\y - 0.25}) -- ++ (-0.20, 0) -- ++ (0, {2 + 2 * 0.25}) -- ++ (0.20, 0);
        \draw[densely dotted] ({\ctrlwidth - 0.05}, {\y - 0.25}) -- ++ (0.20, 0) -- ++ (0, {2 + 2 * 0.25}) -- ++ (-0.20, 0);

        \draw (-0.50, {\y + 1}) node[rotate=90] {$\ket{0}$};
        \draw ({\ctrlwidth + 0.45}, {\y + 1}) node[rotate=90] {$\ell$ qubits};
        \draw ({\ctrlwidth + 1.00}, {\y + 1}) node[rotate=90] {yields~$k$};

        \foreach \t in {0, 2} {
          \draw[fill=white] (0.25, {\y + \t - 0.40}) rectangle ++ (0.80, 0.80) node[pos=0.5] {H};
        }
        \draw[fill=white, densely dotted] (0.25, {\y + 1 - 0.40}) rectangle ++ (0.80, 0.80) node[pos=0.5] {H};

        \draw[fill=white] ({\ctrlwidth - 2.20}, {\y - 0.40}) rectangle ++ (0.80, {2 + 2 * 0.40}) node[pos=0.5, rotate=90] {QFT};

        \foreach \t in {0, 2} {
          \draw[fill=white] ({\ctrlwidth - 1.15}, {\y + \t - 0.40}) rectangle ++ (0.80, 0.80);
        }
        \draw[fill=white, densely dotted] ({\ctrlwidth - 1.15}, {\y + 1 - 0.40}) rectangle ++ (0.80, 0.80);

        \foreach \t in {0, 1, 2} {
          \draw ({\ctrlwidth - 1.15 + 0.62}, {\y + \t - 0.05}) arc (30 : 150 : 0.25);
          \draw ({\ctrlwidth - 1.15 + 0.40}, {\y + \t - 0.09}) -- ++ (0.10, 0.20);
        }

        \begin{scope}[shift={(4, 0)}]
          \draw ({\boxoffset + \boxwidth / 2}, {\y + 2})
            node[circle, fill, inner sep=1pt] {} ++ (0, 0.05) node[above] {$\ket{b_0}$} ++ (0, -0.05)
              -- ++ (0, {3.40 - \y - 2});
          \draw[densely dotted] ({\boxoffset + \boxwidth + \boxsep + \boxwidth / 2}, {\y + 1})
            node[circle,fill,inner sep=1pt] {} ++ (0, 0.05) node[above] {$\ket{b_i}$} ++ (0, -0.05)
              -- ++ (0, {3.40 - \y - 1});
          \draw ({\boxoffset + 2 * (\boxwidth + \boxsep) + \boxwidth / 2}, {\y})
            node[circle,fill,inner sep=1pt] {} ++ (0, 0.05) node[above] {$\ket{b_{\ell - 1}}$} ++ (0, -0.05)
              -- ++ (0, {3.40 - \y});

          \draw[fill=white] (\boxoffset, {1 - 0.40}) rectangle ++ (\boxwidth, {2 + 2 * 0.40})
            node[pos=0.5, rotate=90] {$x^{-2^0}$};
          \draw[densely dotted, fill=white] ({\boxoffset + \boxwidth + \boxsep}, {1 - 0.40})
            rectangle ++ (\boxwidth, {2 + 2 * 0.40})
              node[pos=0.5, rotate=90] {$x^{-2^i}$};
          \draw[fill=white] ({\boxoffset + 2 * (\boxwidth + \boxsep)}, {1 - 0.40})
            rectangle ++ (\boxwidth, {2 + 2 * 0.40})
              node[pos=0.5, rotate=90] {$x^{-2^{\ell-1}}$};

          \draw[densely dotted]
            ({\boxoffset - 0.25}, 0.60)
              -- ({\boxoffset - 0.25}, 0.35)
                -- ({\boxoffset + 3 * \boxwidth + 2 * \boxsep + 0.25}, 0.35)
                  node[pos=0.5, below] {$\ell$ operations}
                    -- ({\boxoffset + 3 * \boxwidth + 2 * \boxsep + 0.25}, 0.60);
        \end{scope}
      }
    \end{tikzpicture}
  \end{center}

  \caption{A quantum circuit for inducing the state~\refeq{superposition} and measuring the two control registers yielding~$j$ and~$k$, respectively.
  In this figure, ${a = \sum_{i \, = \, 0}^{m+\ell-1} 2^i a_i}$ and ${b = \sum_{i \, = \, 0}^{\ell-1} 2^i b_i}$ where ${a_i, \, b_i \in \{0, 1\}}$, see Sect.~\ref{sect:quantum-algorithm-implementation}.
  The operations at the bottom are compositions under the group operation by classically pre-computed constant group elements.
  The bottom work register must be of sufficient length~$\nu$ to store a superposition of group elements and to perform the required group operations reversibly.}
  \label{fig:basic-circuit}
\end{figure}
}

{
\newcommand{\ctrlwidtha}{7.25}
\newcommand{\ctrlwidthb}{7.05}
\newcommand{\workwidth}{7.15 + 9.50}

\begin{figure}[h!]
  \begin{center}
    \begin{tikzpicture}[scale=0.67, every node/.style={scale=0.67}]

      \foreach \y in {1.0}
      {
        \draw[densely dotted] ({\workwidth - 0.05}, {\y - 0.25}) -- ++ (0.20, 0) -- ++ (0, {2 + 2 * 0.25}) -- ++ (-0.20, 0);
        \draw[densely dotted] (0.05, {\y - 0.25}) -- ++ (-0.20, 0) -- ++ (0, {2 + 2 * 0.25}) -- ++ (0.20, 0);
      }

      \draw (-0.50, 2.0) node[rotate=90] {$\ket{g^0}$};
      \draw ({\workwidth + 0.45}, 2.0) node[rotate=90] {$\nu$ qubits};
      \draw ({\workwidth + 1.00}, 2.0) node[rotate=90] {$\ket{g^a \, x^{-b}}$};

      \draw[-stealth] (0, 3) -- ++ (\workwidth, 0);
      \draw[densely dotted, -stealth] (0, 2) -- ++ (\workwidth, 0);
      \draw[-stealth] (0, 1) -- ++ (\workwidth, 0);

      \foreach \y in {4.5}
      {
        \draw[-stealth] (0, \y) -- ++ (\ctrlwidtha, 0);
        \draw[-stealth, densely dotted] (0, {\y + 1}) -- ++ (\ctrlwidtha, 0);
        \draw[-stealth] (0, {\y + 2}) -- ++ (\ctrlwidtha, 0);

        \draw[densely dotted] (0.05, {\y - 0.25}) -- ++ (-0.20, 0) -- ++ (0, {2 + 2 * 0.25}) -- ++ (0.20, 0);
        \draw[densely dotted] ({\ctrlwidtha - 0.05}, {\y - 0.25}) -- ++ (0.20, 0) -- ++ (0, {2 + 2 * 0.25}) -- ++ (-0.20, 0);

        \draw (-0.50, {\y + 1}) node[rotate=90] {$\ket{0}$};
        \draw ({\ctrlwidtha + 0.45}, {\y + 1}) node[rotate=90] {$m + \ell$ qubits};
        \draw ({\ctrlwidtha + 1.00}, {\y + 1}) node[rotate=90] {yields~$j$};

        \foreach \t in {0, 2} {
          \draw[fill=white] (0.25, {\y + \t - 0.40}) rectangle ++ (0.80, 0.80) node[pos=0.5] {H};
        }
        \draw[fill=white, densely dotted] (0.25, {\y + 1 - 0.40}) rectangle ++ (0.80, 0.80) node[pos=0.5] {H};

        \draw[fill=white] ({\ctrlwidtha - 2.20}, {\y - 0.40}) rectangle ++ (0.80, {2 + 2 * 0.40}) node[pos=0.5, rotate=90] {QFT};

        \foreach \t in {0, 2} {
          \draw[fill=white] ({\ctrlwidtha - 1.15}, {\y + \t - 0.40}) rectangle ++ (0.80, 0.80);
        }
        \draw[fill=white, densely dotted] ({\ctrlwidtha - 1.15}, {\y + 1 - 0.40}) rectangle ++ (0.80, 0.80);

        \foreach \t in {0, 1, 2} {
          \draw ({\ctrlwidtha - 1.15 + 0.62}, {\y + \t - 0.05}) arc (30 : 150 : 0.25);
          \draw ({\ctrlwidtha - 1.15 + 0.40}, {\y + \t - 0.09}) -- ++ (0.10, 0.20);
        }

        \begin{scope}
          \draw ({\boxoffset + \boxwidth / 2}, {\y + 2})
            node[circle, fill, inner sep=1pt] {} ++ (0, 0.05) node[above] {$\ket{a_0}$} ++ (0, -0.05)
              -- ++ (0, {3.40 - \y - 2});
          \draw[densely dotted] ({\boxoffset + \boxwidth + \boxsep + \boxwidth / 2}, {\y + 1})
            node[circle,fill,inner sep=1pt] {} ++ (0, 0.05) node[above] {$\ket{a_i}$} ++ (0, -0.05)
              -- ++ (0, {3.40 - \y - 1});
          \draw ({\boxoffset + 2 * (\boxwidth + \boxsep) + \boxwidth / 2}, {\y})
            node[circle,fill,inner sep=1pt] {} ++ (0, 0.05) node[above] {$\ket{a_{m + \ell - 1}}$} ++ (0, -0.05)
              -- ++ (0, {3.40 - \y});

          \draw[fill=white] (\boxoffset, {1 - 0.40}) rectangle ++ (\boxwidth, {2 + 2 * 0.40})
            node[pos=0.5, rotate=90] {$g^{2^0}$};
          \draw[densely dotted, fill=white] ({\boxoffset + \boxwidth + \boxsep}, {1 - 0.40})
            rectangle ++ (\boxwidth, {2 + 2 * 0.40})
              node[pos=0.5, rotate=90] {$g^{2^i}$};
          \draw[fill=white] ({\boxoffset + 2 * (\boxwidth + \boxsep)}, {1 - 0.40})
            rectangle ++ (\boxwidth, {2 + 2 * 0.40})
              node[pos=0.5, rotate=90] {$g^{2^{m + \ell - 1}}$};

          \draw[densely dotted]
            ({\boxoffset - 0.25}, 0.60)
              -- ({\boxoffset - 0.25}, 0.35)
                -- ({\boxoffset + 3 * \boxwidth + 2 * \boxsep + 0.25}, 0.35)
                  node[pos=0.5, below] {$m+\ell$ operations}
                    -- ({\boxoffset + 3 * \boxwidth + 2 * \boxsep + 0.25}, 0.60);
        \end{scope}
      }

      \begin{scope}[shift={(9.60, 0)}]
        \foreach \y in {4.5}
        {
          \draw[-stealth] (0, \y) -- ++ (\ctrlwidthb, 0);
          \draw[-stealth, densely dotted] (0, {\y + 1}) -- ++ (\ctrlwidthb, 0);
          \draw[-stealth] (0, {\y + 2}) -- ++ (\ctrlwidthb, 0);

          \draw[densely dotted] (0.05, {\y - 0.25}) -- ++ (-0.20, 0) -- ++ (0, {2 + 2 * 0.25}) -- ++ (0.20, 0);
          \draw[densely dotted] ({\ctrlwidthb - 0.05}, {\y - 0.25}) -- ++ (0.20, 0) -- ++ (0, {2 + 2 * 0.25}) -- ++ (-0.20, 0);

          \draw (-0.50, {\y + 1}) node[rotate=90] {$\ket{0}$};
          \draw ({\ctrlwidthb + 0.45}, {\y + 1}) node[rotate=90] {$\ell$ qubits};
          \draw ({\ctrlwidthb + 1.00}, {\y + 1}) node[rotate=90] {yields~$k$};

          \foreach \t in {0, 2} {
            \draw[fill=white] (0.25, {\y + \t - 0.40}) rectangle ++ (0.80, 0.80) node[pos=0.5] {H};
          }
          \draw[fill=white, densely dotted] (0.25, {\y + 1 - 0.40}) rectangle ++ (0.80, 0.80) node[pos=0.5] {H};

          \draw[fill=white] ({\ctrlwidthb - 2.20}, {\y - 0.40}) rectangle ++ (0.80, {2 + 2 * 0.40}) node[pos=0.5, rotate=90] {QFT};

          \foreach \t in {0, 2} {
            \draw[fill=white] ({\ctrlwidthb - 1.15}, {\y + \t - 0.40}) rectangle ++ (0.80, 0.80);
          }
          \draw[fill=white, densely dotted] ({\ctrlwidthb - 1.15}, {\y + 1 - 0.40}) rectangle ++ (0.80, 0.80);

          \foreach \t in {0, 1, 2} {
            \draw ({\ctrlwidthb - 1.15 + 0.62}, {\y + \t - 0.05}) arc (30 : 150 : 0.25);
            \draw ({\ctrlwidthb - 1.15 + 0.40}, {\y + \t - 0.09}) -- ++ (0.10, 0.20);
          }

          \begin{scope}
            \draw ({\boxoffset + \boxwidth / 2}, {\y + 2})
              node[circle, fill, inner sep=1pt] {} ++ (0, 0.05) node[above] {$\ket{b_0}$} ++ (0, -0.05)
                -- ++ (0, {3.40 - \y - 2});
            \draw[densely dotted] ({\boxoffset + \boxwidth + \boxsep + \boxwidth / 2}, {\y + 1})
              node[circle,fill,inner sep=1pt] {} ++ (0, 0.05) node[above] {$\ket{b_i}$} ++ (0, -0.05)
                -- ++ (0, {3.40 - \y - 1});
            \draw ({\boxoffset + 2 * (\boxwidth + \boxsep) + \boxwidth / 2}, {\y})
              node[circle,fill,inner sep=1pt] {} ++ (0, 0.05) node[above] {$\ket{b_{\ell - 1}}$} ++ (0, -0.05)
                -- ++ (0, {3.40 - \y});

            \draw[fill=white] (\boxoffset, {1 - 0.40}) rectangle ++ (\boxwidth, {2 + 2 * 0.40})
              node[pos=0.5, rotate=90] {$x^{-2^0}$};
            \draw[densely dotted, fill=white] ({\boxoffset + \boxwidth + \boxsep}, {1 - 0.40})
              rectangle ++ (\boxwidth, {2 + 2 * 0.40})
                node[pos=0.5, rotate=90] {$x^{-2^i}$};
            \draw[fill=white] ({\boxoffset + 2 * (\boxwidth + \boxsep)}, {1 - 0.40})
              rectangle ++ (\boxwidth, {2 + 2 * 0.40})
                node[pos=0.5, rotate=90] {$x^{-2^{\ell-1}}$};

            \draw[densely dotted]
              ({\boxoffset - 0.25}, 0.60)
                -- ({\boxoffset - 0.25}, 0.35)
                  -- ({\boxoffset + 3 * \boxwidth + 2 * \boxsep + 0.25}, 0.35)
                    node[pos=0.5, below] {$\ell$ operations}
                      -- ({\boxoffset + 3 * \boxwidth + 2 * \boxsep + 0.25}, 0.60);
          \end{scope}
        }
      \end{scope}
    \end{tikzpicture}
  \end{center}

  \caption{A quantum circuit, equivalent to that in Fig.~\ref{fig:basic-circuit}, for inducing the state~\refeq{superposition} and measuring the control registers yielding~$j$ and~$k$, respectively.
  Simply shifting the QFT and measurements left, and the initialization right, in the first and second control registers, respectively, in the circuit in Fig.~\ref{fig:basic-circuit}, yields this equivalent circuit.
  It first computes~$j$ and then computes~$k$ given~$j$.}
  \label{fig:j-then-k-circuit}
\end{figure}
}
}

\end{document}